\crefname{hypothesis}{Hypothesis}{Hypotheses}
\title{The essential spectrum of periodically stationary pulses in  lumped models
of short-pulse fiber lasers\thanks{Submitted to the editors 6/22/2022.
\funding{This work was funded by the NSF under DMS-2106203.
}}}
\author{Vrushaly Shinglot\thanks{Department of Mathematical Sciences, The University of Texas at Dallas, Richardson, TX 75080, USA   (\email{vrushaly.shinglot@utdallas.edu}, \email{zweck@utdallas.edu}, \url{https://personal.utdallas.edu/\string~zweck/}).}\and John Zweck\footnotemark[2]
}
\renewcommand\norm[1]{\left\lVert#1\right\rVert}
\newtheorem*{rem}{Remark}
\newcommand\StartRed{
  \begingroup           
  \let\EndRed\endgroup 
}
\begin{document}

\maketitle

\begin{abstract} 
In modern short pulse fiber lasers there is significant pulse breathing over each
round trip of the laser loop. Consequently, averaged models cannot be used for
quantitative modeling and design. Instead, lumped models, 
which are 
obtained by concatenating models for the various components of the laser, are
required. Since the pulses in lumped models are periodic rather than stationary,
their linear stability is evaluated with the aid of the monodromy operator
obtained by linearizing the round trip operator about the periodic pulse. 
Conditions are given on the smoothness and decay of the periodic pulse which
ensure that the monodromy operator exists on an appropriate Lebesgue function space. A formula for the essential spectrum of the monodromy operator is 
given which can be used to quantify the growth rate of continuous wave perturbations. This formula is established by showing that the 
essential spectrum of the monodromy operator equals that of an associated
asymptotic  operator. Since the asymptotic monodromy   operator acts as a 
multiplication operator in the Fourier domain, it is possible to derive a formula
for its spectrum. Although the main results are stated for a particular
experimental stretched pulse laser, the analysis shows that they can be readily
adapted to a wide range of lumped laser models. 
\end{abstract}

\begin{keywords}
essential spectrum, evolution semigroups, fiber lasers, monodromy operator,
nonlinear optics 
\end{keywords}

\begin{AMS}
35B10, 35Q56, 37L15, 47D06, 78A60
\end{AMS}

\section{Introduction}\label{sec:Intro}

The purpose of this paper is to establish a formula for the essential spectrum 
of the monodromy operator for a periodic pulse in a lumped model of an experimental 
short pulse fiber laser. The physical importance of the essential spectrum is that it quantifies the growth rate of continuous wave perturbations seeded by quantum mechanical noise in the system. Such 
perturbations can have a major impact on the performance of laser-based systems.
Since the advent of the soliton laser~\cite{mollenauer1984soliton}, researchers have invented 
several generations of short pulse fiber lasers for 
a variety of applications,
including stretched-pulse (dispersion-managed) lasers~\cite{kim2014sub,tamura199377},
similariton lasers \cite{fermann2000self, hartl2007cavity},
and the  Mamyshev oscillator \cite{rochette2008multiwavelength,sidorenko2018self,tarasov2016mode}. 
The pulses in these lasers typically have durations on the order of 100~fs,
peak powers on the order of 1-2~MW, and energy in the 1-50~nJ range.
Applications of femtosecond laser technology include frequency comb generation,
highly accurate measurement of time, frequency, and distance, optical waveform generation, trace-gas sensing, the search for exoplanets, and laser surgery~\cite{diddams2010evolving,fu2018several}.

Traditionally, mathematical modeling and analysis of short pulse lasers has been
based on averaged models,
 in which each of the physical effects that act on the light pulse
is averaged over one round trip of the laser loop to obtain a 
partial differential equation such as the cubic-quintic complex Ginzburg-Landau
equation (CQ-CGLE) or the Haus master equation (see \cite{kutz2006mode} for a review).
This approach has been successfully applied to
soliton lasers for which the pulse  maintains its shape as it propagates
over each round trip. In particular, analytical and computational methods
have been developed to find stationary pulse solutions of these equations
and to analyze their stability using  soliton perturbation theory~\cite{haus1975theory,haus2000mode,kapitula2002stability,kaup1990perturbation,menyuk2016spectral}.
However, as is highlighted in the survey paper of Turitsyn et al.~\cite{turitsyn2012dispersion}, averaged models cannot be used for
the quantitative modeling and design of modern short pulse lasers since
 from one generation of laser to the next there has been
a dramatic increase in the amount by which the pulse varies over each round trip.

Instead, the computational modeling of modern short pulse lasers should 
be based on lumped models obtained by concatenating models for the various
components of the laser.  
Typically short pulse lasers   include an optical fiber amplifier, segments
of  single-mode fiber, a  saturable absorber, a dispersion compensating 
element, a spectral filter, and an output coupler. 
Different laser designs are characterized by different orderings of the components
around the loop and by different sets of physical parameters for each component.
Depending on the modeling goal, the models of the individual components may be
phenomenological or derived from physical laws. 
With a lumped model, the pulse  changes shape as it propagates through the various components of the laser system, returning to the same shape
once per round trip. We call such pulses  \emph{periodically stationary} to distinguish them from the stationary pulses in a soliton laser.

The key goals for the modeling of short pulse lasers are to find parameter regions  in which stationary or periodically stationary solutions exist, determine the stability of these pulses, and within the stability region to optimize the pulse parameters and noise performance for specific applications.

Building on  analytical work of Kaup~\cite{kaup1990perturbation} and Haus~\cite{haus1975theory,haus2000mode},
Menyuk~\cite{menyuk2016spectral}  developed a computational approach to the modeling of stationary pulse solutions of averaged models. 
With this method, stationary pulses are found using a root finding method and their linear stability is determined by computing the spectrum of the
linearization of the governing equation about the pulse. 
(We recall that the spectrum of an operator on a function space
is the union of the essential spectrum and the eigenvalues).
In this context the essential spectrum is elementary to calculate with the aid of 
Weyl's essential spectrum theorem~\cite{kapitula2013spectral}. 
While Menyuk computes the eigenvalues by solving a nonlinear eigenproblem
involving a matrix discretization of the differential operator~\cite{shen2016spectra,wang2014boundary},
 analytical and computational Evans function methods have also been developed
 for the CQ-CGLE and for nonlocal equations
 such as the Haus master equation~\cite{kapitula2004evans,kapitula1998instability,kapitula1998stability}.

Extending this approach to periodically stationary pulses in lumped laser models
is significantly more challenging. In~\cite{shinglot2022continuous}, building on a method of Ambrose and Wilkening
for computing periodic solutions of partial differential equations~\cite{ambrose2012computing},  we developed
 an optimization method to find periodically stationary pulses. Each iteration
of the optimization algorithm involves solving the equations in the model
over one round trip of the laser. In analogy with the Floquet theory of periodic
solutions of ordinary differential equations~\cite{teschl2012ordinary}, we expect that the
linear stability of the resulting periodic pulse will be determined by
the spectrum of the monodromy operator of the linearization of the lumped model
about the pulse. 
Indeed, it should be possible to rigorously establish such a result by generalizing
the Floquet stability theory for parabolic partial
differential equations  developed  by Lunardi~\cite{lunardi2012analytic}. 
In~\cite{shinglot2022continuous} we also presented a formula for the essential spectrum of the monodromy operator and obtained excellent agreement between the formula and a subset of the eigenvalues of a matrix discretization of the operator. This agreement was shown 
for a lumped model of an experimental stretched pulse laser of Kim et al~\cite{kim2014sub}.
The purpose of the current paper is to prove the essential spectrum formula announced 
in~\cite{shinglot2022continuous}.  Our approach builds upon that in Zweck et al.~\cite{zweck2021essential} which dealt with the simpler case of periodically stationary  pulse solutions of the  constant-coefficient CQ-CGLE.

Since we do not yet know how to formulate conditions to ensure that there exists a 
periodically stationary pulse solution to the lumped model, for the results in this paper
we simply assume that the parameters in the model have been chosen so that such a pulse exists. This assumption is reasonable since we have solid numerical evidence for the existence of such pulses~\cite{shinglot2022continuous}. The first main result of the paper, Theorem~\ref{thm:Properties of monodromy operator},
provides conditions on the regularity and decay of the pulse which guarantee
that the monodromy operator exists on an appropriate $L^2$-function space.
Since it is not possible to  calculate the essential spectrum of the monodromy operator directly, we instead compute the essential spectrum of an associated 
asymptotic monodromy operator. The asymptotic operator is defined by taking 
the limit as the spatial variable goes to infinity of the  monodromy operator.
Intuitively, the spectrum of the asymptotic operator provides information about the
growth rate of noise perturbations far from the pulse. The second main result, Theorem~\ref{thm:Theorem for formula of essential spectrum},
is a formula for the essential spectrum of the asymptotic monodromy operator.
This result is established in the Fourier domain, where the asymptotic
operator acts as a multiplication operator on a space of $\mathbb C^2$-valued functions.
The proof relies on a general formula we derive for the spectrum of a
multiplication operator on $L^2(\mathbb R, \mathbb C^2)$. 
The proof of this general formula builds on a similar well known formula in the case
of scalar-valued functions~\cite{engel2001one}, but the case of vector-valued functions
involves some additional technicalities. Finally, in the third main result, Theorem~\ref{thm:EssSpecEqual},
we establish conditions which guarantee  that the essential spectrum of the
monodromy operator equals that of the asymptotic operator. 

To keep the presentation as concrete as possible, rather than attempting to formulate an abstract definition of a general lumped model of a short pulse laser, the theorems are formulated and proved for the Kim laser we modeled in~\cite{shinglot2022continuous}. 
However, based on the discussion at the beginning of this introduction, we 
anticipate that  the results can easily be adapted to most lumped laser models.
In particular, the  formula we derive for the essential spectrum is independent
of the order of the components in the model. Furthermore, provided that the conditions
in the remark following Theorem~\ref{thm:EssSpecEqual} still hold, the 
models for the components can be switched out for
other models, and additional components such as a spectral filter
can be added. Finally, the conditions on the physical parameters we impose
in the main results hold generically.

From a technical point of view there are two  main challenges in extending the results
on the constant coefficient CQ-CGLE in \cite{zweck2021essential}  to lumped laser models. The first
challenge is that 
nonlocality of the gain saturation in the Haus master equation complicates the
proofs of the main theorems.
\StartRed
The physical  implications of the nonlocality of the gain saturation are discussed in Section~\ref{sec:SimResults}. 
\EndRed
 The second challenge is that the monodromy
operator is defined as a composition of solution operators for each component
of the model, which requires adopting a different point of view, especially in the 
proof of the third main result. The combination of these two challenges
ultimately means that the formula for the essential spectrum in the lumped model 
has a different character from the CQ-CGLE case. 

The paper can be outlined as follows. In Section~\ref{sec:Mathematical Model}, we describe the lumped model of the experimental stretched pulse laser of Kim et al.~\cite{kim2014sub} and define the round trip operator, $\mathcal R$. In Section~\ref{sec:Linearization of the Round Trip Operator}, we linearize $\mathcal R$ about a periodically
stationary pulse, $\psi$, to obtain the monodromy operator, $\mathcal M$, and 
the associated asymptotic monodromy operator, $\mathcal M_\infty$.
In Section~\ref{sec:MainResults}, we state the three main theorems of the paper, including formulating the hypotheses on $\psi$  we need to obtain these results.
\StartRed
We also state the formula we derived for the essential spectrum of $\mathcal M$.
In Section~\ref{sec:SimResults} we present some simulation results based on this formula.
\EndRed
In Section~\ref{sec:ExistenceMonodromy}, we 
prove the first main theorem on the existence and regularity properties of $\mathcal M$.
This proof  relies on the concept of an evolution system in semigroup
theory~\cite{pazy2012semigroups} in which linear partial differential equations 
of the form 
$\partial_t \mathbf u = \mathcal L(t) \mathbf u$ are regarded as ordinary differential
equations for trajectories, $t \mapsto \mathbf u(t)$, in an infinite dimensional Banach space. The estimates
in the proof of the technical Lemma~\ref{lem:F is C1} are relegated to Appendix~\ref{AppendixC1}. 
In Section~\ref{sec:Formula for the Spectrum of a Multiplication Operator}, we derive a formula for the spectrum of a general multiplication
operator on $L^2(\mathbb R, \mathbb C^2)$, and in Section~\ref{sec:The Essential Spectrum of the Asymptotic Monodromy Operator} we apply this
formula to calculate the essential spectrum of $\mathcal M_\infty$. In Sections~\ref{sec:Relative compactness for the linearized differential operators in the fiber amplifier} and \ref{sec:analycity}, we prove two theorems  concerning 
the linearized differential operator, $\mathcal L(t)$,
in the fiber amplifier and its asymptotic counterpart,  $\mathcal L_\infty(t)$. The first result states that $\mathcal L(t)$ is a relatively compact perturbation of $\mathcal L_\infty(t)$ and the second result states that the semigroup of the operator $\mathcal L_\infty(t)$ is analytic. Finally, 
these results are used in Section~\ref{sec:EssSpecEqual} to prove the third main
theorem that the essential spectrum of $\mathcal M$ equals the essential spectrum 
of $\mathcal M_\infty$.

\section{Mathematical Model}
\label{sec:Mathematical Model}

In the left panel of Fiigure~\ref{fig:Short pulse laser model}, we show a system diagram 
for the lumped model of the stretched pulse laser of Kim et al.~\cite{kim2014sub}. A light pulse circulates around the loop, passing through a saturable absorber (SA), a segment of single mode fiber (SMF1), a fiber amplifier (FA), a second segment of single mode fiber (SMF2), a dispersion compensation element (DCF), and an output coupler (OC).
\StartRed
After several round trips, the light circulating in the loop forms into a pulse 
that changes shape as it propagates through the different components,
returning to the same shape each time it returns to the same position in the loop.
In the right panel of Figure~\ref{fig:Short pulse laser model} we show the profile
of such a periodically stationary pulse at the output of each component.
The goal of this paper is to study the spectral stability of periodically stationary pulses
in lumped  models of fiber lasers.
\EndRed

\begin{figure}[H]
    \centering
    \includegraphics[width=0.3\linewidth]{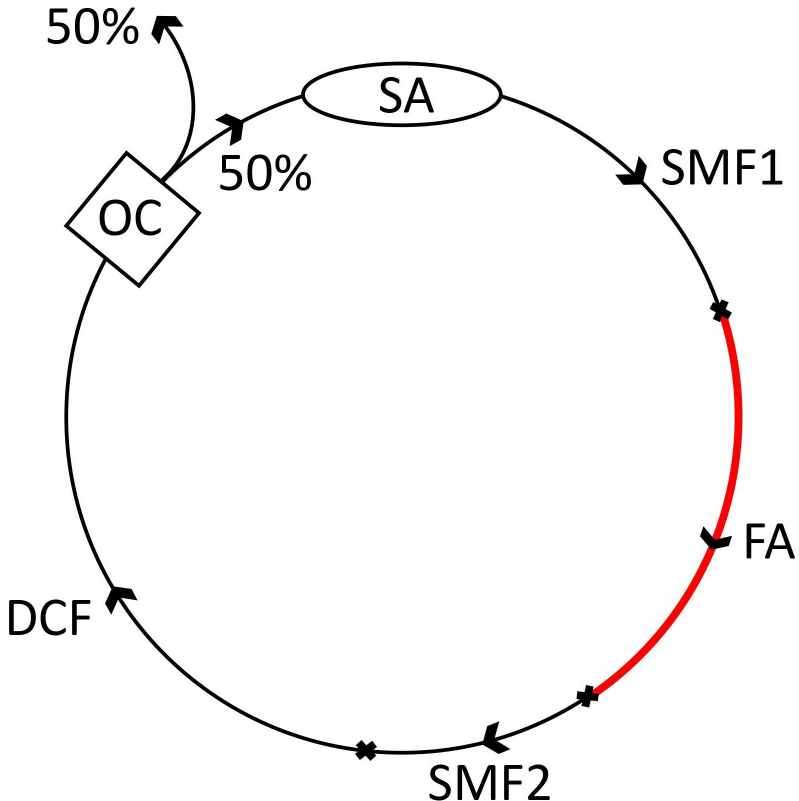}
    \hskip40pt
    \includegraphics[width=0.35\linewidth]{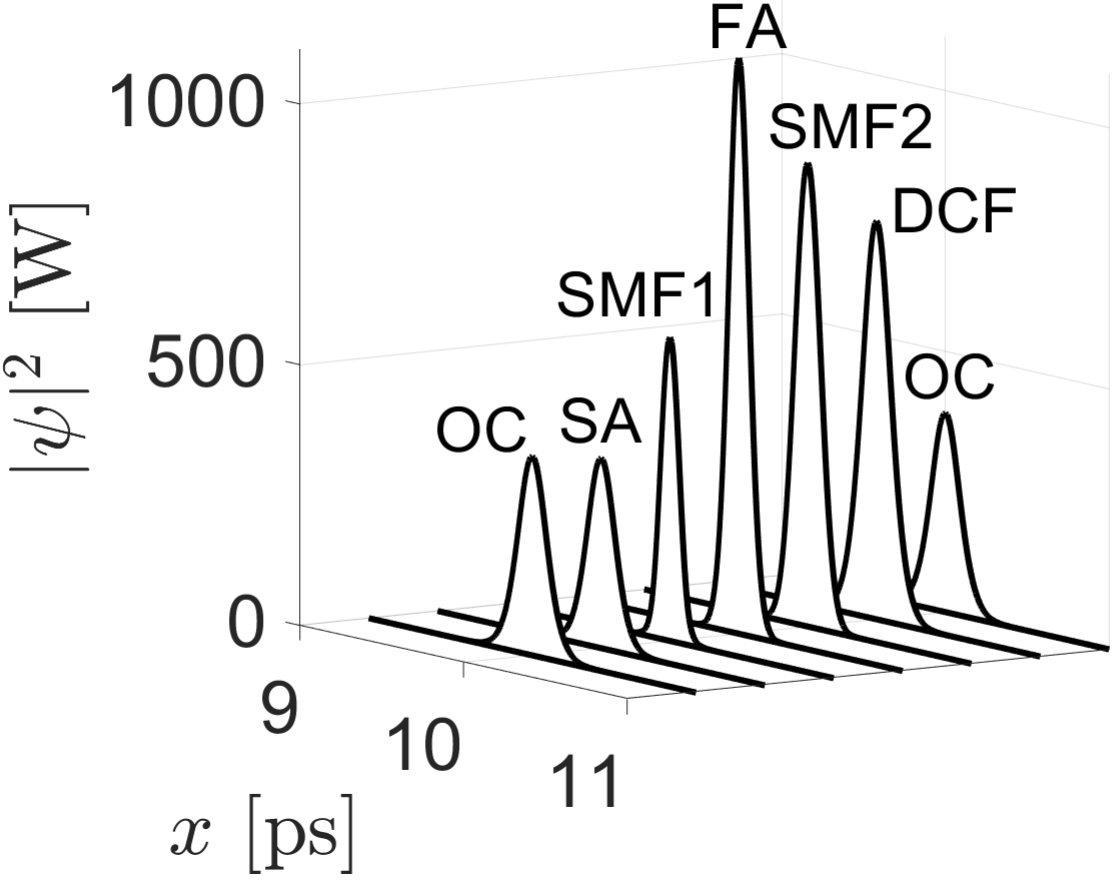}
    \caption{Left: System diagram of the stretched pulse laser of Kim et al.~\cite{kim2014sub}.
    Right: Instantaneous power of the periodically stationary pulse exiting each component of the laser.
    }
    \label{fig:Short pulse laser model}
\end{figure}

At each position in the loop, we model the complex electric field envelope of the light as a function, $\psi = \psi(x)$, 
\StartRed
of a spatial variable, 
$x$, across the pulse. The pulse is normalized so that $|\psi(x)|^2$ is the instantaneous power. 
\EndRed
 We assume that the function, $\psi$, is an element of the Lebesgue space, $L^2(\mathbb{R}, \mathbb{C})$,
of square integrable, complex-valued functions on $\mathbb R$.
We model each component of the laser as a transfer function, 
$\mathcal{P}:  L^2(\mathbb{R}, \mathbb{C})\rightarrow L^2(\mathbb{R}, \mathbb{C})$, so that
\begin{equation}
    \psi_{\text{out}} = \mathcal{P} \psi_{\text{in}},
    \label{eq:Transfer function}
\end{equation}
where $\psi_{\text{in}}$ and $\psi_{\text{out}}$ are the pulses entering and exiting the component. 
\StartRed
The components in the model come in two flavors: discrete and continuous.  
By a discrete component we mean one in which the action of the operator, 
$\mathcal P$, on the input pulse, $\psi_{\text{in}}$,  is  essentially
obtained in one step, for example by the application of an 
explicit formula. In our model of the Kim laser,  the discrete components are the
saturable absorber, dispersion compensation  element,  and  output coupler.
Short-pulse fiber lasers sometimes also  include a spectral filter
that is modeled as a discrete component. 
By a  continuous component, we mean one in which the action 
of the operator, $\mathcal P$, on the input pulse, $\psi_{\text{in}}$,
is modeled by solving a nonlinear wave equation with initial condition,
$\psi_{\text{in}}$, from the input to the output of the component.
In fiber lasers the continuous components are those 
that involve the propagation of a light pulse through 
a segment of nonlinear optical fiber. For our model of the Kim laser these are
the fiber amplifier and the two segments of single mode fiber.
Note that we have chosen to model the dispersion compensation  element as 
a discrete component, since it is modeled by a constant-coefficient 
linear partial differential equation  which has an analytical solution in the Fourier domain.

With a lumped model, the propagation of a light pulse once around the laser loop
is modeled by the round trip operator, $\mathcal{R}: L^2(\mathbb{R}, \mathbb{C}) \rightarrow L^2(\mathbb{R}, \mathbb{C})$, which is given by the 
composition of the transfer functions of all the components. 
For our model of the Kim laser,  the round trip operator is given by 
\EndRed
\begin{equation}
    \mathcal{R} = \mathcal{P}^{\text{OC}} \circ \mathcal{P}^{\text{DCF}}\circ\mathcal{P}^{\text{SMF2}} \circ\mathcal{P}^{\text{FA}}\circ\mathcal{P}^{\text{SMF1}} \circ \mathcal{P}^{\text{SA}}.
    \label{eq:Round trip operator}
\end{equation}
\StartRed
We say that $\psi_0 \in L^2(\mathbb{R}, \mathbb{C})$ 
\EndRed
is a \emph{periodically stationary pulse} if 
\begin{equation}
 \mathcal{R}(\psi_0) = e^{i\theta} \psi_0,
\end{equation}
\StartRed
for some constant phase, $\theta \in [0,2\pi )$. 
For the Kim laser, $\psi_0$ is the pulse at the input to the saturable absorber. For each component, we let $\psi_{\text{in}}$ denote the pulse obtained by propagating the periodically stationary pulse, $\psi_0$, from 
the input to the SA  to the input to that component.
For the continuous fiber 
components we let $\psi$ denote the pulse propagating through that fiber. 
\EndRed
 In~\cite{shinglot2022continuous}, we formulated the problem of discovering periodically stationary pulses as that of finding a zero of the Poincar\'e map functional, 
 \StartRed
 $\mathcal{E}:L^2(\mathbb{R}, \mathbb{C}) \times [0,2\pi ) \rightarrow\mathbb{R}$, 
 \EndRed
 given by
\begin{equation}
    \mathcal{E}(\psi_0, \theta) = \frac{1}{2} \norm{\mathcal{R}(\psi_0) - e^{i\theta} \psi_0}_{L^2(\mathbb{R}, \mathbb{C})}^2.
    \label{eq:Poincare map functional}
\end{equation}
Since $\mathcal{E}\geq 0$, in practice we minimize $\mathcal{E}$ with respect to $\psi_0$ and $\theta$ using a gradient-based iterative optimization method. 
In the right panel of Figure~\ref{fig:Short pulse laser model}, we 
plot the optical power of a periodically stationary pulse obtained using this method.

\StartRed
We now describe the model for the propagation of  a light pulse, $\psi= \psi(t,x)$, through the fiber amplifier. 
Here $t$ denotes position along the fiber, with $0\leq t \leq L_{\text{FA}}$, where $L_{\text{FA}}$ is the length of the fiber
amplifier.  We note that  $t$ is a local evolution variable that is 
only defined within the fiber amplifier. 
Mathematically, we regard   $x$ as the spatial
variable across the pulse. Physically speaking, it is a fast time variable
defined relative to a frame moving at the group
velocity~\cite{yang2010nonlinear}. 
\EndRed
Our model for propagation in the fiber amplifier is based on the Haus master equation~\cite{haus1975theory}, which is a generalization of the nonlinear Schr\'odinger equation that includes gain that saturates at high energy and is of finite bandwidth. Specifically, we model the transfer function, $\mathcal{P}^{\text{FA}}$, of a fiber amplifier of length, $L_{\text{FA}}$, as
 $   \psi_{\text{out}} = \mathcal{P}^{\text{FA}}\psi_{\text{in}}$,
 where $\psi_{\text{out}} = \psi({L}_{\text{FA}},\cdot)$ is  obtained by solving the initial value problem
 \StartRed
\begin{equation}
\begin{aligned}
    \partial_t \psi 
    &= \left[ \frac{ g(\psi)}{2} \left(1+ \frac{1}{\Omega_g^2}\partial_x^2
    \right) 
    - \tfrac{i}{2}\beta_{\text{FA}} \partial_x^2 
    + i\gamma|\psi|^2\right]\,\psi,
    \qquad \text{for } 0\leq t \leq L_{\text{FA}},
    \\
    \psi(0,\cdot) &= \psi_{\text{in}}.
    \label{eq:fiber amplifier}
\end{aligned}
\end{equation}
Here, $g(\psi)$ is the saturable gain given by
\begin{equation}
    g(\psi) {}={} \frac{g_0}{1+E(\psi)/E_{\text{sat}}},
    \label{eq:Saturable gain}
\end{equation}
where $g_{0}$ is the unsaturated gain, $E_{\text{sat}}$ is the saturation energy, and $E(\psi)$ is the pulse energy, which is given by 
\begin{equation}
    E(\psi) = \int_{\mathbb{R}} |\psi(\cdot,x)|^2 dx.
\end{equation}
We note that the energy, and hence the saturable gain, are nonlocal in the spatial variable, $x$, and that they depend on the evolution variable, $t$, since $\psi$ does.
\EndRed
The finite bandwidth of the amplifier is modeled using a Gaussian filter with bandwidth, $\Omega_{g}$. In \eqref{eq:fiber amplifier}, $\beta_{\text{FA}}$ is the chromatic dispersion coefficient and $\gamma$ is the nonlinear Kerr coefficient.

\StartRed
Similarly, we model the transfer function, $\mathcal{P}^{\text{SMF}}$, of a segment of single mode fiber of length, $L_{\text{SMF}}$, as
 $   \psi_{\text{out}} = \mathcal{P}^{\text{SMF}}\psi_{\text{in}}$,
 where $\psi_{\text{out}} = \psi({L}_{\text{SMF}},\cdot)$ is  obtained by solving the initial value problem for the nonlinear Schr\"odinger equation given by
\begin{equation}
\begin{aligned}
    \partial_t \psi 
    &= 
    - \tfrac{i}{2}\beta_{\text{SMF}}  \partial_x^2 \psi
    + i\gamma|\psi|^2 \psi,
    \qquad \text{for } 0\leq t \leq L_{\text{SMF}},
    \\
    \psi(0,\cdot) &= \psi_{\text{in}}.
    \label{eq:SMF_JZ}
\end{aligned}
\end{equation}
We model the dispersion compensation element as $\mathcal P_{\text{DCF}}
= \mathcal F^{-1} \circ \widehat{\mathcal P}^{\text{DCF}} \circ \mathcal F$, where 
$\mathcal F$ is the Fourier transform and 
 \begin{equation}
 \widehat{\psi}_{\text{out}} (\omega) =    (\widehat{\mathcal{P}}^{\text{DCF}} \widehat{\psi}_{\text{in}})(\omega) = 
    \exp \left( i {\omega^2} \beta_{\text{DCF}}/2  \right)
    \widehat{\psi}_{\text{in}}(\omega),
    \label{eq:DCF_JZ}
\end{equation}
with $\widehat{\psi} = \mathcal F (\psi)$.
We observe that \eqref{eq:DCF_JZ} is the solution to the  initial value problem for the
linear equation obtained by setting $\gamma = 0$, $\beta_{\text{SMF}}=
\beta_{\text{DCF}}$ and $L_{\text{SMF}}=1$ in \eqref{eq:SMF_JZ}.

\EndRed

We model the saturable absorber using the fast saturable loss transfer function~\cite{wang2016comparison}, $\mathcal{P}^{\text{SA}}$, given by
\begin{equation}
    \psi_\text{out} = \mathcal{P}^{\text{SA}}(\psi_{\text{in}}) = \left( 1 - \frac{\ell_0}{1+|\psi_\text{in}|^2/ P_{\text{sat}}}\right)\psi_\text{in},\\
    \label{eq:Saturable absorber}
\end{equation}
where $\ell_0$ is the unsaturated loss and $P_{\text{sat}}$ is the saturation power. With this model,  $\psi_\text{out}$ at  $x$  only depends on $\psi_\text{in}$ at the same value of $x$. Finally, we model the transfer function, $\mathcal{P}^{\text{OC}}$, of the output coupler as 
\begin{equation}
    \psi_{\text{out}} = \mathcal{P}^{\text{OC}}\psi_{\text{in}}
    = \ell_{\operatorname{OC}}\,\psi_{\text{in}},
    \label{eq:Output coupler}
\end{equation}
where  $ (\ell_{\operatorname{OC}})^2$ is the power loss at the output coupler.


\section{Linearization of the Round Trip Operator}
\label{sec:Linearization of the Round Trip Operator}

In this section, we derive the 
\StartRed
linearizations, $\mathcal{U}$, about a  pulse 
of each of the  operators, 
$\mathcal{P}$, defined in Section~\ref{sec:Mathematical Model}. 
By the chain rule, 
\EndRed 
the linearization, $\mathcal{M}$, of the round trip operator, $\mathcal{R}$, 
\StartRed
about a periodically stationary pulse, $\psi_0$,
\EndRed
is equal to the composition of the linearized transfer functions, $\mathcal{U}$, i.e., 
\StartRed
\begin{equation}
    \mathcal{M} = \mathcal{U}^{\text{OC}} \circ \mathcal{U}^{\text{DCF}} \circ \mathcal{U}^{\text{SMF2}} \circ \mathcal{U}^{\text{FA}} \circ \mathcal{U}^{\text{SMF1}} \circ \mathcal{U}^{\text{SA}}.
    \label{eq:Linearization of the round trip operator}
\end{equation}
In analogy with the monodromy matrix  in the Floquet theory of periodic solutions of ODE's~\cite{teschl2012ordinary}, we call  $\mathcal M$ the \emph{monodromy operator} of the linearization
of the round trip operator, $\mathcal R$,
 about the periodically stationary pulse, $\psi_0$. 
\EndRed

Because the linearization of the partial differential equations in the model involves the complex conjugate of $\psi$, we reformulate the model as a system of equations for the column vector $\boldsymbol{\psi} = [\Re(\psi), \Im(\psi)]^T \in \mathbb R^2$. 
For example, the transfer function of the fiber amplifier
is reformulated as the operator,
 $\mathcal{P}^{\text{FA}} :  L^2(\mathbb{R}, \mathbb{R}^2) \to  L^2(\mathbb{R}, \mathbb{R}^2)$, given by
$ \boldsymbol{\psi}_{\text{out}} = \mathcal{P}^{\text{FA}}\boldsymbol{\psi}_{\text{in}}$, 
where $\boldsymbol{\psi}_{\text{out}} = \boldsymbol{\psi}(\text{L}_{\text{FA}},\cdot)$ is  obtained by solving the initial value problem
\begin{equation}
\begin{aligned}
    \partial_t \boldsymbol{\psi} 
    &= \left[ \tfrac{ g(\boldsymbol{\psi})}{2} \left(1+ \tfrac{1}{\Omega_g^2}\partial_x^2
    \right) 
    - \tfrac{\beta}{2} \textbf{J} \partial_x^2 
    + \gamma \norm{\boldsymbol{\psi}}^2 \textbf{J} \right]\,\boldsymbol{\psi},
    \\
    \boldsymbol{\psi}(0,\cdot) &= \boldsymbol{\psi}_{\text{in}},
    \label{eq:fiber amplifier new}
\end{aligned}
\end{equation}
where $\textbf{J} = \begin{bmatrix} 0&-1\\1&0 \end{bmatrix}$, and $\norm{\cdot}$ is the standard Euclidean norm on $\mathbb{R}^2$.

The linearized transfer function, $\mathcal{U}^{\text{FA}}:  L^2(\mathbb{R}, \mathbb{R}^2) \to  L^2(\mathbb{R}, \mathbb{R}^2)$, in the fiber amplifier 
is given by $\boldsymbol{u}_{\text{out}} = \mathcal{U}^{\text{FA}}\boldsymbol{u}_{\text{in}}$, 
where $\boldsymbol{u}_{\text{out}} = \boldsymbol{u}({L}_{\text{FA}},\cdot)$ is obtained by solving the linearized initial value problem
\begin{equation}
    \begin{aligned}
    \partial_t \boldsymbol{u} &= \left[g(\boldsymbol{\psi})\textbf{K} + \textbf{L} + \textbf{M}_1(\boldsymbol{\psi}) + \textbf{M}_2(\boldsymbol{\psi})\right] \boldsymbol{u} + \textbf{P}(\boldsymbol{\psi}, \boldsymbol{u}), 
    \qquad \text{for } 0\leq t \leq L_{\text{FA}}
    \\
    \boldsymbol{u}(0,\cdot) &= \boldsymbol{u}_{\text{in}},
    \end{aligned}
    \label{eq:FA linearized equation}
\end{equation}
where
\begin{equation}
    \begin{aligned}
    \textbf{K} &= \tfrac{1}{2}\left(1 + \tfrac{1}{\Omega_{g}^{2}}\partial_x^2\right), 
    \qquad 
    &\textbf{L} &= -\tfrac{\beta}{2}\textbf{J}\partial_{x}^2,\\
    \textbf{M}_1(\boldsymbol{\psi}) &= \gamma\norm{\boldsymbol{\psi}}^2\textbf{J},
    \qquad 
    &\textbf{M}_2(\boldsymbol{\psi}) &= 2\gamma\textbf{J}\boldsymbol{\psi}\boldsymbol{\psi}^T,
    \end{aligned}
    \label{eq:Operators in linearized equation}
\end{equation}
and 
\begin{equation}
\textbf{P}(\boldsymbol{\psi}, \boldsymbol{u}) = -\tfrac{g^{2}(\boldsymbol{\psi})}{g_{0}E_{\text{sat}}}\left[\left(1 + \tfrac{1}{\Omega_{g}^{2}}\partial_x^2\right)\boldsymbol{\psi}\right]\int_{-\infty}^{\infty} \boldsymbol{\psi}^T(x) \boldsymbol{u}(x) dx
\label{eq:Discretization of operator P}
\end{equation}
is a nonlocal operator. 
The non-locality of $\mathbf P$, which arises because the gain saturation depends on the total energy of the pulse, makes the analysis more challenging for the fiber
amplifier than for a segment of single mode fiber.
\StartRed
The linearized transfer function, $\mathcal{U}^{\text{SMF}}$, of a segment of single mode fiber  is obtained by setting $g(\boldsymbol\psi) = 0$  in
\eqref{eq:FA linearized equation} and \eqref{eq:Discretization of operator P}.
\EndRed

The linearized transfer function, $\mathcal{U}^{\text{SA}}$,  for the saturable absorber is given by
\begin{equation}
    \boldsymbol{u}_{\text{out}} = \mathcal{U}^{\text{SA}}(\boldsymbol{\psi}_{\text{in}}) \boldsymbol{u}_{\text{in}} = \left( 1 - \ell(\boldsymbol{\psi}_{\text{in}}) - \frac{2 \ell^2(\boldsymbol{\psi}_{\text{in}})}{\ell_0 P_{\text{sat}}} \boldsymbol{\psi}_{\text{in}} \boldsymbol{\psi}_{\text{in}}^T \right) \boldsymbol{u}_{\text{in}},
    \label{eq:SA linearized transfer function}
\end{equation}
where
\begin{equation}
    \ell(\boldsymbol{\psi}_\text{in}) = \frac{\ell_0}{1+\norm{\boldsymbol{\psi}_{\text{in}}}^2/ P_{\text{sat}}}.
    \label{eq:l(psi) in SA transfer function}
\end{equation}
\StartRed
The remaining components, i.e. dispersion compensation fiber and output coupler, already have linear transfer functions, and so $\mathcal U^{\text{DCF}} = \mathcal P^{\text{DCF}}$ and $\mathcal U^{\text{OC}} = \mathcal P^{\text{OC}}$.
\EndRed

Because eigenvalues and eigenfunctions can be complex valued, we extend the linearized system to act on complex-valued functions, $\boldsymbol{u} \in L^2(\mathbb{R}, \mathbb{C}^2)$, where 
\begin{equation}
    L^2(\mathbb{R}, \mathbb{C}^2) = \{ \boldsymbol{u} = \boldsymbol{v} + i \boldsymbol{w} \,\,: \,\, \boldsymbol{v}, \boldsymbol{w} \in L^2(\mathbb{R}, \mathbb{R}^2) \},
\end{equation}
is the space of $\mathbb{C}^2$-valued functions on $\mathbb{R}$ with the standard Hermitian inner product.
Let $\mathcal T$ be an operator that acts on  $\mathbb{R}^2$-valued functions.
We extend $\mathcal T$ to act on $\mathbb{C}^2$-valued functions by defining
$  \mathcal{T} \boldsymbol{u} = \mathcal{T} \boldsymbol{u}_1 + i \mathcal{T} \boldsymbol{u}_2$.
where $\boldsymbol{u} = \boldsymbol{u}_1 + i \boldsymbol{u}_2$ with $\boldsymbol{u}_1, \boldsymbol{u}_2 \in L^2(\mathbb{R}, \mathbb{R}^2)$. 
Note that the formulae above 
for the action of the differential operators and transfer functions 
on $\mathbb{C}^2$-valued functions, $\boldsymbol{u}$,
are the same as for their action
on $\mathbb{R}^2$-valued functions, since in both cases we only require
$\boldsymbol\psi$ to be  $\mathbb{R}^2$-valued. The only difference is our
interpretation of the function spaces on which they act.

The linear stability of the pulse $\psi$ is determined by the spectrum of the monodromy operator, $\mathcal{M}$, which is the union of the essential spectrum of $\mathcal{M}$ and the eigenvalues of $\mathcal{M}$. In Section~\ref{sec:MainResults}, we show that the essential spectrum of the monodromy operator is equal to the essential spectrum of an associated  \emph{asymptotic monodromy operator}, $\mathcal{M}_{\infty}$, 
which is defined by 
\StartRed
\begin{equation}
    \mathcal{M}_{\infty} = \mathcal{U}_{\infty}^{\text{OC}} \circ \mathcal{U}_{\infty}^{\text{DCF}} \circ \mathcal{U}_{\infty}^{\text{SMF2}} \circ \mathcal{U}_{\infty}^{\text{FA}} \circ \mathcal{U}_{\infty}^{\text{SMF1}} \circ \mathcal{U}_{\infty}^{\text{SA}},
    \label{eq:Asymptotic linearized roundtrip operator}
\end{equation}
\EndRed
where each operator, $\mathcal{U}_{\infty}$, is the $x$-independent operator obtained by taking the limit as $\abs{x} \to \infty$ of the corresponding operator, $\mathcal{U}$. 
\StartRed
In Section~\ref{sec:MainResults}, we will impose conditions on the pulse that ensure that
these limits exist. Under these conditions, each operator 
\EndRed
$\mathcal{U}_{\infty}$ can be obtained by setting $\boldsymbol\psi = 0$ in the corresponding formula for $\mathcal{U}$. We refer to the operators, $\mathcal{U}_{\infty}$, as \emph{asymptotic linearized transfer functions}.

\section{Main Results}
\label{sec:MainResults}

In this section, we first state a theorem that establishes the existence, uniqueness, and regularity properties of the monodromy operator, $\mathcal{M}$, given by \eqref{eq:Linearization of the round trip operator}. 
Essentially the same result also holds for the asymptotic monodromy operator,
 $\mathcal{M}_\infty$, given by \eqref{eq:Asymptotic linearized roundtrip operator}.
Then we provide an explicit formula for the essential spectrum of $\mathcal{M}_\infty$.
The last major result is a theorem stating that 
essential spectrum of $\mathcal{M}$ equals that of $\mathcal{M}_\infty$. 
 
Rigorously proving the existence, uniqueness, and regularity of periodically stationary pulse solutions, ${\psi}$, of the lumped model is challenging. Instead, for the results in this paper, we assume that a periodically stationary pulse, ${\psi}$, exists. This assumption is reasonable since we have strong numerical evidence for the existence of such pulses~\cite{shinglot2022continuous}. We do however need to impose some regularity and decay hypothesis on ${\psi}$ to guarantee the existence of a monodromy operator for ${\psi}$ and to prove the results about the essential spectrum. 
These can be stated as follows.

\begin{hypothesis}
    \label{hyp:Conditions on the solution of SA}
    The pulse, ${\psi}_{\text{in}}$, about which the transfer function,
    \eqref{eq:Saturable absorber},  of the saturable absorber  is linearized has the property that ${\psi}_{\text{in}}$, $\partial_x {\psi}_{\text{in}}$, and $\partial_x^2 {\psi}_{\text{in}}$ are bounded and continuous on $\mathbb{R}$, and ${\psi}_{\text{in}}$ decays exponentially to zero as $x \to \pm \infty$.
\end{hypothesis}

\begin{hypothesis}
\label{hyp:Conditions on the solution of SMF}
\StartRed
The pulse, ${\psi}$, about which equation \eqref{eq:SMF_JZ} for each single mode fiber of length, $L_{\text{SMF}}$, 
\EndRed
is linearized  has the following properties:
\begin{enumerate}
 \item[(a)] ${\psi}$, $\partial_t {\psi}$ are continuous in $t$, uniformly in $x$;
    \item[(b)] For each $t$, the function ${\psi}(t,\cdot) \in L^{\infty}(\mathbb{R}, \mathbb{C})$;
    \item[(c)] For each $t$, the weak derivative $\partial_x{\psi}(t,\cdot) \in L^{\infty}(\mathbb{R}, \mathbb{C})$;
    \item[(d)] There exist constant $r > 0$  so that
    \begin{equation}
        \lim_{{x}\to \pm\infty} e^{r \abs{x}} |{{\psi}(t,x)}| = 0, \quad \text{for all } t
        \StartRed
        \in [0, L_{\text{SMF}}].
        \EndRed
    \end{equation}
\end{enumerate}
\end{hypothesis}

\begin{hypothesis}
    \label{hyp:conditions on the solution of FA}
    In the fiber amplifier 
    \StartRed of length, $L_{\operatorname{FA}}$,  \EndRed
    the pulse, ${\psi}$, about which \eqref{eq:fiber amplifier} is linearized has the same properties as in  Hypothesis~\ref{hyp:Conditions on the solution of SMF},
    in addition to which
    \begin{enumerate}
        \item[(a)] For almost all $x \in \mathbb{R}$, ${\psi}$ is $C^2$ in $t$;
        \item[(b)] For almost all $x \in \mathbb{R}$, $\partial_x^2 {\psi}$, $\partial_t (\partial_x {\psi})$, $\partial_t (\partial_x^2 {\psi})$ are continuous in $t$;
        \item[(c)] There exists $h \in L^2(\mathbb{R}, \mathbb{R}) \cap L^{\infty}(\mathbb{R}, \mathbb{R})$ so that 
        \begin{equation}
            \abs{\partial_t^{(k)} \partial_x^{(\ell)} {\psi}(t,x)} \leq h(x) \quad \text{for } k = 0,1, \; \ell=0,1,2,
        \end{equation}
        and
        \begin{equation}
            \abs{\partial_t^2 {\psi}(t,x)} \leq h(x),
        \end{equation}
        \StartRed
        for all $t\in [0, L_{\operatorname{FA}}]$ 
        \EndRed
        and almost all $x \in \mathbb{R}$.
    \end{enumerate}
\end{hypothesis}

\begin{rem}
Property (c) of Hypothesis~\ref{hyp:conditions on the solution of FA} holds if all the functions $\partial_t^{(k)} \partial_x^{(\ell)} {\psi}$ are bounded  and decay exponentially as in property (d) of Hypothesis~\ref{hyp:Conditions on the solution of SMF}.
\end{rem}

 Let $\mathcal{B}(X)$ denote the space of bounded linear operators on a Banach space, $X$. 
Then we have the following theorem on the existence, unqiueness,
and regularity of the monodromy operator.

\begin{theorem}
\label{thm:Properties of monodromy operator}
\StartRed
Let $\boldsymbol{\psi}_0$ be a 
periodically stationary solution of the lumped laser model, i.e., a solution 
of  \eqref{eq:Round trip operator}. 
\EndRed
Under Hypotheses~\ref{hyp:Conditions on the solution of SA},  \ref{hyp:Conditions on the solution of SMF}, and \ref{hyp:conditions on the solution of FA}, the monodromy operator, $\mathcal{M}$, 
\StartRed
in \eqref{eq:Linearization of the round trip operator}, which is the linearization of  
the round trip operator, $\mathcal{R}$, about $\boldsymbol{\psi}_0$, 
\EndRed
has the following properties:
\begin{enumerate}
    \item[(a)] $\mathcal{M} \in \mathcal{B}(L^2(\mathbb{R}, \mathbb{C}^2))$;
    \item[(b)]  $\mathcal{M}(H^2(\mathbb{R}, \mathbb{C}^2)) \subset H^2(\mathbb{R}, \mathbb{C}^2)$;
    \item[(c)]  For each $\boldsymbol{v} \in H^2(\mathbb{R}, \mathbb{C}^2)$, $\boldsymbol{u} = \mathcal{M}(\boldsymbol{v})$ is the unique solution after one round trip of the linearization of $\mathcal{R}$ about $\boldsymbol{\psi}$.
\end{enumerate}
\end{theorem}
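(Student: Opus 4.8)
The plan is to exploit the compositional structure \eqref{eq:Linearization of the round trip operator}. Since boundedness on $L^2(\mathbb{R},\mathbb{C}^2)$, invariance of $H^2(\mathbb{R},\mathbb{C}^2)$, and the characterization as the unique linearized solution are all preserved under composition, it suffices to establish properties (a)--(c) for each of the six factors $\mathcal{U}$ separately. The three discrete components are immediate. The operator $\mathcal{U}^{\text{SA}}$ of \eqref{eq:SA linearized transfer function} is multiplication by a matrix-valued function assembled from $\psi_{\text{in}}$; by Hypothesis~\ref{hyp:Conditions on the solution of SA} this multiplier together with its first two $x$-derivatives is bounded, which yields both $L^2$-boundedness and $H^2$-invariance after applying the product rule. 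The output coupler $\mathcal{U}^{\text{OC}}=\mathcal{P}^{\text{OC}}$ is multiplication by the scalar $\ell_{\operatorname{OC}}$, and $\mathcal{U}^{\text{DCF}}=\mathcal{P}^{\text{DCF}}$ of \eqref{eq:DCF_JZ} is a Fourier multiplier of unit modulus; both are trivially bounded and preserve the $H^2$-norm, and uniqueness is immediate from the explicit formulas.

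The substance of the proof lies in the continuous fiber components, which I would treat by recasting the linearized problem \eqref{eq:FA linearized equation} as an abstract Cauchy problem $\partial_t\boldsymbol{u}=\mathcal{L}(t)\boldsymbol{u}$ on $L^2(\mathbb{R},\mathbb{C}^2)$ and constructing the associated evolution system $U(t,s)$ in the sense of~\cite{pazy2012semigroups}, so that $\mathcal{U}=U(L,0)$. I would split $\mathcal{L}(t)=A(t)+B(t)$, where $A(t)=(\tfrac{g(\boldsymbol\psi)}{2\Omega_g^2}\mathbf{I}-\tfrac{\beta}{2}\mathbf{J})\partial_x^2$ is the second-order part and $B(t)$ collects the zeroth-order terms $\tfrac{g}{2}\mathbf{I}$, $\mathbf{M}_1(\boldsymbol\psi)$, $\mathbf{M}_2(\boldsymbol\psi)$ together with the nonlocal term $\mathbf{P}(\boldsymbol\psi,\cdot)$. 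The key simplification is that the saturable gain $g(\boldsymbol\psi)$ depends only on the energy $E(\boldsymbol\psi(t,\cdot))$ and is therefore a scalar function of $t$ alone; hence $A(t)$ has $x$-independent coefficients and is diagonalized by the Fourier transform, with symbol having eigenvalues $-\omega^2(\tfrac{g(t)}{2\Omega_g^2}\mp\tfrac{i\beta}{2})$ whose real parts are nonpositive. Consequently each $A(t)$ generates a $C_0$-semigroup on $L^2(\mathbb{R},\mathbb{C}^2)$ with the fixed domain $H^2(\mathbb{R},\mathbb{C}^2)$ --- an analytic semigroup in the fiber amplifier (where $g>0$) and a unitary group in the single mode fiber (where $g\equiv 0$, so that $A$ is in fact $t$-independent). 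The perturbation $B(t)$ is bounded on $L^2$: its multiplication parts are bounded because $\norm{\boldsymbol\psi(t,\cdot)}$ is bounded, and $\mathbf{P}(\boldsymbol\psi,\cdot)$ is a rank-one operator bounded via the Cauchy--Schwarz inequality using $\boldsymbol\psi(t,\cdot)\in L^2$ and $\partial_x^2\boldsymbol\psi(t,\cdot)\in L^2$, both supplied by Hypotheses~\ref{hyp:Conditions on the solution of SMF} and~\ref{hyp:conditions on the solution of FA}. With $\{A(t)\}$ a stable family on the common domain $H^2$ and $B(\cdot)$ bounded, the evolution-system existence and uniqueness theory of~\cite{pazy2012semigroups} then yields a unique $U(t,s)$ that solves the Cauchy problem, is bounded on $L^2$, and maps $H^2$ into $H^2$ for data in the common domain, giving (a)--(c) for each fiber factor.

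The main obstacle is the fiber amplifier, for the two linked reasons flagged in the introduction. First, the nonlocality of $\mathbf{P}$ means $B(t)$ is not a pure multiplication operator, so its boundedness and, more importantly, its regularity in $t$ must be verified by hand; this is precisely where the pointwise decay and continuity bounds of Hypothesis~\ref{hyp:conditions on the solution of FA}(a)--(c) enter, permitting differentiation under the integral sign in the definition of $\mathbf{P}$. Second, invoking the Pazy machinery requires adequate $t$-regularity of $t\mapsto\mathcal{L}(t)$; since the time dependence of $A(t)$ enters only through the scalar $g(t)$, the stability of the constant-domain family follows directly from the explicit Fourier symbol, and the remaining burden is to show that $t\mapsto B(t)$ is $C^1$ in the operator norm. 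That last estimate is exactly the content of the technical Lemma~\ref{lem:F is C1}, whose detailed bounds I would relegate to Appendix~\ref{AppendixC1}; once it is in place the remaining verifications are routine.
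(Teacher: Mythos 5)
Your proposal is correct and follows essentially the same route as the paper: reduce via the composition \eqref{eq:Linearization of the round trip operator} to the individual operators $\mathcal U$, handle the saturable absorber and the other discrete components by direct multiplier estimates as in Proposition~\ref{prop:Existence of evolution family for SA}, and treat the fiber components by writing $\mathcal L(t)$ as the constant-in-$x$ second-order part $\mathbf B(t)\partial_x^2$ plus a bounded (partly nonlocal, rank-one) perturbation and invoking Pazy's evolution-system theorems, with the $C^1$-in-$t$ regularity supplied by Lemma~\ref{lem:F is C1}. The only cosmetic difference is that the paper cites \cite[Theorem 4.1]{zweck2021essential} for the single mode fiber and dispersion compensation cases rather than rederiving them, and it verifies the generator property of $\mathbf B(t)\partial_x^2$ via a resolvent bound and Hille--Yosida rather than your (equivalent) Fourier-symbol computation.
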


\begin{rem}
An analgous result  holds for the asymptotic monodromy operator, $\mathcal{M}_{\infty}$, given by \eqref{eq:Asymptotic linearized roundtrip operator}.
\end{rem}

Next, we recall the definition of the essential spectrum used in the results below.

\begin{definition}
\label{d:Spectra}
Let $\mathcal{A}:D(\mathcal{A}) \subset X \to X$ be a linear operator with domain, 
$D(\mathcal{A})$, on a Banach space, $X$. We suppose that $\mathcal A$ is closed and densely defined. The \emph{resolvent set} of $\mathcal{A}$ is 
\begin{equation}
    \rho(\mathcal{A}) = \{ \lambda \in \mathbb{C} \, : \, \mathcal{A} - \lambda \text{ is invertible and } (\mathcal{A} - \lambda)^{-1} \in \mathcal{B}(X) \},
\end{equation}
and for each $\lambda \in \rho(\mathcal{A})$, the \emph{resolvent operator} is $\mathcal{R}(\lambda : \mathcal{A}) = (\mathcal{A} - \lambda)^{-1}$.
 The \emph{spectrum} of $\mathcal{A}$ is $\sigma(\mathcal{A}) = \mathbb{C} \backslash \rho(\mathcal{A})$. The \emph{point spectrum} of $\mathcal{A}$ is 
\begin{equation}
    \sigma_{\emph{pt}}(\mathcal{A}) = \{ \lambda \in \mathbb{C} \, : \, \emph{Ker}(\mathcal{A} - \lambda) \neq \{0\} \}.
\end{equation}
The \emph{Fredholm point spectrum} of $\mathcal{A}$ is the subset of $\sigma_{\emph{pt}}(\mathcal{A})$ defined by
\begin{equation}
    \sigma_{\emph{pt}}^{\mathcal{F}}(\mathcal{A}) = \{ \lambda \in \mathbb{C} \,\,:\,\,\mathcal{A} - \lambda \text{ is Fredholm, }\emph{Ind}(\mathcal{A} - \lambda) = 0 \text{ and } \emph{Ker}(\mathcal{A} - \lambda) \neq \{0\} \},
\end{equation}
and the \emph{essential spectrum} of $\mathcal{A}$ is $\sigma_{\emph{ess}}(\mathcal{A}) = \sigma(\mathcal{A}) \backslash \sigma_{\emph{pt}}^{\mathcal{F}}(\mathcal{A})$. 
\end{definition}

\StartRed
\begin{rem}
Although $\sigma(\mathcal{A}) = \sigma_{\emph{pt}}(\mathcal{A}) \cup \sigma_{\emph{ess}}(\mathcal{A})$,  this union may not be disjoint.
\end{rem}
\EndRed

\begin{rem}
There are several inequivalent definitions of the essential spectrum of a closed and densely defined operator. Here, we use the same definition of the essential spectrum as in Zweck et al.~\cite{zweck2021essential}.
This definition  is chosen so that $\sigma_{\text{ess}}(\mathcal{A})$ is the largest subset of the spectrum of $\mathcal{A}$ that is invariant under compact 
perturbations~\cite{edmunds2018spectral}.
\end{rem}

Next, we state a formula for the essential spectrum of $\mathcal{M}_{\infty}$. 
This formula involves the
total dispersion in one round trip of the laser system, which for the stretched pulse laser
is given by $\beta_{\text{RT}} = \beta_{\text{SMF1}}\text{L}_{\text{SMF1}} + \beta_{\text{FA}}\text{L}_{\text{FA}} + \beta_{\text{SMF2}}\text{L}_{\text{SMF2}} + \beta_{\text{DCF}}$. 
\StartRed
Here  
$\beta_{\text{FA}}$, $\beta_{\text{SMF}}$, and $\beta_{\text{DCF}}$, are
the dispersion parameters given in \eqref{eq:fiber amplifier}, \eqref{eq:SMF_JZ}, and \eqref{eq:DCF_JZ}, respectively.
\EndRed

\begin{theorem}
\label{thm:Theorem for formula of essential spectrum}
\StartRed
Suppose that the hypotheses of Theorem~\ref{thm:Properties of monodromy operator} hold, and that
\EndRed
 $\ell_0 \neq 1$ and either $(i)$ $\beta_{\text{RT}} \neq 0$ or $(ii)$ $\Omega_g < \infty$ and $\int_{0}^{L_{\operatorname{FA}}} g(\psi(t))dt \neq 0$. Then the essential spectrum of the asymptotic monodromy operator, $\mathcal{M}_{\infty}$, in 
\eqref{eq:Asymptotic linearized roundtrip operator}
is given by 
\begin{equation}
    \sigma_{\emph{ess}}(\mathcal{M}_{\infty}) = \sigma(\mathcal{M}_{\infty}) = \{\, \lambda_{\pm}(\omega) \in \mathbb{C} \,\, | \,\, \omega \in \mathbb{R} \,\} \cup \{ 0 \},
    \label{eq:Formula for essential spectrum new}
\end{equation}
where
\begin{equation}
    \lambda_{\pm}(\omega) = \ell_{\operatorname{OC}}
    (1-\ell_0) \exp\left\{ \frac 12 \left( 1 - \frac{\omega^2}{\Omega_g^2} \right)\int_0^{L_{\operatorname{FA}}} g(\psi(t))dt \right\} \exp\left\{ \pm i\frac{\omega^2}2\beta_{\text{RT}} \right\}.
    \label{eq:Formula for eigenvalues in essential spectrum}
\end{equation}
\end{theorem}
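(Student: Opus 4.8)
The plan is to pass to the Fourier domain, where each asymptotic transfer function becomes multiplication by an $\omega$-dependent $2\times 2$ matrix, and then to invoke the general spectral formula for multiplication operators on $L^2(\mathbb R,\mathbb C^2)$ established in Section~\ref{sec:Formula for the Spectrum of a Multiplication Operator}. First I would compute each factor $\mathcal U_\infty$ in \eqref{eq:Asymptotic linearized roundtrip operator} explicitly. Letting $\boldsymbol\psi(t,x)\to 0$ pointwise kills the local terms $\mathbf M_1,\mathbf M_2$ in \eqref{eq:Operators in linearized equation} and the nonlocal term $\mathbf P$ in \eqref{eq:Discretization of operator P}, while the scalar gain $g(\psi(t))$, being determined by the (nonvanishing) pulse energy, survives as a $t$-dependent scalar. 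Thus the asymptotic fiber-amplifier equation reduces to $\partial_t\boldsymbol u=[g(\psi(t))\mathbf K+\mathbf L]\boldsymbol u$ with $\mathbf K,\mathbf L$ as in \eqref{eq:Operators in linearized equation}. Applying $\mathcal F$ turns $\partial_x^2$ into $-\omega^2$, so for each fixed $\omega$ this is a linear ODE in $\mathbb C^2$ with coefficient matrix $\tfrac{g(\psi(t))}{2}\bigl(1-\omega^2/\Omega_g^2\bigr)I+\tfrac{\beta_{\mathrm{FA}}\omega^2}{2}\mathbf J$. Since $I$ and $\mathbf J$ commute, its solution operator over $[0,L_{\mathrm{FA}}]$ factors as a scalar exponential times $\exp\!\bigl(\tfrac{\beta_{\mathrm{FA}}L_{\mathrm{FA}}\omega^2}{2}\mathbf J\bigr)$.

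Next I would record the remaining factors: the single-mode fibers give $\exp\!\bigl(\tfrac{\beta_{\mathrm{SMF}}L_{\mathrm{SMF}}\omega^2}{2}\mathbf J\bigr)$ (set $g=0$), the dispersion compensation element \eqref{eq:DCF_JZ} gives $\exp\!\bigl(\tfrac{\beta_{\mathrm{DCF}}\omega^2}{2}\mathbf J\bigr)$ once complex multiplication by $e^{i\phi}$ is rewritten as the planar rotation $\exp(\phi\mathbf J)$ under $\mathbb C\cong\mathbb R^2$, and the saturable absorber \eqref{eq:SA linearized transfer function} and output coupler \eqref{eq:Output coupler} contribute the scalars $(1-\ell_0)I$ and $\ell_{\mathrm{OC}}I$. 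The crucial structural observation is that every factor is either a scalar multiple of $I$ or a rotation $\exp(\theta\mathbf J)$, and these all commute. Hence the composite Fourier multiplier $\widehat{\mathcal M}_\infty(\omega)$ is the product of the scalar factors times $\exp\!\bigl(\tfrac{\omega^2}{2}\beta_{\mathrm{RT}}\mathbf J\bigr)$, where the dispersion--length products sum to $\beta_{\mathrm{RT}}$. Because $\exp(\theta\mathbf J)$ has eigenvalues $e^{\pm i\theta}$, reading off the eigenvalues of $\widehat{\mathcal M}_\infty(\omega)$ yields exactly $\lambda_\pm(\omega)$ in \eqref{eq:Formula for eigenvalues in essential spectrum}.

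With the multiplier in hand, I would apply the formula of Section~\ref{sec:Formula for the Spectrum of a Multiplication Operator}: since $\widehat{\mathcal M}_\infty(\omega)$ is normal (a scalar times a rotation) and depends continuously on $\omega$, its spectrum as a multiplication operator is the closure of the union over $\omega\in\mathbb R$ of the eigenvalue set $\{\lambda_+(\omega),\lambda_-(\omega)\}$. It then remains to identify this closure with the right-hand side of \eqref{eq:Formula for essential spectrum new} and to show that the entire spectrum is essential. The hypothesis $\ell_0\neq 1$ keeps the scalar prefactor nonzero, so each $\lambda_\pm(\omega)$ is nonzero at finite $\omega$; the appended point $\{0\}$ enters as the limit of $\lambda_\pm(\omega)$ as $|\omega|\to\infty$ under case (ii), where the real Gaussian factor $\exp\!\bigl(\tfrac12(1-\omega^2/\Omega_g^2)\int_0^{L_{\mathrm{FA}}}g(\psi(t))\,dt\bigr)$ forces $|\lambda_\pm(\omega)|\to 0$. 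The equality $\sigma_{\mathrm{ess}}=\sigma$ follows by checking that $\widehat{\mathcal M}_\infty(\cdot)-\lambda$ fails to be boundedly invertible only through approximate eigenvectors rather than genuine $L^2$ eigenfunctions: conditions (i) or (ii) guarantee that $\lambda_\pm$ is nonconstant (strictly monotone modulus in case (ii), nontrivial phase in case (i)), so no value is attained on a set of positive $\omega$-measure and there are no eigenfunctions; every spectral point is therefore approximate and lies in $\sigma_{\mathrm{ess}}$.

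I expect the main obstacle to be two-fold. First, rigorously justifying the fiber-amplifier reduction: one must verify that $\boldsymbol\psi\to 0$ pointwise genuinely annihilates $\mathbf M_1,\mathbf M_2,\mathbf P$ in the operator-norm limit while the nonlocal gain persists, and that the $t$-dependent ODE solution operator really is the claimed exponential — this is where the nonlocality of the Haus gain makes the analysis more delicate than for a constant-coefficient equation. Second, and more conceptually, applying the vector-valued multiplication-operator formula and disentangling $\sigma_{\mathrm{ess}}$ from $\sigma$: the commuting-rotation structure makes the eigenvalues transparent, but one must still treat the closure (the appended $\{0\}$) and the degenerate configurations excluded by the hypotheses with care, since these are precisely the places where the clean formula \eqref{eq:Formula for essential spectrum new} could otherwise fail.
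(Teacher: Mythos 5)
Your proposal is correct and follows essentially the same route as the paper's proof: conjugate by the Fourier transform, identify each asymptotic linearized transfer function as multiplication by a scalar multiple of a rotation $\exp(\theta \mathbf J)$ (correctly noting that the nonlocal gain $g(\psi(t))$ survives the $\boldsymbol\psi \to 0$ limit while $\mathbf M_1$, $\mathbf M_2$, $\mathbf P$ vanish), compose the commuting factors to obtain a multiplier with eigenvalues $\lambda_\pm(\omega)$, and apply the vector-valued multiplication-operator results of Section~\ref{sec:Formula for the Spectrum of a Multiplication Operator} together with the measure-zero-level-set criterion for the point spectrum to conclude $\sigma_{\mathrm{ess}}(\mathcal M_\infty) = \sigma(\mathcal M_\infty)$. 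The only deviations are cosmetic: you phrase the spectrum as the closure of the pointwise eigenvalue sets of a normal multiplier where the paper invokes Theorem~\ref{thm:Formula for spectrum of MQ} directly, and your anticipated obstacle about operator-norm convergence of $\mathcal L(t)$ to $\mathcal L_\infty(t)$ does not arise because the paper simply \emph{defines} the asymptotic operators by setting $\boldsymbol\psi = 0$ pointwise (that convergence question belongs to Theorem~\ref{thm:EssSpecEqual}, not this one).
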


\begin{rem}
\Cref{eq:Formula for eigenvalues in essential spectrum} can be readily adapted to other lumped fiber 
laser models, provided that
formulae can be found for the  Fourier transforms of all the asymptotic linearized
transfer functions, $\mathcal U_\infty$, in the model. In particular, the formula
is independent of the order in which the components are arranged around the loop.\end{rem}

\StartRed
To prove that the essential spectrum of $\mathcal M$ equals that of $\mathcal M_\infty$
we require that the  linearization of the equation modeling 
the single mode fiber segments (SMF1  and SMF2) generates an  analytic semigroup. 
To do so, we  add an additional spectral filtering term to the 
nonlinear Schr\"odinger equation, so that light propagation in these fibers is modeled by
\EndRed
\begin{equation}\label{eq:NLSFilter}
\partial_t\psi = -\frac{i}{2} \beta \partial_x^2\psi + i\gamma |\psi|^2\psi + \epsilon \partial_x^2\psi,
\end{equation}
where the parameter, $\epsilon$, is  required to be positive, but can be arbitrarily small. 
\StartRed
Provided that $\epsilon>0$, the semigroup for the linearized equation is analytic (see Section~\ref{sec:analycity}).
\EndRed
In the frequency domain the additional term corresponds to
\begin{equation}
\partial_t\widehat\psi(\omega) = - \epsilon \omega^2 \widehat\psi(\omega),
\end{equation}
which models a frequency-dependent loss. The addition of this term is physically reasonable since the loss in optical fiber is wavelength dependent with a minimum at about 1550~nm~\cite{agrawal2007nonlinear}.

\begin{theorem}\label{thm:EssSpecEqual}
Suppose that the hypotheses of Theorem~\ref{thm:Properties of monodromy operator} hold,
and  that in the fiber amplifier
$0< \Omega_g < \infty$ and $(g_0,\beta)\neq(0,0)$. 
Furthermore, suppose that the single mode fiber segments are modeled using
\eqref{eq:NLSFilter} with $\epsilon >0$. 
Then the essential spectrum of the monodromy operator,
$\mathcal M$, in \eqref{eq:Linearization of the round trip operator}
is given by
\begin{equation}\label{eq:EssSpecsEqual}
\sigma_{\rm{ess}}(\mathcal M) \,\,=\,\, \sigma_{\rm{ess}}(\mathcal M_\infty).
\end{equation}
\end{theorem}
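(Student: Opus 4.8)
The plan is to deduce the equality of essential spectra from the single fact that $\mathcal M - \mathcal M_\infty$ is a \emph{compact} operator on $L^2(\mathbb R,\mathbb C^2)$. By the remark following Definition~\ref{d:Spectra}, the essential spectrum used here is the largest part of the spectrum invariant under compact perturbations, so once compactness of $\mathcal M-\mathcal M_\infty$ is established, \eqref{eq:EssSpecsEqual} follows at once, \emph{including} the point $\lambda=0$. Because $\mathcal M$ and $\mathcal M_\infty$ are \emph{compositions} rather than sums (the second of the two challenges flagged in the introduction), I would first expand the difference by the telescoping identity
\[
\mathcal M - \mathcal M_\infty = \sum_{c} \Big(\textstyle\prod_{\text{after }c}\mathcal U\Big)\,\big(\mathcal U^{c}-\mathcal U_\infty^{c}\big)\,\Big(\textstyle\prod_{\text{before }c}\mathcal U_\infty\Big),
\]
the sum running over the six components $c$ in the order of \eqref{eq:Linearization of the round trip operator}. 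Since Theorem~\ref{thm:Properties of monodromy operator} guarantees that every factor is bounded, and bounded operators times a compact operator are compact, it suffices to show that each single-component difference $\mathcal U^{c}-\mathcal U_\infty^{c}$ contributes a compact term.

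Two components are immediate: the output coupler and the dispersion compensation element are already $x$-independent, so $\mathcal U^{\text{OC}}-\mathcal U_\infty^{\text{OC}}=0$ and $\mathcal U^{\text{DCF}}-\mathcal U_\infty^{\text{DCF}}=0$. For the three fiber components (SMF1, SMF2, FA) I would write the difference of solution operators through the variation-of-parameters (Duhamel) formula for evolution systems,
\[
\mathcal U^{c}-\mathcal U_\infty^{c}=\int_0^{L_c}\mathcal U^{c}(L_c,s)\,\big[\mathcal L(s)-\mathcal L_\infty(s)\big]\,\mathcal U_\infty^{c}(s,0)\,ds .
\]
Here the asymptotic evolution on the right maps $L^2\to H^2$ because its generator generates an \emph{analytic} semigroup (the result proved in Section~\ref{sec:analycity}, which is exactly why the filtering conditions $0<\Omega_g<\infty$ and $\epsilon>0$ are imposed), while $\mathcal L(s)-\mathcal L_\infty(s)$ consists of the multiplication terms $\mathbf M_1,\mathbf M_2$ and the nonlocal term $\mathbf P$ of \eqref{eq:FA linearized equation}, which by Section~\ref{sec:Relative compactness for the linearized differential operators in the fiber amplifier} is \emph{relatively compact}, i.e.\ compact as a map $H^2\to L^2$. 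The left factor is bounded on $L^2$, so the integrand is compact and norm-continuous in $s$, whence the Bochner integral is compact.

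The genuinely delicate term is the saturable absorber, which I expect to be the main obstacle. By \eqref{eq:SA linearized transfer function}, $\mathcal U^{\text{SA}}-\mathcal U_\infty^{\text{SA}}$ is simply multiplication by a matrix-valued function $A(x)$ that decays exponentially as $\abs x\to\infty$ (Hypothesis~\ref{hyp:Conditions on the solution of SA}). A nonzero multiplication operator on $L^2(\mathbb R)$ is never compact, so this term cannot be handled in isolation. The resolution is that in the telescoping sum it always appears pre-composed with the adjacent fiber operator $\mathcal U^{\text{SMF1}}$, and I would exploit the factorization $\mathcal U^{\text{SMF1}}=\mathcal U_\infty^{\text{SMF1}}+\mathcal K$ with $\mathcal K$ compact (the SMF case of the previous paragraph). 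The key point is that $\mathcal U_\infty^{\text{SMF1}}$ is a Fourier multiplier whose symbol vanishes as $\abs\omega\to\infty$ precisely because $\epsilon>0$ renders \eqref{eq:NLSFilter} parabolic. Thus $\mathcal U^{\text{SMF1}}\big(\mathcal U^{\text{SA}}-\mathcal U_\infty^{\text{SA}}\big)$ is, up to the compact term $\mathcal K M_A$, an operator of the form $b(D)\,a(x)$ in which both the Fourier symbol $b$ and the spatial factor $a=A$ vanish at infinity; such operators are compact on $L^2$. This is the step where the added spectral-filtering term is indispensable, and it is the analytic mechanism that tames the nonsmoothing saturable absorber.

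Summing the compact contributions shows $\mathcal M-\mathcal M_\infty$ is compact, and the compact-perturbation invariance of $\sigma_{\text{ess}}$ yields \eqref{eq:EssSpecsEqual}. I would finally remark that the same conclusion can be reached independently of the ordering of components around the loop by combining this argument with the cyclic identity $\sigma_{\text{ess}}(\mathcal A\mathcal B)\setminus\{0\}=\sigma_{\text{ess}}(\mathcal B\mathcal A)\setminus\{0\}$, which is consistent with the order-independence asserted in the remark following Theorem~\ref{thm:Theorem for formula of essential spectrum}.
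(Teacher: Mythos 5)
Your proposal is correct in substance and rests on the same overarching mechanism as the paper---exhibit $\mathcal M - \mathcal M_\infty$ as a compact operator and invoke Weyl's theorem---but the two delicate steps are executed by genuinely different routes. (Your telescoping sum is just the unrolled form of the paper's induction around the loop, so that difference is cosmetic.) For the fiber segments, the paper uses the variation-of-parameters identity \eqref{eq:CompactPerturbationVofP} with the \emph{asymptotic} evolution on the left; since the smoothing factor then sits on the wrong side of the multiplication operator $\mathbf M$, the paper must pass to adjoints in Lemma~\ref{lemma:IntegralCompact}, introducing backward-time adjoint evolution families and checking that Theorems~\ref{thm:EssSpecRCP} and~\ref{FiberAmpAnalyticSemigroupThm} persist for the adjoint operators. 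Your mirrored Duhamel formula, with $\mathcal U^c_\infty(s,0)$ on the right, places the $L^2\to H^2$ smoothing directly before $\mathbf M$ and avoids adjoints entirely; the price is that you must justify differentiability of $\mathcal U^c(L_c,s)$ in the \emph{initial} time $s$ (a technical point comparable to the paper's citation of \cite[Lemma 5.1]{zweck2021essential}), and your blanket claim that the integrand is compact fails at $s=0$, where $\mathcal U^c_\infty(0,0)=\mathcal I$; since $\mathbf M(s)$ is uniformly bounded on $L^2$, this is repaired by writing the integral as the norm limit of integrals over $[\delta,L_c]$---a glitch the paper's own proof shares at its endpoint. For the saturable absorber, the paper proves Theorem~\ref{thm:CompactPerturbationFSA}, that $\mathcal B(t_+,t_-)\circ\mathcal U_\infty(t_-,0)$ is compact, via the Kolmogorov--Riesz Theorem~\ref{KRCompactnessThm}, exploiting smoothing \emph{before} the absorber; note this tacitly requires the base point of the loop to be chosen so that fibers precede the SA, whereas in the ordering of \eqref{eq:Linearization of the round trip operator} the SA comes first and $\mathcal U_\infty(t_-,0)=\mathcal I$. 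You instead exploit the fiber \emph{after} the absorber, reducing to $\mathcal U^{\text{SMF1}}_\infty\circ M_A$ plus a compact error and invoking the classical fact that $b(D)\,a(x)$ is compact on $L^2$ when both the symbol $b$ and the multiplier $a$ vanish at infinity (valid entrywise for the matrix symbols here, with $|b(\omega)|\sim e^{-\epsilon\omega^2 L_{\text{SMF1}}}$ supplied by $\epsilon>0$ and $a=A$ exponentially decaying by Hypothesis~\ref{hyp:Conditions on the solution of SA}). This is more elementary than Kolmogorov--Riesz and meshes with the stated component ordering without any cyclic shift of base point, while the paper's version is what makes the argument indifferent to whether a smoothing fiber follows the absorber; your closing appeal to $\sigma_{\text{ess}}(\mathcal A\mathcal B)\setminus\{0\}=\sigma_{\text{ess}}(\mathcal B\mathcal A)\setminus\{0\}$ correctly restores that order-independence to your variant.
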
 

\begin{rem}
For simplicity we state and prove this theorem for the lumped  model of the stretched pulse laser discussed in Section~\ref{sec:Mathematical Model}.
However, \eqref{eq:EssSpecsEqual} also holds for a wide range of lumped models of fiber lasers. Specifically, as we will see in the proof, in addition to the hypotheses made about the fiber segments, we just require that the linearizations, $\mathcal U$ and $\mathcal U_\infty$, of the transfer operators of the input-output devices in the model satisfy
\begin{equation}\label{eq:DiscOpsBounded}
\mathcal U, \mathcal U_\infty \in \mathcal B( L^2(\mathbb R, \mathbb C^2) ) \cap
\mathcal B( H^2(\mathbb R, \mathbb C^2) ),
\end{equation}
and that an analogue of Theorem~\ref{thm:CompactPerturbationFSA} below holds for each of them.
\end{rem}

\StartRed

\section{Simulation Results}\label{sec:SimResults}

In this section we use  formula \eqref{eq:Formula for eigenvalues in essential spectrum} for the essential spectrum  to provide
some insights into the roles that the saturable absorber and the saturation of the gain in the fiber amplifier play in stabilizing the
periodically  stationary pulse circulating in the laser. 
Further details  can be found in~\cite{shinglot2022continuous}.

Although we are not modeling it here, in addition to its role in pulse amplification,
the fiber amplifier adds spontaneous emission noise  to the system~\cite{giles1991modeling}, which---among other effects such as random
timing and phase shifts of the pulse---manifests itself as a random superposition 
of continuous wave perturbations. 
If the essential spectrum, $\sigma_{\operatorname{ess}}(\mathcal M)$, lies inside the unit disc in $\mathbb C$, then these continuous wave perturbations decay, which helps ensure pulse  stability.

From \eqref{eq:Saturable gain} we see that the  gain in the fiber amplifier simply depends on the pulse energy. Consequently, each round trip
the noise entering the fiber amplifier experiences the same gain as does the pulse.  
Furthermore, as the pulse propagates through the fiber amplifier,
spontaneous emmission noise that  is proportional to the 
gain is added to the system. The saturation of the gain  therefore plays a critical role in stabilizing the system, since the gain decreases as the pulse energy increases.

On the other hand, with the model we use for the saturable absorber 
the response is instantaneous, and is given by
\begin{equation}
    \psi_\text{out}(x) = \left( 1 - \frac{\ell_0}{1+|\psi_\text{in}(x)|^2/ P_{\text{sat}}}\right)\psi_\text{in}(x),\\
    \label{eq:SA_JZ}
\end{equation}
so that the value of the output at $x$ only depends on the input  at that $x$.
Therefore,  far from the pulse, where $\psi_\text{in} \approx 0$,
the loss is $\ell_0$, whereas in the center of the pulse the loss saturates and is less than $\ell_0$. 
Because the loss saturates at high power, the system can operate so that 
the gain in the fiber amplifier and the loss
in the saturable absorber balance for the pulse, while simultaneously  loss exceeds gain  far from the pulse. Consequently, noise far from the pulse can be suppressed relative to the peak power of the pulse. 
The larger $\ell_0$ is and/or the smaller $P_{\text{sat}}$ is in \eqref{eq:SA_JZ}, the more the saturable absorber suppresses noise far from the pulse, and the more stable the pulse is to
noise perturbations. 
Already in the 1975, Haus~\cite{haus1975theory} identified the need for a saturable
absorber to suppress the growth of continuous waves,
while balancing gain and loss for the pulse. 
Formula~\eqref{eq:Formula for eigenvalues in essential spectrum} for the essential spectrum of the monodromy operator quantifies this effect for the first time in a lumped model of a fiber laser.

To ensure that a continuous wave perturbation with frequency $\omega$
does not grow, we require that  
$|\lambda_\pm(\omega)| \leq 1$,
which, because of the Gaussian factor 
in~\eqref{eq:Formula for eigenvalues in essential spectrum}, 
 holds for all $\omega$ provided that
\begin{equation}\label{eq:GvL}
(\ell_{\operatorname{OC}})^2( 1 - \ell_0)^2G_{\text{Tot}}^{\text{FA}}
\leq 1,
\qquad\text{where  } 
G_{\text{Tot}}^{\text{FA}} = 
\exp\left\{ \int_0^{L_{\text{FA}}} g(\psi(t))\, dt\right\},
\end{equation}
is approximately equal to the energy gain in the fiber amplifier. 
That is, far from the pulse the loss experienced by continuous waves
must exceed the gain.
Although \eqref{eq:GvL} looks very simple,
 the essential spectrum  can depend in a complex way on the interplay between all the system parameters, since they all influence the shape of the pulse and hence the total gain, $G_{\text{Tot}}^{\text{FA}}$, in the fiber amplifier.

\begin{figure}[htb]
    \centering
    \includegraphics[width=0.335\linewidth]{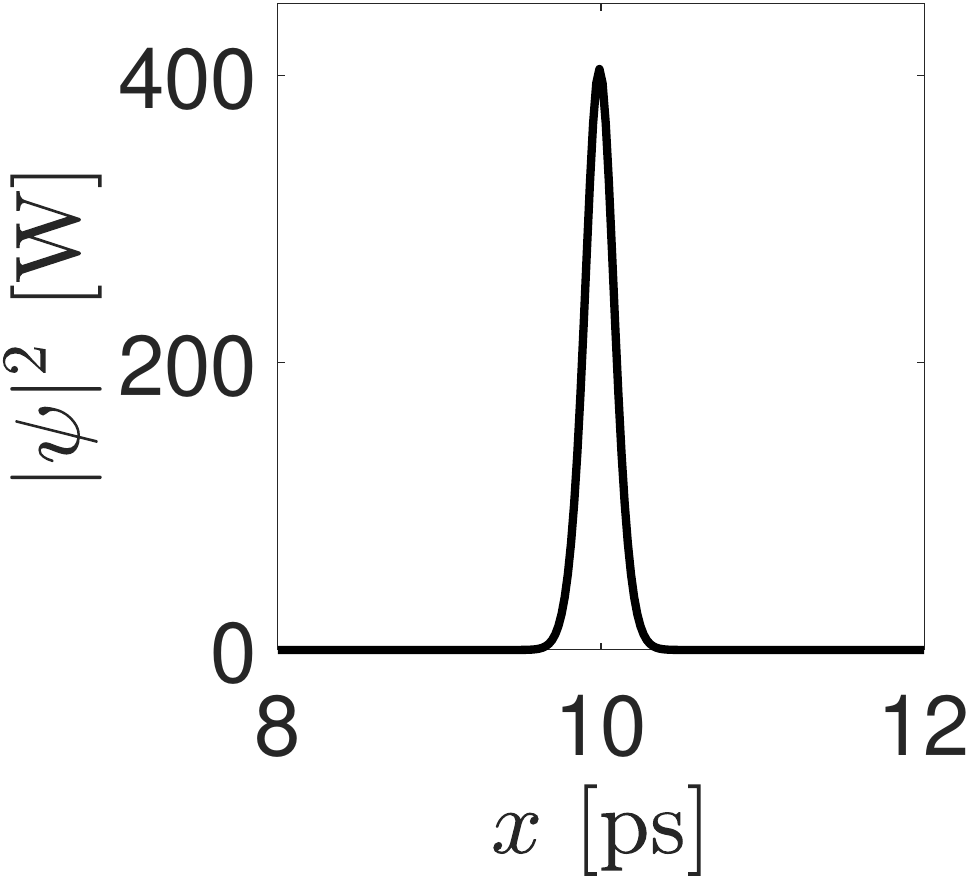}
    \includegraphics[width=0.29\linewidth]{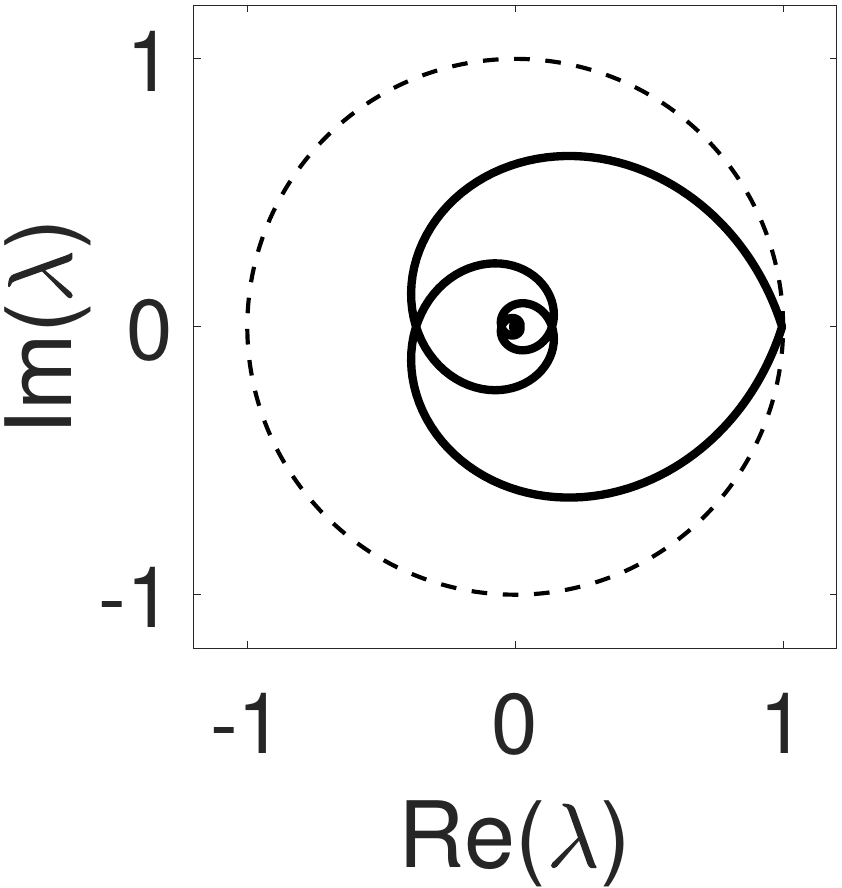}
    \includegraphics[width=0.31\linewidth]{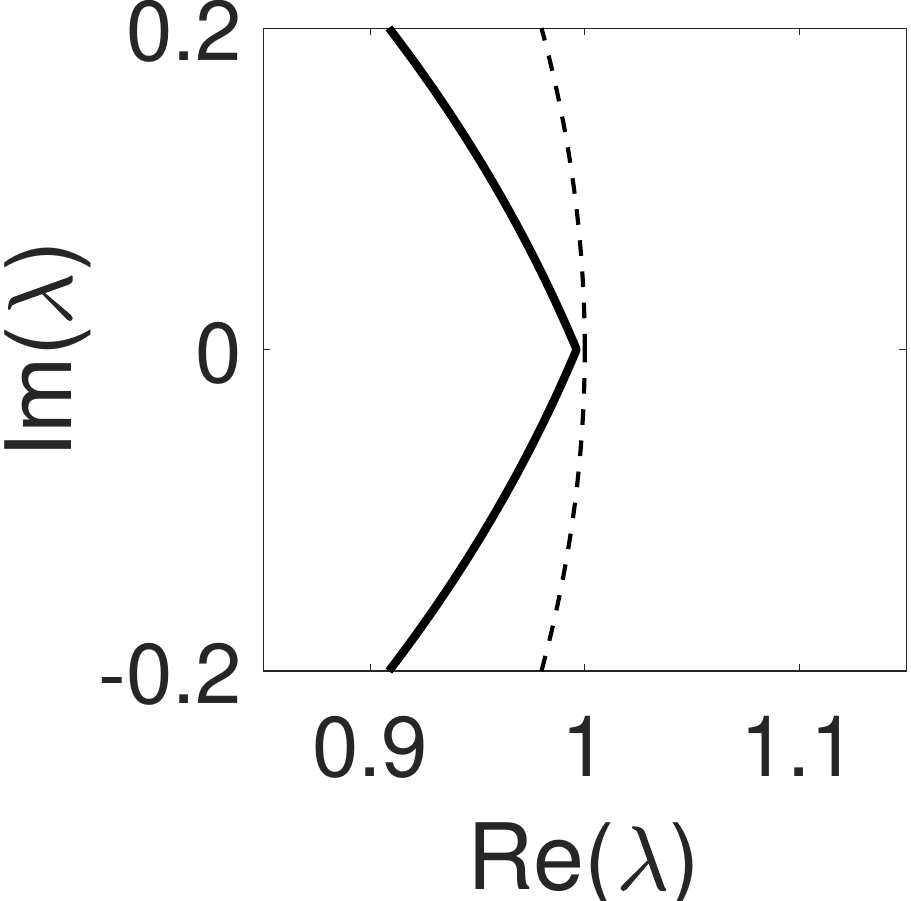}\\
       \includegraphics[width=0.335\linewidth]{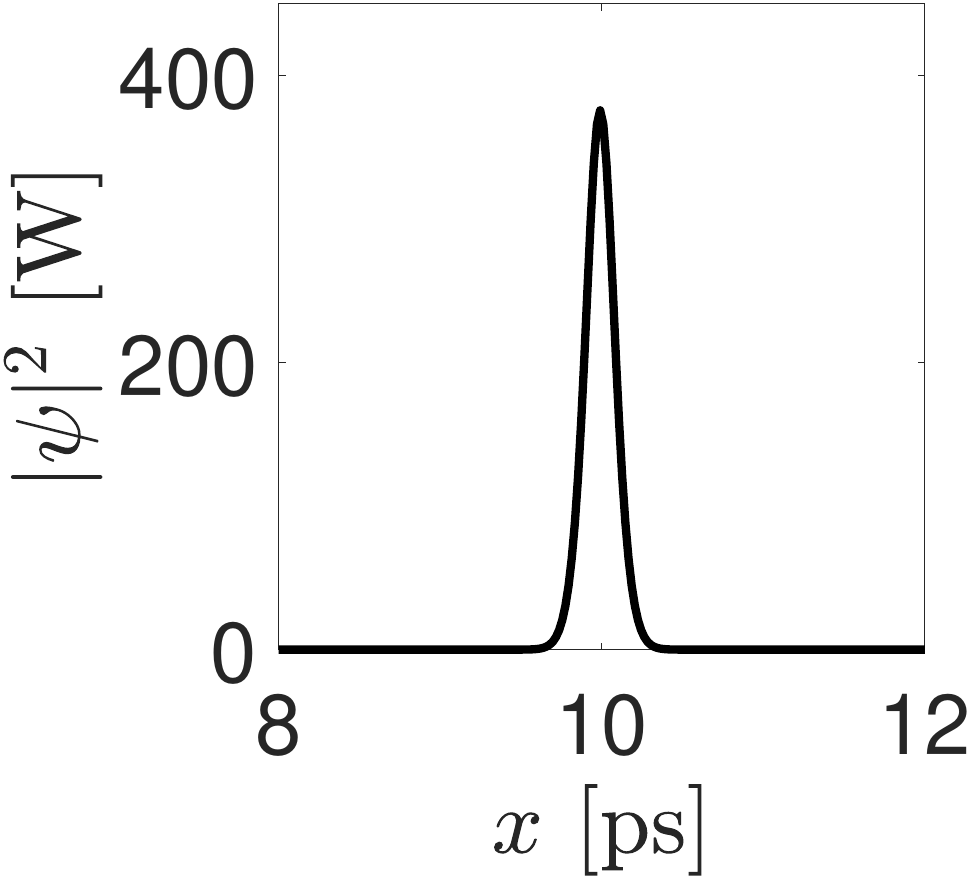}
    \includegraphics[width=0.29\linewidth]{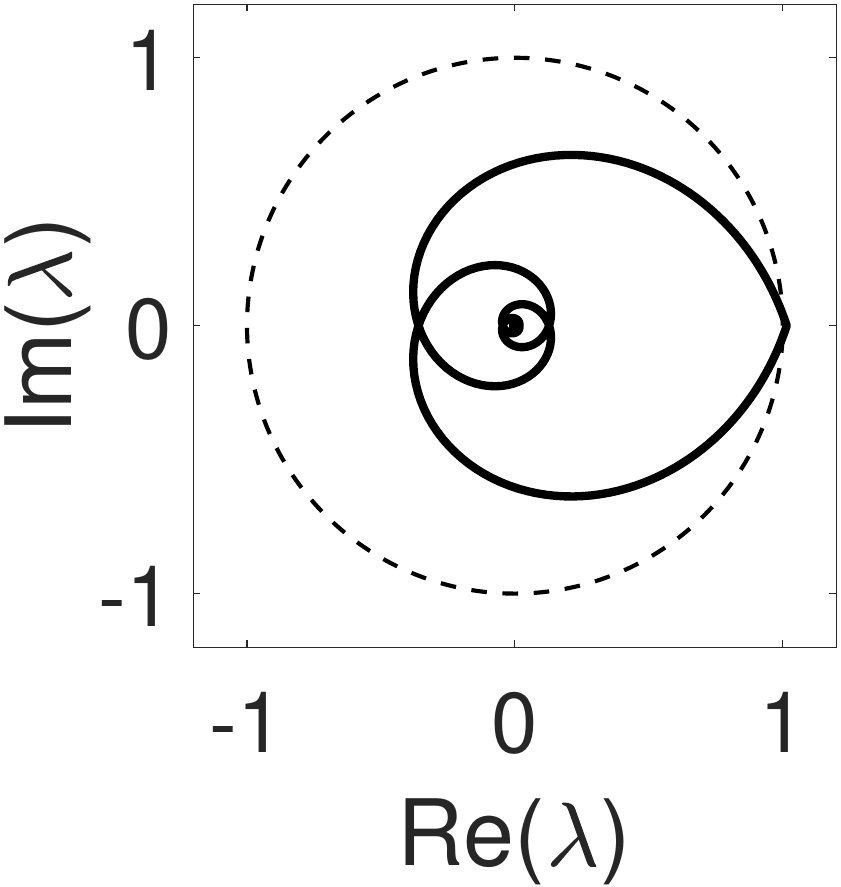}
    \includegraphics[width=0.31\linewidth]{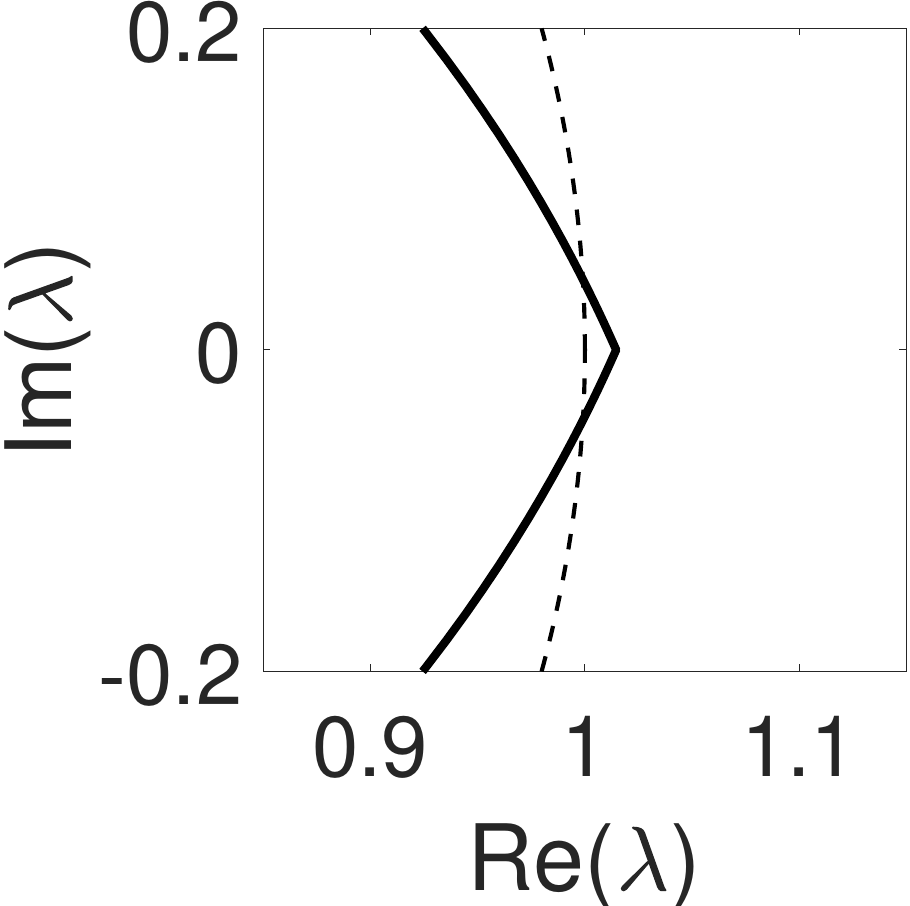}
    \caption{\textbf{Top row:} {\bf Left:} Periodically stationary pulse for $P_{\operatorname{sat}} = 200$~W. {\bf Center and right:} Essential spectrum, $\sigma_{\operatorname{ess}}(\mathcal{M})$, of the monodromy operator associated 
    with the pulse on the left.    \textbf{Bottom row:} Corresponding results for $P_{\text{sat}} = 1000$~W. In both cases, $\ell_0=0.05$.
    }
    \label{fig:PSPulsesEssSpec}
\end{figure}

\begin{figure}[htb]
    \centering
       \includegraphics[width=0.45\linewidth]{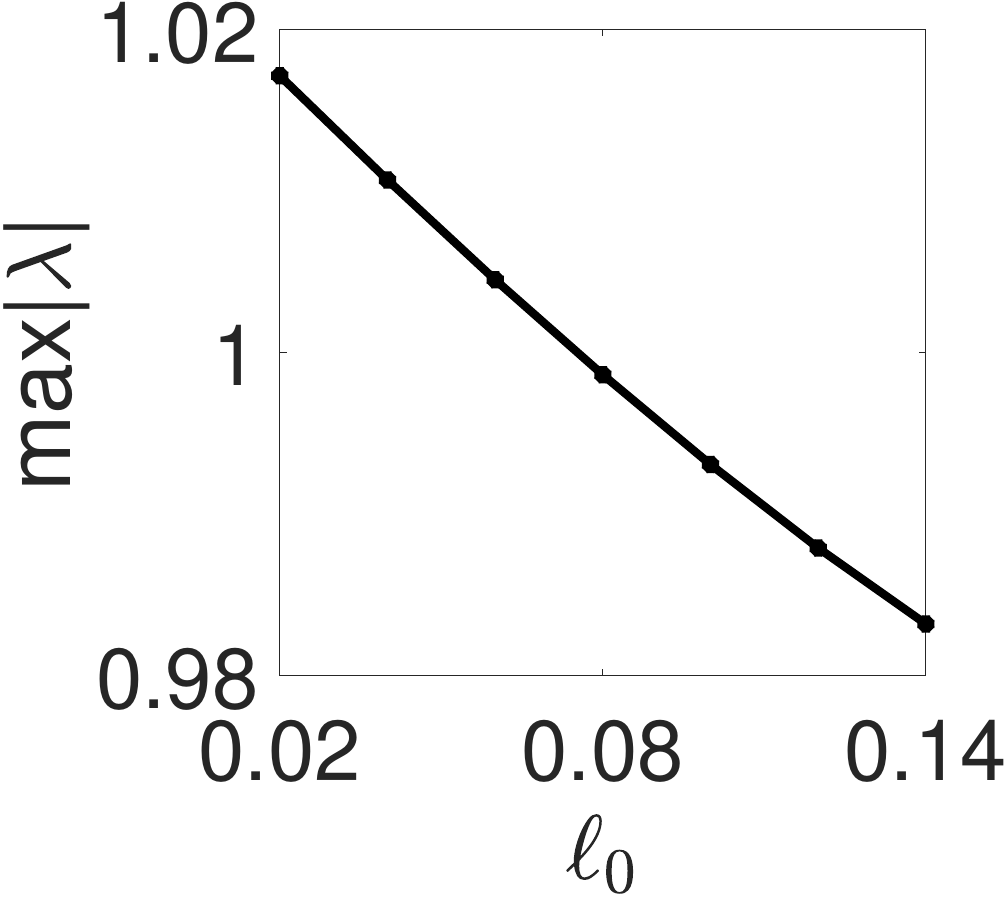}
          \hskip20pt
       \includegraphics[width=0.45\linewidth]{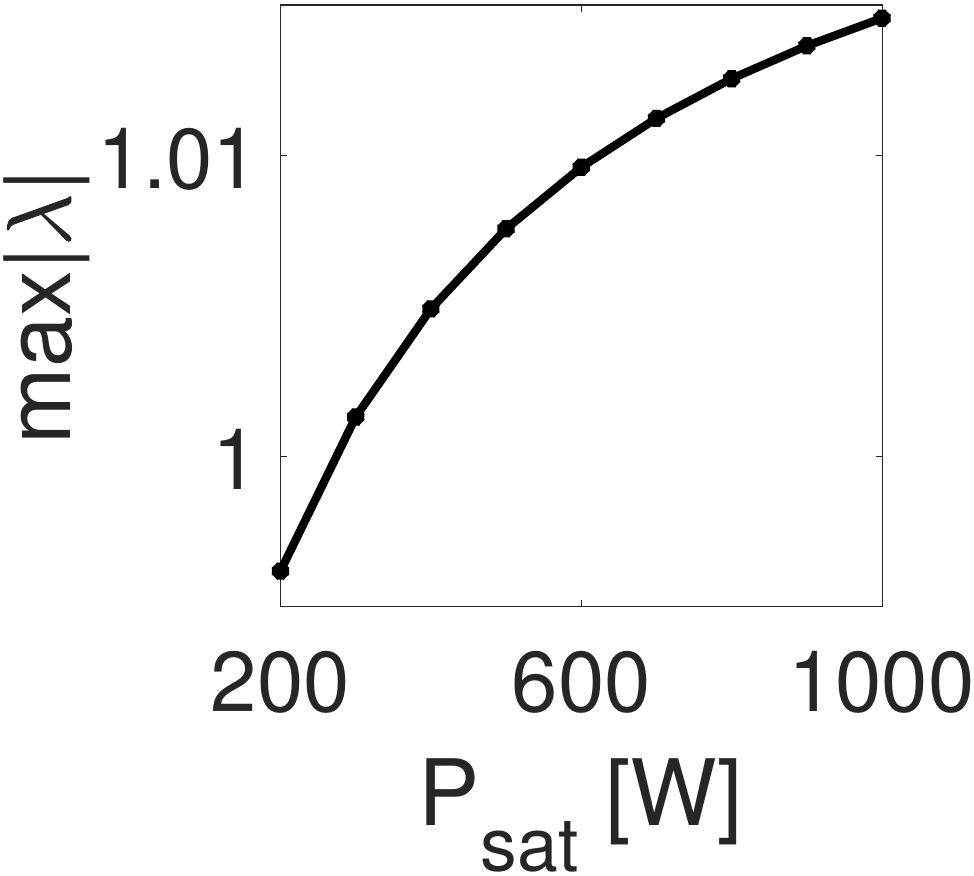}
        \caption{\textbf{Left:} A plot of the maximum real eigenvalue, $\operatorname{max} |\lambda|$, vs. $\ell_0$ when $P_{\text{sat}}=500$~W. \textbf{Right:}
        Corresponding plot in which   $P_{\text{sat}}$ is varied when $\ell_0=0.05$.
     }
         \label{fig:MaxLambda}
 \end{figure}   
 
For the simulation results we present here, we chose the parameters 
in the model to be similar to those in the experimental stretched pulse laser of Kim~\cite{kim2014sub}. The parameters for the saturable absorber are given 
below. The saturable absorber is followed by a segment of single mode fiber, SMF1, 
modeled by \eqref{eq:SMF_JZ}, with $\gamma=2\times10^{-3}$~(Wm)$^{-1}$,
$\beta_{\operatorname{SMF1}}=10$~kfs$^2$/m, (1~kfs$^2$ = $10^{-27}$~s$^2$), and $L_{\operatorname{SMF1}}=0.32$~m,
a fiber amplifier,  modeled by \eqref{eq:fiber amplifier},  with 
$g_0=6$m$^{-1}$, $E_{\text{sat}}=200$~pJ, $\Omega_g=50$~THz, $\gamma=4.4\times10^{-3}$~(Wm)$^{-1}$,
$\beta_{\operatorname{FA}}=25$~kfs$^2$/m, and $L_{\operatorname{FA}}=0.22$~m,
a second segment of single mode fiber, SMF2, with the same parameters as SMF1, but with $L_{\operatorname{SMF2}}=0.11$~m, a dispersion compensation element with $\beta_{\operatorname{DCF}}=-1~$kfs$^2$,
and a 50\% output coupler, modeled by \eqref{eq:Output coupler} 
with $\ell_{\operatorname{OC}}=\sqrt{0.5}$.

In the top row of Fig.~\ref{fig:PSPulsesEssSpec}, we show the results of simulations
performed when $P_{\text{sat}}=200$~W and $\ell_0=0.05$. 
The pulse, $\psi_0$, in the left panel was obtained by numerically minimizing the 
$L^2$-error between $\mathcal R (\psi_0)$ and $e^{i\theta} \psi_0$, over all
possible choices of $\theta$~\cite{shinglot2022continuous}.
In the center panel we plot the essential spectrum for the pulse
in the left panel. 
We observe that $\sigma_{\operatorname{ess}}(\mathcal M)$ consists of a pair of
counter-rotating spirals whose amplitudes rapidly decay to zero.
Since the peak power of the pulse entering the saturable absorber is comparable
to $P_{\text{sat}}$, the saturation of the loss is significant, which helps to stabilize
the pulse.
In the bottom row of Fig.~\ref{fig:PSPulsesEssSpec}, we show the corresponding results with $P_{\text{sat}}=1000$~W. In this case the saturation of the loss is
much weaker, and as we see in the far right panel, there is a range of 
low frequencies, $\omega$, for which $ |\lambda\pm( \omega)| > 1$ and
 continuous wave perturbations grow.

 In the left panel of Fig.~\ref{fig:MaxLambda}, we plot the largest value of 
 $| \lambda |$ as a function of  $\ell_0$ when $P_{\text{sat}}=500$~W. Since this value 
remains outside the unit circle as $\ell_0$ increases from 0.02 to 0.06, the pulse
is unstable over this range. It is only once the unsaturated gain is sufficiently large
that condition~\eqref{eq:GvL}
 holds and the essential spectrum is stable. 
Similarly, in the right panel, we show the largest value of 
 $| \lambda |$  as a function of $P_{\text{sat}}$ when $\ell_0=0.05$. 
Here, the pulse is unstable for $P_{\text{sat}}>300$~W, 
since then the saturation effect is too weak to ensure that the loss
experienced by the noise is sufficiently greater than that experienced by the 
pulse.

\EndRed

\section{Existence of the monodromy operator}\label{sec:ExistenceMonodromy}
To prove Theorem~\ref{thm:Properties of monodromy operator} we use the fact that the monodromy operator, $\mathcal{M}$, is the composition of the linearized transfer functions, $\mathcal{U}$, of each component of the laser. Therefore, we just need to establish the result for each of the operators, $\mathcal{U}$. 
For the single mode fiber segments and the dispersion compensation element, the result is a special case of 
 the corresponding result for the CQ-CGL equation 
given in Zweck et al. \cite[Theorem 4.1]{zweck2021essential}.
For the fast saturable absorber  and the fiber amplifier, the results are given in
Proposition~\ref{prop:Existence of evolution family for SA} and  Theorem~\ref{thm:Existence of evolution family of linearized FA}  below.

If $X$ is a Banach space, we let $\norm{\, \cdot \,}_X$ denote the norm on $X$. When the context is clear, we sometimes omit the subscript $X$ and simply write $\norm{\, \cdot \,}$.

\begin{proposition}
\label{prop:Existence of evolution family for SA}
Suppose that Hypothesis~\ref{hyp:Conditions on the solution of SA} holds. Then the transfer function, $\mathcal{U}^{\operatorname{SA}}$, given by \eqref{eq:SA linearized transfer function}
satisfies  the first two conclusions of Theorem~\ref{thm:Properties of monodromy operator}. \end{proposition}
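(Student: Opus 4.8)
The plan is to exploit the fact that, unlike the fiber components, the saturable absorber acts by pointwise matrix multiplication rather than by solving a differential equation. Writing $A(x) = \bigl(1-\ell(\boldsymbol{\psi}_{\text{in}}(x))\bigr) I - \tfrac{2\ell^2(\boldsymbol{\psi}_{\text{in}}(x))}{\ell_0 P_{\text{sat}}}\,\boldsymbol{\psi}_{\text{in}}(x)\boldsymbol{\psi}_{\text{in}}(x)^T$, where $I$ is the $2\times 2$ identity, equation~\eqref{eq:SA linearized transfer function} says that $\mathcal{U}^{\text{SA}}$ is the multiplication operator $\boldsymbol{u}\mapsto A\boldsymbol{u}$ on $\mathbb{C}^2$-valued functions. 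For such an operator two standard facts reduce everything to pointwise estimates on the multiplier $A$: it is bounded on $L^2(\mathbb{R},\mathbb{C}^2)$ precisely when $A\in L^\infty$, with operator norm $\sup_x\|A(x)\|_{\text{op}}$, and it maps $H^2(\mathbb{R},\mathbb{C}^2)$ into itself whenever $A\in W^{2,\infty}$, since by the Leibniz rule $\partial_x^2(A\boldsymbol{u}) = (\partial_x^2 A)\boldsymbol{u} + 2(\partial_x A)(\partial_x\boldsymbol{u}) + A\,\partial_x^2\boldsymbol{u}$ lies in $L^2$ as soon as $A,\partial_x A,\partial_x^2 A$ are bounded and $\boldsymbol{u}\in H^2$. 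Thus conclusion (a) of Theorem~\ref{thm:Properties of monodromy operator} follows from $A\in L^\infty$ and conclusion (b) from $A\in W^{2,\infty}$.

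For conclusion (a), I would bound $\|A(x)\|_{\text{op}}$ directly. From~\eqref{eq:l(psi) in SA transfer function} the scalar factor satisfies $0<\ell(\boldsymbol{\psi}_{\text{in}})\le\ell_0$, so the first term contributes at most $|1-\ell|\le\max(1,\ell_0)$ to the operator norm. The second term is a scalar multiple of the rank-one matrix $\boldsymbol{\psi}_{\text{in}}\boldsymbol{\psi}_{\text{in}}^T$, whose operator norm equals $\|\boldsymbol{\psi}_{\text{in}}\|^2$; since Hypothesis~\ref{hyp:Conditions on the solution of SA} gives $\boldsymbol{\psi}_{\text{in}}\in L^\infty$ and $\ell^2\le\ell_0^2$, this term is bounded by $\tfrac{2\ell_0}{P_{\text{sat}}}\|\boldsymbol{\psi}_{\text{in}}\|_\infty^2$. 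Adding these shows $A\in L^\infty$ and hence gives (a).

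For conclusion (b), I would verify that $A$ is $C^2$ with bounded first and second derivatives by differentiating through the two building blocks $\ell$ and $\boldsymbol{\psi}_{\text{in}}\boldsymbol{\psi}_{\text{in}}^T$. Writing $\ell = f(s)$ with $s=\|\boldsymbol{\psi}_{\text{in}}\|^2$ and $f(s)=\ell_0/(1+s/P_{\text{sat}})$, the function $f$ is smooth with $f,f',f''$ bounded on $[0,\infty)$, while Hypothesis~\ref{hyp:Conditions on the solution of SA} makes $\boldsymbol{\psi}_{\text{in}}$ a $C^2$ map with $\boldsymbol{\psi}_{\text{in}},\partial_x\boldsymbol{\psi}_{\text{in}},\partial_x^2\boldsymbol{\psi}_{\text{in}}$ all bounded. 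Consequently $\partial_x s = 2\boldsymbol{\psi}_{\text{in}}^T\partial_x\boldsymbol{\psi}_{\text{in}}$ and $\partial_x^2 s = 2\|\partial_x\boldsymbol{\psi}_{\text{in}}\|^2 + 2\boldsymbol{\psi}_{\text{in}}^T\partial_x^2\boldsymbol{\psi}_{\text{in}}$ are bounded, and the chain rule yields bounded $\partial_x\ell = f'(s)\partial_x s$ and $\partial_x^2\ell = f''(s)(\partial_x s)^2 + f'(s)\partial_x^2 s$; the same reasoning, now using the product rule, bounds the first two derivatives of $\ell^2$ and of the entries of $\boldsymbol{\psi}_{\text{in}}\boldsymbol{\psi}_{\text{in}}^T$. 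Combining these via the product rule gives $A,\partial_x A,\partial_x^2 A\in L^\infty$, i.e.\ $A\in W^{2,\infty}$, which establishes (b).

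The argument is essentially bookkeeping rather than hard analysis; the one place where the hypotheses genuinely matter is that $C^2$ control of $\boldsymbol{\psi}_{\text{in}}$ is exactly what is needed to reach the second derivative of the multiplier, which is why Hypothesis~\ref{hyp:Conditions on the solution of SA} demands that $\partial_x^2\boldsymbol{\psi}_{\text{in}}$ be bounded and continuous. I expect the only mild care to be needed in tracking the matrix chain and product rules. Note that the exponential decay assumed in Hypothesis~\ref{hyp:Conditions on the solution of SA} is not required for conclusions (a) and (b) here; it will instead be used later to control the difference $\mathcal{U}^{\text{SA}}-\mathcal{U}_\infty^{\text{SA}}$ and the asymptotic operator.
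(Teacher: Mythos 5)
Your argument is correct, and it takes a genuinely different (and in one respect tighter) route than the paper's. The paper proves conclusion (a) by a direct norm estimate using $\ell(\boldsymbol{\psi}_{\text{in}})\le\ell_0$ and Cauchy--Schwarz, arriving at a constant $1+\ell_0+\tfrac{2\ell_0}{P_{\text{sat}}}\norm{\boldsymbol{\psi}_{\text{in}}}_{L^2}^2$, and then asserts the $H^2$ analogue ``similarly,'' with $\norm{\boldsymbol{\psi}_{\text{in}}}_{H^2}^2$ in place of the $L^2$ norm; both steps therefore lean on $\boldsymbol{\psi}_{\text{in}}\in L^2$ and $\boldsymbol{\psi}_{\text{in}}\in H^2$, which is where the decay in Hypothesis~\ref{hyp:Conditions on the solution of SA} enters their argument. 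You instead treat $\mathcal{U}^{\text{SA}}$ as the matrix multiplication operator with symbol $A$, reducing (a) to $A\in L^\infty$ and (b) to $A\in W^{2,\infty}$ via the Leibniz rule, so your constants involve only $\norm{\boldsymbol{\psi}_{\text{in}}}_{L^\infty}$ and the sup norms of its first two derivatives. This buys two things. First, your closing observation is right: the exponential decay is not needed for conclusions (a) and (b) on your route, so your version of the proposition is slightly more general than the paper's proof delivers. Second, your pointwise bound $\norm{\boldsymbol{\psi}_{\text{in}}\boldsymbol{\psi}_{\text{in}}^T}_{\mathrm{op}}=\norm{\boldsymbol{\psi}_{\text{in}}}_{\mathbb{C}^2}^2$ handles the rank-one term cleanly, whereas the paper's displayed inequality \eqref{eq:inequality for L2 norm of uout for SA} bounds the \emph{local} term $\boldsymbol{\psi}_{\text{in}}\boldsymbol{\psi}_{\text{in}}^T\boldsymbol{u}_{\text{in}}$ by $\norm{\boldsymbol{\psi}_{\text{in}}}_{L^2}^2\norm{\boldsymbol{u}_{\text{in}}}_{L^2}$ --- an estimate appropriate for the nonlocal term $\boldsymbol{\phi}\langle\boldsymbol{\psi},\boldsymbol{u}\rangle$ in the fiber amplifier, but for a pointwise product the correct factor is $\norm{\boldsymbol{\psi}_{\text{in}}}_{L^\infty}^2$ (finite under the hypothesis, so the paper's conclusion stands); your version is the one that survives scrutiny. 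The only bookkeeping worth making explicit is that $A\in W^{2,\infty}$ together with $\boldsymbol{u}\in H^2$ licenses the weak Leibniz identity you invoke, which is standard here since $A$ is $C^2$ with bounded derivatives; your chain- and product-rule computations for $\ell$, $\ell^2$, and $\boldsymbol{\psi}_{\text{in}}\boldsymbol{\psi}_{\text{in}}^T$ are exactly what is needed.
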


\begin{proof}
To establish the first conclusion, we use the Cauchy-Schwarz inequality and the fact that $\ell(\boldsymbol{\psi}_{\text{in}}) \leq \ell_0$ (see \eqref{eq:SA linearized transfer function}) to obtain
\begin{equation}
\begin{aligned}
    \norm{\boldsymbol{u}_{\text{out}}}_{L^2(\mathbb{R}, \mathbb{C}^2)} &\leq  (1+ \ell(\boldsymbol{\psi}_{\text{in}})) \norm{\boldsymbol{u}_{\text{in}}}_{L^2(\mathbb{R}, \mathbb{C}^2)} + \frac{2 \ell^2(\boldsymbol{\psi}_{\text{in}})}{\ell_0 P_{\text{sat}}} |\boldsymbol{\psi}_{\text{in}}^T\boldsymbol{u}_{\text{in}}| \, \norm{\boldsymbol{\psi}_{\text{in}}}_{L^2(\mathbb{R}, \mathbb{C}^2)}\\
    &\leq \left( 1 + \ell_0 + \frac{2\ell_0}{P_{\text{sat}}} \norm{\boldsymbol{\psi}_{\text{in}}}_{L^2(\mathbb{R}, \mathbb{C}^2)}^2 \right) \norm{\boldsymbol{u}_{\text{in}}}_{L^2(\mathbb{R}, \mathbb{C}^2)}.
\end{aligned}
\label{eq:inequality for L2 norm of uout for SA}
\end{equation}
By Hypothesis~\ref{hyp:Conditions on the solution of SA},  $\boldsymbol{\psi}_{\text{in}} \in L^2(\mathbb{R}, \mathbb{C}^2)$. Therefore,  $\mathcal{U}^{\text{SA}} \in \mathcal{B}(L^2(\mathbb{R}, \mathbb{C}^2))$.
Similarly, to establish the second conclusion, we find that
\begin{equation}
    \norm{\boldsymbol{u}_{\text{out}}}_{H^2(\mathbb{R}, \mathbb{C}^2)} \leq \left( 1 + \ell_0 + \frac{2\ell_0}{P_{\text{sat}}} \norm{\boldsymbol{\psi}_{\text{in}}}_{H^2(\mathbb{R}, \mathbb{C}^2)}^2 \right) \norm{\boldsymbol{u}_{\text{in}}}_{H^2(\mathbb{R}, \mathbb{C}^2)}.
\label{eq:inequality for H2 norm of uout for SA}
\end{equation}  
By Hypothesis~\ref{hyp:Conditions on the solution of SA},  $\boldsymbol{\psi}_{\text{in}} \in H^2(\mathbb{R}, \mathbb{C}^2)$.     Therefore,  $\mathcal{U}^{\text{SA}} \in \mathcal{B}(H^2(\mathbb{R}, \mathbb{C}^2))$.    
 \end{proof}

Next, we establish the existence of an evolution family for the
linearization \eqref{eq:FA linearized equation} of the Haus master equation~\eqref{eq:fiber amplifier}, which models propagation in a fiber amplifier 
\StartRed
of length $L_{\text{FA}}$. 
\EndRed
Let $t \in [0,L_{\text{FA}}]$ be local time within the fiber amplifier and let $s \in [0,L_{\text{FA}}]$. We study solutions, $\boldsymbol{u}:[s,L_{\text{FA}}]\rightarrow H^2(\mathbb{R},\mathbb{C}^2)$, of 
\begin{equation}
    \begin{gathered}
    \partial_t \boldsymbol{u} = \mathcal{L}_{\text{FA}}(t)\boldsymbol{u}, \hspace{1cm} \text{for } 
    \StartRed
    0\leq s < t \leq L_{\text{FA}},
    \EndRed
    \\
    \hspace{-3.6cm} \boldsymbol{u}(s) = \boldsymbol{v},
    \end{gathered}
    \label{eq:IVP for evolution family of FA}
\end{equation}
where $\boldsymbol{v} \in H^2(\mathbb{R},\mathbb{C}^2)$. Here,  $\mathcal{L}_{\text{FA}}(t)$ is the  family of operators on $L^2(\mathbb{R},\mathbb{C}^2)$ given by
reformulating  \eqref{eq:FA linearized equation} as
\begin{equation}
   \mathcal{L}_{\text{FA}}(t) = \textbf{B}(t)\partial_x^2 + \widetilde{\textbf{M}}(t),
       \label{eq:linearized operator of FA for evolution family A}
   \end{equation}
where, setting $g(t) := g(\psi(t))$,  
\begin{equation}
        \textbf{B}(t) = \frac{g(t)}{2\Omega_g^2}\textbf{I} - \frac{\beta}{2}\textbf{J}
        \quad\text{and}\quad
        \widetilde{\textbf{M}}(t)\boldsymbol{u} = \widetilde{\textbf{M}}_1(t)\boldsymbol{u} - \boldsymbol{\phi}(t)\langle \boldsymbol{\psi}(t), \boldsymbol{u} \rangle.
            \label{eq:linearized operator of FA for evolution family B}
\end{equation}
Here, $\langle \, \cdot \, , \, \cdot \, \rangle$ is the $L^2$-inner product on $L^2(\mathbb{R}, \mathbb{C}^2)$ and
 \begin{equation}       
        \widetilde{\textbf{M}}_1(t) = \frac{g(t)}{2}\textbf{I} + \gamma |\boldsymbol{\psi}|^2 \textbf{J} + 2\gamma \textbf{J} \boldsymbol{\psi} \boldsymbol{\psi}^T
            \quad\text{and}\quad
        \boldsymbol{\phi}(t) = \frac{g^2(t)}{g_0 E_{\text{sat}}} \left\{ \left( 1 + \frac{\partial_x^2}{\Omega_g^2} \right) \boldsymbol{\psi} \right\}.
            \label{eq:linearized operator of FA for evolution family C}
\end{equation}

\begin{definition}[{\cite[5.5.3]{pazy2012semigroups}}]
\label{d:Evolution system}
A two parameter family of bounded linear operators, $\mathcal{U}(t,s), 0 \leq s \leq t \leq T$, on $X$ is called an \emph{evolution system} if 
\begin{enumerate}
    \item[(i)] $\mathcal{U}(s,s)=\mathcal{I}$, and $\mathcal{U}(t,r)\circ\mathcal{U}(r,s) = \mathcal{U}(t,s)$ for $0 \leq s \leq r \leq t \leq T$, and
    \item[(ii)] $(t,s) \to \mathcal{U}(t,s)$ is strongly continuous for $0 \leq s \leq t \leq T$.
\end{enumerate}
\end{definition}

\begin{definition}
Let $\mathbf{A} = \mathbf{A}(t,x):[0,\infty)\times\mathbb{R} \to \mathbb{C}^{2 \times 2}$ be a bounded matrix-valued function. We define
\begin{equation}
    \norm{\mathbf{A}}_{\infty} = \sup_{(t,x)} \norm{\mathbf{A}(t,x)}_{\mathbb{C}^{2 \times 2}}.
\end{equation}
\end{definition}

\begin{theorem}
\label{thm:Existence of evolution family of linearized FA}
Assume that Hypothesis~\ref{hyp:conditions on the solution of FA} holds 
\StartRed
in the fiber amplifier.
\EndRed
Then there exists a unique evolution operator,
\StartRed
$\mathcal{U}^{\operatorname{FA}}(t,s) \in \mathcal{B}(L^2(\mathbb{R}, \mathbb{C}^2))$, for $0 \leq s \leq t \leq L_{\text{FA}}$, where  $L_{\operatorname{FA}}$ is the length of the fiber amplifier,
\EndRed
such that 
\begin{enumerate}
    \item $\norm{\mathcal{U}^{\operatorname{FA}}(t,s)}_{\mathcal{B}(L^2(\mathbb{R}, \mathbb{C}^2))} \leq \exp [\norm{\widetilde{\textbf{M}}}_{\infty} (t-s)]$,
    \item $\mathcal{U}^{\operatorname{FA}}(t,s)(H^2(\mathbb{R}, \mathbb{C}^2)) \subset H^2(\mathbb{R}, \mathbb{C}^2)$,
    \item For each $s$, $\mathcal{U}^{\operatorname{FA}}(\cdot, s)$ is strongly continuous in that for all $\boldsymbol{v} \in L^2(\mathbb{R}, \mathbb{C}^2)$, the mapping $t \mapsto \mathcal{U}^{\operatorname{FA}}(t,s)\boldsymbol{v}$ is continuous, and
    \item For each $\boldsymbol{v} \in H^2(\mathbb{R}, \mathbb{C}^2)$, the function $\boldsymbol{u}(t) = \mathcal{U}^{\operatorname{FA}}(t,s)\boldsymbol{v}$ is the unique solution of the initial value problem \eqref{eq:IVP for evolution family of FA} for which 
     \StartRed
    $\boldsymbol{u} \in C([s, L_{\operatorname{FA}}), H^2(\mathbb{R}, \mathbb{C}^2))$ and $\boldsymbol{u} \in C^1((s,L_{\operatorname{FA}}),L^2(\mathbb{R}, \mathbb{C}^2))$.
  \EndRed
\end{enumerate}
\end{theorem}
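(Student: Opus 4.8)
The plan is to realize $\mathcal{L}_{\text{FA}}(t)$ as a bounded perturbation of a principal differential part whose evolution system can be written down explicitly, and then to invoke the perturbation and parabolic-regularity theory for evolution systems in Pazy~\cite{pazy2012semigroups}. Write $\mathcal{L}_{\text{FA}}(t) = \mathcal A_0(t) + \widetilde{\mathbf{M}}(t)$, where $\mathcal A_0(t) = \mathbf{B}(t)\partial_x^2$ is the unbounded principal part with constant-in-$t$ domain $H^2(\mathbb R,\mathbb C^2)$, and $\widetilde{\mathbf{M}}(t)$ is the operator in \eqref{eq:linearized operator of FA for evolution family B}. First I would check that $\widetilde{\mathbf{M}}(t)\in\mathcal B(L^2(\mathbb R,\mathbb C^2))$ uniformly in $t$: the local part $\widetilde{\mathbf{M}}_1(t)$ is multiplication by a matrix that is bounded because $g(t)$, $\boldsymbol\psi(t)$ and $\partial_x\boldsymbol\psi(t)$ are bounded (Hypothesis~\ref{hyp:conditions on the solution of FA}), and the nonlocal part $\boldsymbol u\mapsto\boldsymbol\phi(t)\langle\boldsymbol\psi(t),\boldsymbol u\rangle$ is bounded by Cauchy--Schwarz once $\boldsymbol\phi(t)\in L^2$, which holds since $\partial_x^{\ell}\boldsymbol\psi\in L^2$ for $\ell\le 2$.

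The key simplification is that $\mathbf{B}(t) = \tfrac{g(t)}{2\Omega_g^2}\mathbf{I} - \tfrac{\beta}{2}\mathbf{J}$ is a scalar multiple of $\mathbf{I}$ plus a constant multiple of $\mathbf{J}$, so the matrices $\{\mathbf{B}(\tau)\}$ commute for all $\tau$. Passing to the Fourier domain, where $\partial_x^2\mapsto-\omega^2$, the principal equation $\partial_t\boldsymbol u=\mathcal A_0(t)\boldsymbol u$ decouples over $\omega$ into the $\mathbb C^2$-valued linear ODE $\partial_t\widehat{\boldsymbol u}(\omega)=-\omega^2\mathbf{B}(t)\widehat{\boldsymbol u}(\omega)$, whose solution operator is the multiplier
\[
\Phi(t,s,\omega) = \exp\!\Big(-\omega^2\!\int_s^t\mathbf{B}(\tau)\,d\tau\Big) = e^{-\omega^2 a(t,s)}\,\exp\!\big(\tfrac{\beta}{2}\omega^2(t-s)\mathbf{J}\big),
\]
with $a(t,s)=\tfrac{1}{2\Omega_g^2}\int_s^t g(\tau)\,d\tau\ge 0$. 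Since $\exp(c\mathbf{J})$ is a rotation, hence unitary, and $a(t,s)\ge 0$, we get $\|\Phi(t,s,\omega)\|\le 1$ uniformly in $\omega$. By Plancherel the induced operator $\mathcal U_0(t,s):=\mathcal F^{-1}\Phi(t,s,\cdot)\mathcal F$ is a contraction on $L^2(\mathbb R,\mathbb C^2)$ and, because $\Phi$ commutes with multiplication by $(1+\omega^2)$, also on $H^2(\mathbb R,\mathbb C^2)$; the evolution-system axioms of Definition~\ref{d:Evolution system} and strong continuity follow from the continuity of $g(t)$. When $g(t)>0$ and $\Omega_g<\infty$, the factor $e^{-\omega^2 a(t,s)}$ moreover renders $\mathcal U_0$ an analytic (parabolic) evolution system, which I will exploit below.

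Next I would reintroduce the bounded, strongly continuous perturbation $\widetilde{\mathbf{M}}(t)$ via the variation-of-constants (Dyson) series
\[
\mathcal U^{\text{FA}}(t,s) = \mathcal U_0(t,s) + \int_s^t\mathcal U_0(t,r)\,\widetilde{\mathbf{M}}(r)\,\mathcal U^{\text{FA}}(r,s)\,dr,
\]
which converges by Picard iteration because $\widetilde{\mathbf{M}}$ is uniformly bounded on $L^2$. This yields a unique evolution system on $L^2$ and strong continuity (conclusion~3), and a Gronwall estimate using $\|\mathcal U_0(t,s)\|\le 1$ gives the bound $\|\mathcal U^{\text{FA}}(t,s)\|\le\exp[\|\widetilde{\mathbf{M}}\|_\infty(t-s)]$ of conclusion~1.

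Finally, conclusions~2 and~4 require propagating regularity. The local part is immediate, since $\widetilde{\mathbf{M}}_1(t)$ maps $H^2\to H^2$ (using that $\boldsymbol\psi,\partial_x\boldsymbol\psi,\partial_x^2\boldsymbol\psi\in L^\infty$, from Hypothesis~\ref{hyp:conditions on the solution of FA}). For the nonlocal term I would use the analyticity of $\mathcal U_0$ together with Hölder continuity in $r$ of the integrand $\widetilde{\mathbf{M}}(r)\mathcal U^{\text{FA}}(r,s)\boldsymbol v$: this regularity is exactly what allows the parabolic Duhamel term $\int_s^t\mathcal U_0(t,r)[\,\cdot\,]\,dr$ to return values in $D(\mathcal A_0)=H^2$, and it is supplied by the $C^1$- and $C^2$-in-$t$ bounds of Hypothesis~\ref{hyp:conditions on the solution of FA}(a)--(c) through the estimates of Lemma~\ref{lem:F is C1}. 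The same analytic structure then gives conclusion~4 by differentiating the series to obtain $\partial_t\mathcal U^{\text{FA}}(t,s)\boldsymbol v=\mathcal L_{\text{FA}}(t)\mathcal U^{\text{FA}}(t,s)\boldsymbol v$ on the open interval $(s,L_{\operatorname{FA}})$, the open interval being characteristic of parabolic smoothing. I expect the nonlocal term $\boldsymbol\phi(t)\langle\boldsymbol\psi(t),\cdot\rangle$ to be the main obstacle throughout: because $\boldsymbol\phi(t)=\tfrac{g^2}{g_0E_{\text{sat}}}(1+\partial_x^2/\Omega_g^2)\boldsymbol\psi$ lies only in $L^2$ and not generally in $H^2$ under Hypothesis~\ref{hyp:conditions on the solution of FA}, one cannot read off $H^2$-invariance by locality and must instead lean on the smoothing of the principal part (hence on $0<\Omega_g<\infty$ and positive gain) and on the delicate verification that $t\mapsto\widetilde{\mathbf{M}}(t)$ is sufficiently regular as a bounded operator.
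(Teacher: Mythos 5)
Your explicit construction of the principal evolution system is correct and, for part of the theorem, more concrete than the paper's argument: since $\mathbf B(\tau) = \tfrac{g(\tau)}{2\Omega_g^2}\mathbf I - \tfrac{\beta}{2}\mathbf J$ lies in the commutative algebra generated by $\mathbf I$ and $\mathbf J$, the multiplier $\Phi(t,s,\omega)=\exp\big(-\omega^2\int_s^t \mathbf B(\tau)\,d\tau\big)$ is indeed the solution operator of the principal part, $g\geq 0$ makes it a contraction on $L^2$ and $H^2$, and the Dyson/Picard series then delivers existence and uniqueness of the mild evolution system, strong continuity (conclusion~3), and the Gr\"onwall bound of conclusion~1. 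The paper instead proceeds entirely through the Kato--Pazy theory for evolution systems: Lemma~\ref{lem:B(t) infinitesimal generator} (Hille--Yosida for $\mathbf B(t)\partial_x^2$) and Lemma~\ref{lem:M bounded} show via \cite[Theorem 5.2.3]{pazy2012semigroups} that $\{\mathcal L_{\text{FA}}(t)\}$ is a stable family of generators, and Lemma~\ref{lem:F is C1} verifies the remaining hypothesis of \cite[Theorem 5.4.8]{pazy2012semigroups}, which then yields all four conclusions simultaneously.

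Where your proposal has a genuine gap is in conclusions~2 and~4. Your argument for them rests on analyticity and parabolic smoothing of $\mathcal U_0$, which, as you yourself note, requires $0<\Omega_g<\infty$ and $g(t)>0$. Neither is a hypothesis of this theorem: those assumptions enter the paper only later, in Theorem~\ref{thm:EssSpecEqual} and Section~\ref{sec:analycity}. Under Hypothesis~\ref{hyp:conditions on the solution of FA} alone one may have $g_0=0$ or $\Omega_g=\infty$, in which case $\Phi$ is a unitary Schr\"odinger-type group with no smoothing, the parabolic Duhamel argument collapses, and yet conclusions~2 and~4 must still hold. Moreover, even in the strictly parabolic regime your regularity step is circular as sketched: the H\"older continuity of $r\mapsto \widetilde{\mathbf M}(r)\,\mathcal U^{\text{FA}}(r,s)\boldsymbol v$ presupposes regularity of $\mathcal U^{\text{FA}}(r,s)\boldsymbol v$ that is exactly what you are trying to establish, and, as you correctly observe, $\widetilde{\mathbf M}(t)$ does not map $H^2$ into $H^2$ (since $\boldsymbol\phi(t)$ is only in $L^2$), so no direct fixed-point argument in $C([s,L_{\text{FA}}],H^2)$ is available. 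The missing idea is that no parabolicity is needed at all: Lemma~\ref{lem:F is C1}, which you invoke only as a source of estimates, is precisely the hypothesis of the hyperbolic-type theorem \cite[Theorem 5.4.8]{pazy2012semigroups} --- for a stable family with constant domain $H^2(\mathbb R,\mathbb C^2)$ and $t\mapsto \mathcal L_{\text{FA}}(t)\boldsymbol v$ of class $C^1$ into $L^2(\mathbb R,\mathbb C^2)$, that theorem gives $\mathcal U^{\text{FA}}(t,s)(H^2)\subset H^2$ and unique classical solvability of \eqref{eq:IVP for evolution family of FA} with no smoothing whatsoever. Replacing your parabolic-regularity step by that citation repairs the proof and is exactly the paper's route.
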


\begin{proof}
The result follows from \StartRed \cite[Theorems 5.2.3 and 5.4.8]{pazy2012semigroups}\EndRed.
\Cref{lem:B(t) infinitesimal generator,lem:M bounded,lem:F is C1} below guarantee that the assumptions of these theorems hold.
\end{proof}

\begin{lemma}
\label{lem:B(t) infinitesimal generator}
    The linear operator, $\textbf{B}(t)\partial_x^2:H^2(\mathbb{R},\mathbb{C}^2) \subset L^2(\mathbb{R},\mathbb{C}^2) \rightarrow L^2(\mathbb{R},\mathbb{C}^2)$ is closed with domain $H^2(\mathbb{R},\mathbb{C}^2)$. Furthermore, $(0,\infty) \subset \rho(\textbf{B}(t)\partial_x^2)$ and the resolvent operator satisfies
\begin{equation}
    \norm{\mathcal{R}(\lambda:\textbf{B}(t)\partial_x^2)}_{\mathcal{B}(L^2(\mathbb{R},\mathbb{C}^2))} \leq \frac{1}{\lambda}, \hspace{1cm} \text{for all } \lambda > 0.
    \label{eq:condition on resolvent operator}
\end{equation}
Consequently, $\textbf{B}(t)\partial_x^2$ is the infinitesimal generator of a $C_0$-semigroup on $L^2(\mathbb{R},\mathbb{C}^2)$.
\end{lemma}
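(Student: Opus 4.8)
The plan is to diagonalize the operator with the Fourier transform, $\mathcal F$, which is a unitary isomorphism of $L^2(\mathbb R, \mathbb C^2)$ and therefore preserves closedness, domains, spectra, and operator norms. Under $\mathcal F$ the operator $\partial_x^2$ becomes multiplication by $-\omega^2$, so $\mathbf B(t)\partial_x^2$ is unitarily equivalent to the matrix multiplication operator $\widehat u(\omega)\mapsto -\omega^2\mathbf B(t)\,\widehat u(\omega)$ with symbol $\mathbf m(\omega)=-\omega^2\mathbf B(t)$. First I would record the structure of the symbol: writing $\mathbf B(t)=a\mathbf I-b\mathbf J$ with $a=g(t)/(2\Omega_g^2)\ge 0$ (the saturable gain $g(t)$ is nonnegative and $\Omega_g>0$) and $b=\beta/2$, one has $\det\mathbf B(t)=a^2+b^2$, so $\mathbf B(t)$ is invertible whenever $(g(t),\beta)\neq(0,0)$.

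Closedness and the identification of the domain would follow from standard facts about multiplication operators: the maximal multiplication operator by a measurable matrix symbol is always closed, and because $\mathbf B(t)$ is invertible its natural domain $\{\widehat u:\omega^2\mathbf B(t)\widehat u\in L^2\}$ coincides with $\{\widehat u:\omega^2\widehat u\in L^2\}=\mathcal F(H^2(\mathbb R,\mathbb C^2))$. Pulling back through $\mathcal F$ then shows that $\mathbf B(t)\partial_x^2$ is closed with domain $H^2(\mathbb R,\mathbb C^2)$.

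For the resolvent, fix $\lambda>0$. In the Fourier picture $\mathbf B(t)\partial_x^2-\lambda$ is multiplication by $\mathbf m(\omega)-\lambda\mathbf I=-(\omega^2 a+\lambda)\mathbf I+\omega^2 b\,\mathbf J=:c\,\mathbf I+d\,\mathbf J$, with $c=-(\omega^2 a+\lambda)$ and $d=\omega^2 b$. The key algebraic observation is that matrices of the form $c\mathbf I+d\mathbf J$ form a copy of $\mathbb C$ (because $\mathbf J^2=-\mathbf I$): such a matrix is invertible exactly when $c^2+d^2\neq 0$, with inverse $(c\mathbf I-d\mathbf J)/(c^2+d^2)$, and it acts as a scaled rotation, so its spectral norm on $\mathbb C^2$ equals $\sqrt{c^2+d^2}$. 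Since $a\ge 0$ and $\lambda>0$, we have $c^2+d^2=(\omega^2 a+\lambda)^2+\omega^4 b^2\ge\lambda^2>0$ for every $\omega$, hence $\mathbf m(\omega)-\lambda\mathbf I$ is invertible for all $\omega$ and
\[
\norm{(\mathbf m(\omega)-\lambda\mathbf I)^{-1}}_{\mathbb C^{2\times 2}}=\frac{1}{\sqrt{c^2+d^2}}\le\frac 1\lambda .
\]
Because the operator norm of a matrix multiplication operator on $L^2(\mathbb R,\mathbb C^2)$ is the essential supremum over $\omega$ of the pointwise spectral norm of its symbol, this gives $(0,\infty)\subset\rho(\mathbf B(t)\partial_x^2)$ together with the stated bound $\norm{\mathcal R(\lambda:\mathbf B(t)\partial_x^2)}\le 1/\lambda$. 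The Hille--Yosida theorem (with $M=1$, $\omega=0$) then yields that $\mathbf B(t)\partial_x^2$ generates a $C_0$ contraction semigroup.

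I do not anticipate a serious obstacle; this is mostly careful bookkeeping. The two points that genuinely require care are: (i) verifying that the gain keeps $\mathrm{Re}$ of the eigenvalues of $\mathbf B(t)$ nonnegative, i.e. $a\ge 0$, which is what makes the contraction bound hold uniformly in $\omega$; and (ii) the clean reduction via $\mathbf J^2=-\mathbf I$ that collapses the matrix resolvent to a single scalar modulus. I note that the estimate above is in fact valid whenever $a\ge 0$ regardless of whether $\mathbf B(t)$ is invertible, and the degenerate case $\mathbf B(t)=0$ is trivial, so the argument is robust to the edge cases not excluded by the standing parameter assumptions.
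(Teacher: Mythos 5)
Your proposal is correct, and it takes a genuinely more self-contained route than the paper. The paper's own proof is two lines: the resolvent bound \eqref{eq:condition on resolvent operator} is imported wholesale by citation from Zweck et al.\ \cite[Lemma 4.1]{zweck2021essential}, and the Hille--Yosida theorem then finishes. You instead prove the resolvent estimate from scratch by conjugating with the Fourier transform and analyzing the symbol $-\omega^2\mathbf{B}(t)$; the key device is the identification of $\{c\,\mathbf{I}+d\,\mathbf{J} : c,d\in\mathbb{R}\}$ with $\mathbb{C}$ via $\mathbf{J}^2=-\mathbf{I}$, which collapses the matrix resolvent to the scalar bound
\begin{equation*}
\norm{(\mathbf{m}(\omega)-\lambda\mathbf{I})^{-1}}_{\mathbb{C}^{2\times 2}}
=\bigl((\omega^2 a+\lambda)^2+\omega^4 b^2\bigr)^{-1/2}\le \frac{1}{\lambda},
\end{equation*}
uniformly in $\omega$, using only $a=g(t)/2\Omega_g^2\ge 0$. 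This is essentially the content of the cited external lemma, so what your argument buys is a self-contained half-page proof that also makes explicit two hypotheses the paper leaves implicit: the sign condition $g(t)\ge 0$ (which is what makes the bound hold with constant $M=1$ and shift $\omega=0$, i.e.\ a contraction semigroup), and the invertibility of $\mathbf{B}(t)$ (which is what identifies the maximal domain of the multiplication operator with $\mathcal{F}(H^2)$ and guarantees the resolvent maps into $H^2$). Your closure and domain bookkeeping through the unitary $\mathcal{F}$ is standard and sound.

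One small correction to your closing remark: the degenerate case $\mathbf{B}(t)=\mathbf{0}$ is not ``trivial'' for the closedness claim --- the zero operator with domain $H^2(\mathbb{R},\mathbb{C}^2)$ is closable but \emph{not} closed (its graph closure is the zero operator on all of $L^2$), so the lemma as stated actually fails there. Invertibility of $\mathbf{B}(t)$, equivalently $(g(t),\beta)\neq(0,0)$ (automatic when $g_0>0$, and consistent with the assumption $(g_0,\beta)\neq(0,0)$ imposed later in Theorem~\ref{thm:EssSpecEqual}), is therefore genuinely required, exactly as your main argument uses it; only the resolvent bound itself survives with $a\ge 0$ alone. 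This does not affect the validity of your proof in the parameter regime where the lemma is applied.
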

\begin{proof}     
\Cref{eq:condition on resolvent operator} follows immediately from \cite[Lemma 4.1]{zweck2021essential}. The proof is completed by invoking the Hille-Yosida Theorem~\cite[1.3.1]{pazy2012semigroups}.
\end{proof}

\begin{lemma}
\label{lem:M bounded}
    Assume that Hypothesis~\ref{hyp:conditions on the solution of FA} is met. Then there exists $K>0$ such that 
    \StartRed
    for all $t\in [0,L_{\operatorname{FA}}]$ 
    \EndRed
    \begin{equation}
        \norm{\widetilde{\textbf{M}}(t)}_{\mathcal{B}(L^2(\mathbb{R},\mathbb{C}^2))} < K.
        \label{eq:M tilde is bounded linear operator}
    \end{equation}
\end{lemma}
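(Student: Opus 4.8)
The plan is to split $\widetilde{\textbf{M}}(t)$ into its local (pointwise multiplication) part, $\widetilde{\textbf{M}}_1(t)$, and the nonlocal rank-one part, $\boldsymbol{u}\mapsto -\boldsymbol{\phi}(t)\langle\boldsymbol{\psi}(t),\boldsymbol{u}\rangle$, and to bound each uniformly in $t\in[0,L_{\operatorname{FA}}]$ using the dominating estimates in Hypothesis~\ref{hyp:conditions on the solution of FA}. The single observation that powers every bound is that the saturable gain satisfies $0\le g(t)\le g_0$, since by \eqref{eq:Saturable gain} the energy $E(\psi)\ge 0$.

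First I would bound the multiplication operator $\widetilde{\textbf{M}}_1(t)$. Because multiplication by a bounded matrix-valued function has operator norm on $L^2(\mathbb{R},\mathbb{C}^2)$ equal to the essential supremum over $x$ of the matrix norm, it suffices to estimate $\tfrac{g(t)}{2}\textbf{I}+\gamma\norm{\boldsymbol{\psi}}^2\textbf{J}+2\gamma\textbf{J}\boldsymbol{\psi}\boldsymbol{\psi}^T$ pointwise. Using $g(t)\le g_0$, $\norm{\textbf{J}}=1$, the identity $\norm{\boldsymbol{\psi}\boldsymbol{\psi}^T}=\norm{\boldsymbol{\psi}}^2$, and the bound $\norm{\boldsymbol{\psi}(t,x)}\le h(x)\le\norm{h}_\infty$ from Hypothesis~\ref{hyp:conditions on the solution of FA}(c), I obtain
\begin{equation}
\norm{\widetilde{\textbf{M}}_1(t)}_{\mathcal{B}(L^2(\mathbb{R},\mathbb{C}^2))}\le \tfrac{g_0}{2}+3\gamma\norm{h}_\infty^2,
\end{equation}
which is independent of $t$.

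Next I would treat the nonlocal term by Cauchy--Schwarz: for any $\boldsymbol{u}\in L^2(\mathbb{R},\mathbb{C}^2)$,
\begin{equation}
\norm{\boldsymbol{\phi}(t)\langle\boldsymbol{\psi}(t),\boldsymbol{u}\rangle}_{L^2}\le\norm{\boldsymbol{\psi}(t)}_{L^2}\,\norm{\boldsymbol{\phi}(t)}_{L^2}\,\norm{\boldsymbol{u}}_{L^2}.
\end{equation}
Here $\norm{\boldsymbol{\psi}(t)}_{L^2}\le\norm{h}_{L^2}$, again by Hypothesis~\ref{hyp:conditions on the solution of FA}(c). For the remaining factor I use $g^2(t)/(g_0E_{\text{sat}})\le g_0/E_{\text{sat}}$ together with the triangle inequality $\norm{(1+\partial_x^2/\Omega_g^2)\boldsymbol{\psi}(t)}_{L^2}\le\norm{\boldsymbol{\psi}(t)}_{L^2}+\Omega_g^{-2}\norm{\partial_x^2\boldsymbol{\psi}(t)}_{L^2}$, and bound both $L^2$-norms by $\norm{h}_{L^2}$ via the same dominating estimate, giving $\norm{\boldsymbol{\phi}(t)}_{L^2}\le\tfrac{g_0}{E_{\text{sat}}}(1+\Omega_g^{-2})\norm{h}_{L^2}$, uniformly in $t$. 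Adding the local and nonlocal contributions and choosing $K$ strictly larger than the resulting sum then yields \eqref{eq:M tilde is bounded linear operator}.

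The argument is essentially routine; the only point requiring care is that every estimate must be uniform in $t$, which is exactly what the single dominating function $h\in L^2(\mathbb{R},\mathbb{R})\cap L^\infty(\mathbb{R},\mathbb{R})$ of Hypothesis~\ref{hyp:conditions on the solution of FA}(c) is designed to supply, since it controls $\boldsymbol{\psi}$ and $\partial_x^2\boldsymbol{\psi}$ simultaneously in both the $L^2$ and $L^\infty$ norms at every $t$. I do not anticipate any genuine obstacle beyond bookkeeping the constants and recording which hypothesis furnishes each bound.
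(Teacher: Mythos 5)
Your proposal is correct and follows essentially the same route as the paper: the same splitting of $\widetilde{\textbf{M}}(t)$ into the multiplication part $\widetilde{\textbf{M}}_1(t)$ and the rank-one nonlocal part, the same Cauchy--Schwarz estimate $\norm{\boldsymbol{\phi}(t)\langle\boldsymbol{\psi}(t),\boldsymbol{u}\rangle}_{L^2}\le\norm{\boldsymbol{\phi}(t)}_{L^2}\norm{\boldsymbol{\psi}(t)}_{L^2}\norm{\boldsymbol{u}}_{L^2}$, and the same uniform-in-$t$ control via $g(t)\le g_0$ and Hypothesis~\ref{hyp:conditions on the solution of FA}(c). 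The only (harmless) cosmetic differences are that you bound $\norm{\widetilde{\textbf{M}}_1(t)}$ by the triangle inequality in the operator norm rather than the paper's explicit Frobenius-norm expansion, and you bound the nonlocal factor directly by $\norm{h}_{L^2}$ instead of through $\norm{\boldsymbol{\psi}(t)}_{H^2}$.
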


\begin{proof}
    We have
    \begin{equation}
\norm{\widetilde{\textbf{M}}(t)\boldsymbol{u}}_{L^2(\mathbb{R},\mathbb{C}^2)} 
       \leq 
  \norm{\widetilde{\textbf{M}}_1}_{\infty} \norm{\boldsymbol{u}}_{L^2(\mathbb{R},\mathbb{C}^2)}          
        + \norm{\boldsymbol{\phi}(t)\langle \boldsymbol{\psi}(t), \boldsymbol{u} \rangle}_{L^2(\mathbb{R},\mathbb{C}^2)}.
        \label{eq:bounded operator 1}
    \end{equation}
 Let   $\| \mathbf A\|_F$ denote the Frobenius norm of a matrix $\mathbf A$. 
  We estimate the first term in  \eqref{eq:bounded operator 1} by
        \begin{align*}
        \norm{\widetilde{\textbf{M}}_1}_{\infty}^2 
                       \leq& \sup_{(t,x) \in [0,L_{\operatorname{FA}}]\times\mathbb{R}} \norm{\widetilde{\textbf{M}}_1(t,x)}_F^2\\
        =& \sup_{(t,x) \in [0,L_{\operatorname{FA}}]\times\mathbb{R}} \,\,\sum_{i,j = 1}^{2} \left|\frac{g(t)}{2}\textbf{I}_{ij} + \gamma |\boldsymbol{\psi}(t,x)|^2 \textbf{J}_{ij} + 2\gamma \left[\textbf{J} \boldsymbol{\psi}(t,x) \boldsymbol{\psi}^T(t,x)\right]_{ij}\right|^2\\
        \leq& \sup_{(t,x) \in [0,L_{\operatorname{FA}}]\times\mathbb{R}} \bigg\{ \frac{g^2(t)}{4} \sum_{i,j = 1}^{2} |I_{ij}|^2 + \gamma^2|\boldsymbol{\psi}(t,x)|^4 \sum_{i,j = 1}^{2} |J_{ij}|^2\\
        &+ 4\gamma^2 \sum_{i,j = 1}^{2} \left|\left[J \boldsymbol{\psi}(t,x) \boldsymbol{\psi}^T(t,x)\right]_{ij}\right|^2 + \gamma g(t) |\boldsymbol{\psi}(t,x)|^2 \sum_{i,j = 1}^{2} |I_{ij}| |J_{ij}|\\
        &+ 4\gamma^2|\boldsymbol{\psi}(t,x)|^2 \sum_{i,j = 1}^{2} |J_{ij}| \left|\left[J \boldsymbol{\psi}(t,x) \boldsymbol{\psi}^T(t,x)\right]_{ij}\right|  \\
        &+ 2\gamma g(t)\sum_{i,j = 1}^{2} |I_{ij}| \left|\left[J \boldsymbol{\psi}(t,x) \boldsymbol{\psi}^T(t,x)\right]_{ij}\right| \bigg\}\\
        =& \sup_{(t,x) \in [0,L_{\operatorname{FA}}]\times\mathbb{R}} \left\{ \frac{g^2(t)}{2} + 10\gamma^2|\boldsymbol{\psi}(t,x)|^4 + 4\gamma g(t) |\Re(\boldsymbol{\psi}(t,x)) \Im(\boldsymbol{\psi}(t,x))| \right\}\\
        \leq& \frac{g_0^2}{2} + 
        \sup_{(t,x) \in [0,L_{\operatorname{FA}}]\times\mathbb{R}} \left\{        
        10\gamma^2  |\boldsymbol{\psi}(t,x)|^4 + 4\gamma g_0 |\Re(\boldsymbol{\psi}(t,x)) \Im(\boldsymbol{\psi}(t,x))| \right\},
    \end{align*}
 which is finite by Hypothesis~\ref{hyp:conditions on the solution of FA}.

    As for the second term in \eqref{eq:bounded operator 1}, by the Cauchy-Schwarz inequality,
    \begin{align*}
        \norm{\boldsymbol{\phi}(t)\langle \boldsymbol{\psi}(t), \boldsymbol{u} \rangle}_{L^2(\mathbb{R},\mathbb{C}^2)} &\leq \norm{\boldsymbol{\phi}(t)}_{L^2(\mathbb{R},\mathbb{C}^2)} \norm{\boldsymbol{\psi}(t)}_{L^2(\mathbb{R},\mathbb{C}^2)} \norm{\boldsymbol{u}}_{L^2(\mathbb{R},\mathbb{C}^2)}\\
        &\leq \frac{g_0}{E_{\text{sat}}} \norm{ \boldsymbol{\psi}(t) + \frac{\partial_x^2 \boldsymbol{\psi}(t)}{\Omega_g^2}}_{L^2(\mathbb{R},\mathbb{C}^2)} \norm{\boldsymbol{\psi}(t)}_{L^2(\mathbb{R},\mathbb{C}^2)} \norm{\boldsymbol{u}}_{L^2(\mathbb{R},\mathbb{C}^2)}\\
         &\leq \max \left\{1,\frac{1}{\Omega_g^2} \right\} \frac{g_0}{E_{\text{sat}}} \norm{\boldsymbol{\psi}(t)}_{H^2(\mathbb{R},\mathbb{C}^2)}^2 \norm{\boldsymbol{u}}_{L^2(\mathbb{R},\mathbb{C}^2)}.
    \end{align*}
 The result now follows, since  $\norm{\boldsymbol{\psi}(t)}_{H^2(\mathbb{R},\mathbb{C}^2)} < \infty$   by Hypothesis~\ref{hyp:conditions on the solution of FA}.
\end{proof}

Combining \cite[Theorem 5.2.3]{pazy2012semigroups} and 
Lemmas~\ref{lem:B(t) infinitesimal generator} and \ref{lem:M bounded}, 
we conclude that\\ $\{\mathcal{L}_{\text{FA}}(t)\}_{t\in[0,L_{\operatorname{FA}}]}$ is a stable family of infinitesimal generators of $C_0$-semigroups on $L^2(\mathbb{R},\mathbb{C}^2)$. This is the first assumption in 
\StartRed
\cite[Theorem 5.4.8]{pazy2012semigroups}.
\EndRed
The following Lemma establishes the second assumption.

\begin{lemma}
\label{lem:F is C1}
Suppose that Hypothesis~\ref{hyp:conditions on the solution of FA} holds. Then for each $\boldsymbol{v} \in H^2(\mathbb{R},\mathbb{C}^2)$, we have that $F(\cdot) = \mathcal{L}_{\operatorname{FA}}(\cdot)\boldsymbol{v}:(0,L_{\operatorname{FA}}) \rightarrow L^2(\mathbb{R},\mathbb{C}^2)$ is $C^1$.
\end{lemma}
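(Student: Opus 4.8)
The plan is to differentiate $F(t) = \mathcal{L}_{\operatorname{FA}}(t)\boldsymbol{v}$ term by term, isolating the three distinct ways in which $t$ enters. Writing out \eqref{eq:linearized operator of FA for evolution family A}--\eqref{eq:linearized operator of FA for evolution family C} with $\boldsymbol{u}$ replaced by the fixed $\boldsymbol{v} \in H^2(\mathbb{R},\mathbb{C}^2)$,
\[
F(t) = \textbf{B}(t)\partial_x^2\boldsymbol{v} + \widetilde{\textbf{M}}_1(t)\boldsymbol{v} - \boldsymbol{\phi}(t)\langle\boldsymbol{\psi}(t),\boldsymbol{v}\rangle,
\]
we see that the $t$-dependence is carried by (i) the scalar saturable gain $g(t) = g(\boldsymbol{\psi}(t))$, which appears in $\textbf{B}(t)$, $\widetilde{\textbf{M}}_1(t)$, and $\boldsymbol{\phi}(t)$; (ii) the pointwise multipliers $\gamma|\boldsymbol{\psi}(t)|^2\textbf{J}$ and $2\gamma\textbf{J}\boldsymbol{\psi}(t)\boldsymbol{\psi}(t)^T$ acting on $\boldsymbol{v}$; and (iii) the nonlocal term, built from the scalar $\langle\boldsymbol{\psi}(t),\boldsymbol{v}\rangle$ and the $L^2$-valued function $\boldsymbol{\phi}(t) = \frac{g^2(t)}{g_0 E_{\text{sat}}}(\boldsymbol{\psi}(t) + \Omega_g^{-2}\partial_x^2\boldsymbol{\psi}(t))$. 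It suffices to show each of these is $C^1$ and then combine them using the product rule for Banach-space-valued maps.

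For (i), I would first show that the energy $E(t) = \norm{\boldsymbol{\psi}(t)}_{L^2(\mathbb{R},\mathbb{C}^2)}^2$ is $C^1$ by differentiating under the integral sign, with $E'(t) = 2\langle\boldsymbol{\psi}(t),\partial_t\boldsymbol{\psi}(t)\rangle$. This is justified because $\boldsymbol{\psi}$ is $C^2$ in $t$ for almost every $x$ (Hypothesis~\ref{hyp:conditions on the solution of FA}(a)) and because the difference quotients are dominated by a multiple of $h^2 \in L^1(\mathbb{R})$, using the uniform bound $h$ of Hypothesis~\ref{hyp:conditions on the solution of FA}(c). Since $E(t) \geq 0$, the map $E \mapsto g_0/(1 + E/E_{\text{sat}})$ in \eqref{eq:Saturable gain} is smooth, so $g(t)$ is $C^1$ with continuous derivative.

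The heart of the argument is (ii) and the $L^2$-valued factor in (iii). The plan is to verify $L^2$-differentiability of each such term directly from the definition by dominated convergence, rather than via a product rule, since a term such as $|\boldsymbol{\psi}(t,x)|^2\boldsymbol{v}(x)$ is a bounded multiplier applied to the fixed $L^2$ function $\boldsymbol{v}$ rather than a product of two $L^2$ functions. Concretely, for almost every $x$ the difference quotient $\delta^{-1}(|\boldsymbol{\psi}(t+\delta,x)|^2 - |\boldsymbol{\psi}(t,x)|^2)\boldsymbol{v}(x)$ converges pointwise to $2(\boldsymbol{\psi}\cdot\partial_t\boldsymbol{\psi})(t,x)\boldsymbol{v}(x)$ by Hypothesis~\ref{hyp:conditions on the solution of FA}(a), and by the mean value theorem its modulus is bounded by $C\norm{h}_{L^\infty}^2|\boldsymbol{v}(x)|$, whose square lies in $L^1(\mathbb{R})$; dominated convergence then yields convergence in $L^2$. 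Continuity of the resulting derivative in $t$ follows by the same domination together with Hypothesis~\ref{hyp:conditions on the solution of FA}(a),(b). The identical scheme, now invoking the bound on $\partial_t\partial_x^2\boldsymbol{\psi}$ from Hypothesis~\ref{hyp:conditions on the solution of FA}(c), shows that $t\mapsto\boldsymbol{\psi}(t)$ and $t\mapsto\partial_x^2\boldsymbol{\psi}(t)$ are $C^1$ from $(0,L_{\operatorname{FA}})$ into $L^2(\mathbb{R},\mathbb{C}^2)$, hence so is $\boldsymbol{\phi}(t)$.

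Finally I would assemble $F$ from these pieces: the scalar $\langle\boldsymbol{\psi}(t),\boldsymbol{v}\rangle$ is $C^1$ again by differentiation under the integral (dominated using $h|\boldsymbol{v}| \in L^1$), and the product rule for a scalar $C^1$ function times an $L^2$-valued $C^1$ function handles the nonlocal term; the remaining terms are sums of a $C^1$ scalar times a fixed $L^2$ vector. The main obstacle, and the reason the detailed estimates are deferred to Appendix~\ref{AppendixC1}, is establishing uniform domination of all the relevant difference quotients in $L^2$: every such bound ultimately rests on the single dominating function $h \in L^2(\mathbb{R},\mathbb{R}) \cap L^{\infty}(\mathbb{R},\mathbb{R})$ controlling $\boldsymbol{\psi}$ and its $t$- and $x$-derivatives up to the orders appearing in $\mathcal{L}_{\operatorname{FA}}$, which is precisely what Hypothesis~\ref{hyp:conditions on the solution of FA}(c) supplies.
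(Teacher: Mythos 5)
Your proposal is correct, and it reaches the conclusion by a genuinely different organization than the paper's own proof, although both rest on the same decomposition of $F(t)$ into the pieces $\textbf{B}(t)\partial_x^2\boldsymbol{v}$, $\widetilde{\textbf{M}}_1(t)\boldsymbol{v}$, and $\boldsymbol{\phi}(t)\langle\boldsymbol{\psi}(t),\boldsymbol{v}\rangle$, and on the dominating function $h$ of Hypothesis~\ref{hyp:conditions on the solution of FA}(c). The paper proves differentiability through a single global Taylor-remainder estimate: each increment is written, via the fundamental theorem of calculus, as $\int_t^{t+h}[(\partial_t X)(\tau)-(\partial_t X)(t)]\,d\tau$ and bounded by Cauchy--Schwarz, yielding \eqref{eq:Bound on F' new} with the moduli of continuity $G_j(h)$ of \eqref{eq:Gestimates}; differentiability then reduces to showing $G_j(h)\to 0$, i.e., to continuity in $t$ of $\partial_t\textbf{B}$, of $\partial_t\widetilde{\textbf{M}}_1$ (uniformly in $x$), of $\partial_t\boldsymbol{\psi}$ in $H^2$, and of $\partial_t\boldsymbol{\phi}$ in $L^2$, which is where dominated convergence and differentiation under the integral sign enter. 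You instead prove $C^1$-ness of each elementary factor --- $E$, $g$, the $L^2$-valued maps $t\mapsto\boldsymbol{\psi}(t)$ and $t\mapsto\partial_x^2\boldsymbol{\psi}(t)$, and the scalar $t\mapsto\langle\boldsymbol{\psi}(t),\boldsymbol{v}\rangle$ --- directly from the definition of the derivative, using a.e.-pointwise difference quotients, mean-value-theorem domination by $h$, and dominated convergence, and then assemble $F$ with the product rule for Banach-space-valued maps. Your route is more modular and bypasses the explicit estimates of Appendix~\ref{AppendixC1}; it also localizes the hypotheses cleanly (parts (a)--(b) supply a.e. convergence of the quotients, part (c) supplies the domination), and it needs only a.e.-pointwise continuity in $t$ plus domination where the paper's $G_2(h)$ term requires continuity of $\partial_t\widetilde{\textbf{M}}_1$ uniformly in $x$, supplied through Hypothesis~\ref{hyp:Conditions on the solution of SMF}(a). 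What the paper's route buys in exchange is a quantitative remainder bound $\norm{F(t+h)-F(t)-hF'(t)}_{L^2(\mathbb{R},\mathbb{C}^2)}\le hG(h)$ with an explicit $G$, and a framework in which the continuity of $F'$ is ``similar'' by construction, whereas in your version you must verify continuity of each factor's derivative separately (which, as you note, follows by the same domination). When writing this up, state the Banach-valued product rule explicitly --- if $\alpha\in C^1((0,L_{\operatorname{FA}}),\mathbb{C})$ and $\boldsymbol{w}\in C^1((0,L_{\operatorname{FA}}),L^2(\mathbb{R},\mathbb{C}^2))$ then $\alpha\boldsymbol{w}$ is $C^1$ with derivative $\alpha'\boldsymbol{w}+\alpha\,\boldsymbol{w}'$ --- since it is the one step your sketch uses without proof.
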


\begin{proof}
       We show that $F$ is differentiable with $F'(t) = \partial_t \mathcal{L}_{\text{FA}}(t)\boldsymbol{v}$. The proof that $F'$ is continuous is similar. By Hypothesis~\ref{hyp:conditions on the solution of FA}, $\mathcal{L}_{\text{FA}}(t)\boldsymbol{v}$, $\partial_t \mathcal{L}_{\text{FA}}(t)\boldsymbol{v} \in L^2(\mathbb{R},\mathbb{C}^2)$. 
 In Appendix~\ref{AppendixC1}, we show that
    \begin{equation}
        \begin{aligned}
        &\norm{F(t+h) - F(t) -hF'(t)}_{L^2(\mathbb{R},\mathbb{C}^2)}\\
        \leq\; & \bigg\{ 2\sqrt{2}h G_1(h) + 2\sqrt{2}h G_2(h) + \frac{g_0 C}{E_{\text{sat}}} h \norm{\boldsymbol{\psi}(t+h)}_{H^2(\mathbb{R},\mathbb{C}^2)} G_3(h)\\
        &+ \frac{2g_0^2 C}{E_{\text{sat}}^2}  h^2 \sup_{\tau \in (t,t+h)} \abs{E^{\prime}(\tau)} \norm{\boldsymbol{\psi}(\tau)}_{H^2(\mathbb{R},\mathbb{C}^2)} \norm{\partial_t \boldsymbol{\psi}(t)}_{H^2(\mathbb{R},\mathbb{C}^2)}\\
        &+ \frac{2g_0 C}{E_{\text{sat}}}h^2 \sup_{\tau \in (t,t+h)} \norm{\partial_t \boldsymbol{\psi}(\tau)}_{H^2(\mathbb{R},\mathbb{C}^2)} \norm{\partial_t \boldsymbol{\psi}(t)}_{H^2(\mathbb{R},\mathbb{C}^2)}\\   &+ h G_4(h) \norm{\boldsymbol{\psi}(t)}_{H^2(\mathbb{R},\mathbb{C}^2)} \bigg\} \norm{\boldsymbol{v}}_{H^2(\mathbb{R},\mathbb{C}^2)},
        \end{aligned}
        \label{eq:Bound on F' new}
    \end{equation}
    where
    \begin{equation}
        \begin{aligned}
            &G_1(h) = \sup_{\tau \in (t,t+h)} \norm{(\partial_t \textbf{B})(\tau) - (\partial_t \textbf{B})(t)}_{\mathbb{C}^{2 \times 2}},\\
            &G_2(h) = \sup_{(\tau, x) \in (t,t+h)\times \mathbb{R}} \norm{(\partial_t \widetilde{\textbf{M}_1})(\tau, x) - (\partial_t \widetilde{\textbf{M}_1})(t, x)}_{\mathbb{C}^{2 \times 2}},\\
            &G_3(h) = \sup_{\tau \in (t,t+h)} \norm{(\partial_t \boldsymbol{\psi})(\tau) - (\partial_t \boldsymbol{\psi})(t)}_{H^2(\mathbb{R},\mathbb{C}^2)},\\
            &G_4(h) = \sup_{\tau \in (t,t+h)} \norm{(\partial_t \boldsymbol{\phi})(\tau) - (\partial_t \boldsymbol{\phi})(t)}_{L^2(\mathbb{R},\mathbb{C}^2)}.
        \end{aligned}
        \label{eq:Gestimates}
    \end{equation}
    Next, we observe that $\exists C>0$ such that
    \begin{equation}
        G_1(h) = C \sup_{\tau \in (t,t+h)} \abs{g^2(\tau) E^{\prime}(\tau) - g^2(t) E^{\prime}(t)}.
    \end{equation}
    By Hypothesis~\ref{hyp:conditions on the solution of FA} and the differentiation under the integral sign theorem \cite{jones2001lebesgue}, $g$ and $E^{\prime}$ are $C^1$ which implies that $G_1(h) \to 0$ as $h \to 0$. Also by Hypothesis~\ref{hyp:conditions on the solution of FA}, and applying the Lebesgue dominated convergence theorem as needed, we conclude that $G_j(h) \to 0$ as $h \to 0$ for $j = 2,3,4$. Consequently, 
    \begin{equation}
        \norm{F(t+h) - F(t) -hF^{\prime}(t)}_{L^2(\mathbb{R},\mathbb{C}^2)} \leq hG(h),
    \end{equation}
    where $\lim_{h\to 0} G(h) = 0$. Hence, $F$ is differentiable as required.
    \end{proof}
    

\section{Spectrum of a Multiplication Operator on $L^2(\mathbb R, \mathbb C^2)$}
\label{sec:Formula for the Spectrum of a Multiplication Operator}
The essential spectrum of the asymptotic linearized operator, $\mathcal{M}_{\infty}$, is equal to the spectrum of its Fourier transform, $\widehat{\mathcal{M}}_{\infty}$, which is a multiplication operator on $L^2(\mathbb{R}, \mathbb{C}^2)$. 
In this section, we derive a formula for the spectrum of a general class of multiplication operators on $L^2(\mathbb{R}, \mathbb{C}^2)$. 
The proof  is based on that of a similar well-known formula for multiplication operators on $L^2(\mathbb{R}, \mathbb{C})$~\cite[Prop. 4.2]{engel2001one}. 

\begin{definition}
\label{d:Multiplication operator}
Let  $\textbf{Q} : \mathbb{R} \to \mathbb{C}^{2 \times 2}$. The \emph{multiplication operator}, $\mathcal{M}_{\textbf{Q}}$, induced on $L^2(\mathbb{R}, \mathbb{C}^2)$ by $\textbf{Q}$ is 
defined by
\begin{align}
(\mathcal{M}_{\textbf{Q}} \boldsymbol{w})(x) &:= \textbf{Q}(x)\boldsymbol{w}(x) \text{ for all } \boldsymbol{w} \text{ in the domain}\\
D(\mathcal M_{\textbf{Q}})  &= \{ \boldsymbol{w}  \in L^2(\mathbb{R}, \mathbb{C}^2) \,
:\, \textbf{Q} \boldsymbol{w}
\in L^2(\mathbb{R}, \mathbb{C}^2) \}.
\end{align}
\end{definition}

\begin{proposition}\label{prop:MQ is densely defined and closed}
If $\textbf{Q} \in L^\infty(\mathbb R, \mathbb C^{2\times 2})$, then  $\mathcal{M}_{\textbf{Q}}$ is everywhere defined, bounded and closed, with
\begin{equation}
\norm{\mathcal{M}_{\textbf{Q}}}_{\mathcal{B}(L^2(\mathbb{R}, \mathbb{C}^2)))} \leq \norm{\textbf{Q}}_{\infty},
 \end{equation}
where
\begin{equation}
    \norm{\textbf{Q}}_{\infty} := \sup_{x \in \mathbb{R}} \norm{\textbf{Q}(x)}_{\mathbb{C}^{2\times 2}}.
    \label{eq:inifinity norm}
\end{equation}
\end{proposition}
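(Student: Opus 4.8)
The plan is to reduce all three assertions --- everywhere defined, bounded, and closed --- to a single pointwise matrix--vector estimate. The starting point is the inequality
\begin{equation}
\norm{\textbf{Q}(x)\boldsymbol{w}(x)}_{\mathbb{C}^2} \;\le\; \norm{\textbf{Q}(x)}_{\mathbb{C}^{2\times 2}}\,\norm{\boldsymbol{w}(x)}_{\mathbb{C}^2},
\end{equation}
valid for almost every $x\in\mathbb{R}$, where $\norm{\cdot}_{\mathbb{C}^{2\times 2}}$ is understood as the operator norm induced by the Euclidean norm on $\mathbb{C}^2$ (this is the convention that makes the constant equal to one). Since $\textbf{Q}\in L^\infty(\mathbb{R},\mathbb{C}^{2\times 2})$ and $\boldsymbol{w}\in L^2(\mathbb{R},\mathbb{C}^2)$ is measurable, the map $x\mapsto\textbf{Q}(x)\boldsymbol{w}(x)$ is measurable, so $\norm{\textbf{Q}(\cdot)\boldsymbol{w}(\cdot)}_{\mathbb{C}^2}$ is a well-defined nonnegative measurable function whose square may be integrated.

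Next I would integrate the squared pointwise bound over $\mathbb{R}$ and use the definition \eqref{eq:inifinity norm} of $\norm{\textbf{Q}}_\infty$ to obtain
\begin{equation}
\norm{\mathcal{M}_{\textbf{Q}}\boldsymbol{w}}_{L^2(\mathbb{R},\mathbb{C}^2)}^2
= \int_{\mathbb{R}} \norm{\textbf{Q}(x)\boldsymbol{w}(x)}_{\mathbb{C}^2}^2\,dx
\le \int_{\mathbb{R}} \norm{\textbf{Q}(x)}_{\mathbb{C}^{2\times 2}}^2\,\norm{\boldsymbol{w}(x)}_{\mathbb{C}^2}^2\,dx
\le \norm{\textbf{Q}}_\infty^2\,\norm{\boldsymbol{w}}_{L^2(\mathbb{R},\mathbb{C}^2)}^2 .
\end{equation}
This shows that $\textbf{Q}\boldsymbol{w}\in L^2(\mathbb{R},\mathbb{C}^2)$ for \emph{every} $\boldsymbol{w}\in L^2(\mathbb{R},\mathbb{C}^2)$, so $D(\mathcal{M}_{\textbf{Q}})=L^2(\mathbb{R},\mathbb{C}^2)$, i.e.\ $\mathcal{M}_{\textbf{Q}}$ is everywhere defined. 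Taking square roots then yields the operator-norm bound $\norm{\mathcal{M}_{\textbf{Q}}}_{\mathcal{B}(L^2(\mathbb{R},\mathbb{C}^2))}\le\norm{\textbf{Q}}_\infty$, which is precisely the boundedness claim.

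Finally, closedness is automatic for a bounded, everywhere-defined operator: if $\boldsymbol{w}_n\to\boldsymbol{w}$ and $\mathcal{M}_{\textbf{Q}}\boldsymbol{w}_n\to\boldsymbol{v}$ in $L^2(\mathbb{R},\mathbb{C}^2)$, then continuity of $\mathcal{M}_{\textbf{Q}}$ forces $\mathcal{M}_{\textbf{Q}}\boldsymbol{w}_n\to\mathcal{M}_{\textbf{Q}}\boldsymbol{w}$, so $\boldsymbol{v}=\mathcal{M}_{\textbf{Q}}\boldsymbol{w}$ with $\boldsymbol{w}\in D(\mathcal{M}_{\textbf{Q}})$, and hence the graph of $\mathcal{M}_{\textbf{Q}}$ is closed. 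I expect no genuine obstacle here; the proof is elementary. The only points requiring care are fixing the matrix-norm convention so that the induced-operator-norm inequality holds with constant one --- thereby giving the sharp bound $\norm{\textbf{Q}}_\infty$ rather than one involving a Frobenius-norm factor --- and verifying measurability so that the integral above is legitimate.
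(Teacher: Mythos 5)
Your proof is correct, and in fact the paper states this proposition without any proof at all (it is the standard multiplication-operator estimate), so your argument---the a.e.\ pointwise bound $\norm{\textbf{Q}(x)\boldsymbol{w}(x)}_{\mathbb{C}^2}\le\norm{\textbf{Q}(x)}_{\mathbb{C}^{2\times 2}}\norm{\boldsymbol{w}(x)}_{\mathbb{C}^2}$, integration, and the observation that a bounded, everywhere-defined operator is automatically closed---is precisely the routine verification the authors left to the reader. Your two points of care are also well taken and consistent with the paper's conventions: since $\textbf{Q}\in L^\infty(\mathbb{R},\mathbb{C}^{2\times 2})$ the supremum in \eqref{eq:inifinity norm} must be read as an essential supremum (your a.e.\ formulation handles this), and interpreting $\norm{\cdot}_{\mathbb{C}^{2\times 2}}$ as the operator norm induced by the Euclidean norm matches the paper's later use of the inequality $\norm{\cdot}_{\mathbb{C}^{2\times 2}}\le\norm{\cdot}_F$ in Proposition~\ref{prop:Inverse of Q is bounded}, and is what gives the sharp constant $\norm{\textbf{Q}}_{\infty}$ (indeed Proposition~\ref{prop:MQ is bounded iff Q is bounded} shows equality for continuous $\textbf{Q}$, so no Frobenius factor can be tolerated here).
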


We now state the main result of this section.

\begin{theorem}
\label{thm:Formula for spectrum of MQ}
    Let $\textbf{Q} \in L^{\infty}(\mathbb{R}, \mathbb{C}^{2 \times 2}) \cap C^0(\mathbb{R}, \mathbb{C}^{2 \times 2})$. If $\norm{\textbf{Q}(x)}_{\mathbb{C}^{2 \times 2}} \to 0$ as $x \to \pm\infty$, then the spectrum of $\mathcal{M}_{\textbf{Q}}$ is 
 given by
    \begin{equation}
    \begin{aligned}
        \sigma(\mathcal{M}_{\textbf{Q}}) &= \{ \lambda \in \mathbb{C} \,:\, \exists x \in \mathbb{R} \text{ such that } \emph{det}(\lambda \textbf{I} - \textbf{Q}(x)) = 0 \} \cup \{0\}\\
        &= \{ \lambda \in \mathbb{C} \,:\, \exists x \in \mathbb{R} \text{ such that } \lambda \in \sigma(\textbf{Q}(x)) \} \cup \{0\}.
    \end{aligned}
    \end{equation}
\end{theorem}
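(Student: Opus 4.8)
The plan is to study $\mathcal{M}_{\textbf{Q}} - \lambda = \mathcal{M}_{\textbf{Q} - \lambda \textbf{I}}$ pointwise in $x$, thereby reducing invertibility of the operator to invertibility of the matrices $\textbf{Q}(x) - \lambda \textbf{I}$ together with a uniform bound on their inverses. The crucial tool is Proposition~\ref{prop:MQ is densely defined and closed}: a multiplication operator induced by an $L^\infty$ matrix function is bounded, with operator norm controlled by the essential supremum of the pointwise matrix norm. Writing $S = \{\lambda \in \mathbb C : \exists\, x \text{ with } \det(\lambda\textbf{I} - \textbf{Q}(x)) = 0\} \cup \{0\}$, I would prove the two inclusions $\mathbb{C}\setminus S \subseteq \rho(\mathcal{M}_{\textbf{Q}})$ and $S \subseteq \sigma(\mathcal{M}_{\textbf{Q}})$ separately; together these give the theorem, and the elementary equivalence $\det(\lambda\textbf{I} - \textbf{Q}(x)) = 0 \iff \lambda \in \sigma(\textbf{Q}(x))$ identifies the two forms of the right-hand side.

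For $\mathbb{C}\setminus S \subseteq \rho(\mathcal{M}_{\textbf{Q}})$, fix $\lambda \notin S$. Then $\lambda \neq 0$ and $\textbf{Q}(x) - \lambda\textbf{I}$ is invertible for every $x$, so the matrix function $\textbf{R}(x) := (\textbf{Q}(x) - \lambda\textbf{I})^{-1}$ is defined on all of $\mathbb{R}$. Since $\textbf{Q}$ is continuous and matrix inversion is continuous on the invertible matrices, $\textbf{R}$ is continuous; and since $\textbf{Q}(x) \to 0$ as $x \to \pm\infty$, we have $\textbf{R}(x) \to -\lambda^{-1}\textbf{I}$, a \emph{finite} limit precisely because $\lambda \neq 0$. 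A continuous matrix function with finite limits at $\pm\infty$ is bounded, so $\textbf{R} \in L^\infty(\mathbb{R}, \mathbb{C}^{2\times 2})$. By Proposition~\ref{prop:MQ is densely defined and closed}, $\mathcal{M}_{\textbf{R}}$ is bounded, and by construction it is the two-sided inverse of $\mathcal{M}_{\textbf{Q}} - \lambda$, whence $\lambda \in \rho(\mathcal{M}_{\textbf{Q}})$. This is where the continuity and decay hypotheses are used essentially, and it is the main departure from the scalar case, since one must bound the operator norm of a matrix inverse rather than a reciprocal.

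For $S \subseteq \sigma(\mathcal{M}_{\textbf{Q}})$, I would exhibit a singular (Weyl) sequence. Let $\{\phi_n\} \subset L^2(\mathbb{R}, \mathbb{C})$ be $L^2$-normalized bumps with $\operatorname{supp}\phi_n \subset (x_n - 1/n,\, x_n + 1/n)$, and set $\boldsymbol{w}_n(x) = \phi_n(x)\boldsymbol{v}_n$ for unit vectors $\boldsymbol{v}_n \in \mathbb{C}^2$, so that $\norm{\boldsymbol{w}_n} = 1$ and
\begin{equation*}
\norm{(\mathcal{M}_{\textbf{Q}} - \lambda)\boldsymbol{w}_n}^2 \;\leq\; \sup_{\abs{x - x_n} < 1/n} \norm{(\textbf{Q}(x) - \lambda\textbf{I})\boldsymbol{v}_n}^2 .
\end{equation*}
If $\lambda \in \sigma(\textbf{Q}(x_0))$ for some finite $x_0$, take $x_n \equiv x_0$ and $\boldsymbol{v}_n \equiv \boldsymbol{v}_0$, a unit eigenvector of $\textbf{Q}(x_0)$ for $\lambda$ (which exists since $\textbf{Q}(x_0) - \lambda\textbf{I}$ is singular); then continuity of $\textbf{Q}$ and $(\textbf{Q}(x_0) - \lambda\textbf{I})\boldsymbol{v}_0 = 0$ force the right-hand side to $0$. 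If instead $\lambda = 0$, choose $x_n \to +\infty$ and any fixed unit $\boldsymbol{v}$; the decay $\norm{\textbf{Q}(x)}\to 0$ forces the right-hand side to $0$. In either case $\mathcal{M}_{\textbf{Q}} - \lambda$ fails to be bounded below, so $\lambda \in \sigma(\mathcal{M}_{\textbf{Q}})$.

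I expect the main obstacle to be the uniform resolvent bound in the first inclusion, namely verifying that $\textbf{R}$ is bounded on all of $\mathbb{R}$ rather than merely locally; the finite-limit argument at $\pm\infty$ is what supplies this and is exactly where the vanishing-at-infinity hypothesis enters. The remaining vector-valued technicalities, choosing eigenvectors of $\textbf{Q}(x_0)$ and working with matrix norms in place of absolute values, are routine once this bound is established.
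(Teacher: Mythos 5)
Your proof is correct, and it takes a genuinely different route from the paper's. The paper funnels everything through a standalone invertibility criterion (Proposition~\ref{prop:MQ has bounded inverse iff Q has bounded inverse}): $\mathcal{M}_{\textbf{Q}}$ has a bounded inverse if and only if $0 \notin \overline{\Im(\det \textbf{Q})}$, proved via a geometric matrix lemma (Proposition~\ref{prop:relation between Qu and detQ}, the existence of a unit vector $\boldsymbol{u}$ with $\norm{\textbf{Q}\boldsymbol{u}}^2_{\mathbb{C}^2} \leq \abs{\det \textbf{Q}}$, established by QR decomposition) together with approximate-delta test functions $\sqrt{g_{s,\delta}}\,\boldsymbol{u}(x)$; the vanishing-at-infinity hypothesis enters only in the last step, where a quadratic-formula limit along an unbounded sequence $x_n$ shows that any spectral point at which the determinant never actually vanishes must be $\lambda = 0$. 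You instead deploy the decay hypothesis immediately and in both directions: for $\lambda \notin S$ you construct the resolvent explicitly as the multiplication operator $\mathcal{M}_{(\textbf{Q}-\lambda\textbf{I})^{-1}}$, with the uniform bound on $(\textbf{Q}(x)-\lambda\textbf{I})^{-1}$ supplied by continuity of inversion plus the finite limit $-\lambda^{-1}\textbf{I}$ at $\pm\infty$ (exactly where $\lambda \neq 0$ is essential), and for $S \subseteq \sigma(\mathcal{M}_{\textbf{Q}})$ you use localized Weyl sequences, built from an eigenvector of $\textbf{Q}(x_0)$ at finite spectral points and from escaping bumps at $\lambda = 0$. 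Your Weyl-sequence step is morally the same bump construction as the paper's Lemma~\ref{prop:limit of integral of psi times g}, but applied directly to $\textbf{Q} - \lambda\textbf{I}$ rather than filtered through determinant estimates, which is why you never need the inequality $\norm{\textbf{Q}\boldsymbol{u}}^2_{\mathbb{C}^2} \leq \abs{\det\textbf{Q}}$ at all. What your route buys: it is shorter and more elementary, avoids the determinant and QR machinery entirely, and generalizes verbatim to $n \times n$ matrix symbols, where the paper's $2\times 2$ determinant arguments (in particular the quadratic-formula step) would need reworking. What the paper's route buys: its intermediate propositions are results of independent interest that hold without any decay assumption --- the exact norm identity $\norm{\mathcal{M}_{\textbf{Q}}}_{\mathcal{B}(L^2(\mathbb{R},\mathbb{C}^2))} = \norm{\textbf{Q}}_{\infty}$ of Proposition~\ref{prop:MQ is bounded iff Q is bounded} and the invertibility criterion for arbitrary bounded continuous $\textbf{Q}$ --- so the hypothesis $\norm{\textbf{Q}(x)}_{\mathbb{C}^{2\times 2}} \to 0$ is isolated to the single final step identifying the extra point $\{0\}$, whereas in your argument it is woven through both inclusions.
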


The proof of Theorem~\ref{thm:Formula for spectrum of MQ} relies on several preliminary results.
First, Proposition~\ref{prop:MQ is densely defined and closed} can be improved upon as follows. 

\begin{proposition}
\label{prop:MQ is bounded iff Q is bounded}
Suppose that 
$\textbf{Q} \in C^0(\mathbb{R}, \mathbb{C}^{2 \times 2})$. Then,
the operator $\mathcal{M}_{\textbf{Q}}$ is bounded if and only if $\textbf{Q}$ is bounded. In this case, 
\begin{equation}
    \norm{\mathcal{M}_{\textbf{Q}}}_{\mathcal{B}(L^2(\mathbb{R}, \mathbb{C}^2)))} = \norm{\textbf{Q}}_{\infty}.
    \label{eq:norm of MQ is infinity norm of Q}
\end{equation}
\end{proposition}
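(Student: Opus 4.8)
The plan is to prove the two inequalities $\norm{\mathcal{M}_{\textbf{Q}}}_{\mathcal{B}(L^2(\mathbb{R},\mathbb{C}^2))} \leq \norm{\textbf{Q}}_{\infty}$ and $\norm{\mathcal{M}_{\textbf{Q}}}_{\mathcal{B}(L^2(\mathbb{R},\mathbb{C}^2))} \geq \norm{\textbf{Q}}_{\infty}$, interpreting both sides as elements of $[0,\infty]$, so that finiteness of one side is equivalent to finiteness of the other. The upper bound is immediate: when $\textbf{Q}$ is bounded, Proposition~\ref{prop:MQ is densely defined and closed} already gives $\norm{\mathcal{M}_{\textbf{Q}}} \leq \norm{\textbf{Q}}_{\infty}$, and in particular shows that boundedness of $\textbf{Q}$ implies boundedness of $\mathcal{M}_{\textbf{Q}}$. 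So the content of the proposition lies entirely in the reverse inequality, which simultaneously delivers the norm identity \eqref{eq:norm of MQ is infinity norm of Q} and (by contraposition) the implication that boundedness of $\mathcal{M}_{\textbf{Q}}$ forces boundedness of $\textbf{Q}$.

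To establish the reverse inequality I would construct approximate eigenvectors localized near a point where the pointwise matrix norm is nearly maximal. Fix any level $c < \norm{\textbf{Q}}_{\infty}$, allowing the possibility $\norm{\textbf{Q}}_{\infty} = \infty$. By the definition of the supremum in \eqref{eq:inifinity norm}, there exists $x_0 \in \mathbb{R}$ with $\norm{\textbf{Q}(x_0)}_{\mathbb{C}^{2\times 2}} > c$. Since $\norm{\textbf{Q}(x_0)}_{\mathbb{C}^{2\times 2}}$ is the operator norm of the matrix $\textbf{Q}(x_0)$ acting on $\mathbb{C}^2$, which is attained on the unit sphere by compactness, I can select a unit vector $\boldsymbol{\xi} \in \mathbb{C}^2$ with $\norm{\textbf{Q}(x_0)\boldsymbol{\xi}} > c$.

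The key step, and the place where the vector-valued setting departs from the scalar case, is to transfer this single-point estimate to an entire interval. Because $\textbf{Q}$ is continuous, the map $x \mapsto \textbf{Q}(x)\boldsymbol{\xi}$ is continuous into $\mathbb{C}^2$, so there is a $\delta > 0$ with $\norm{\textbf{Q}(x)\boldsymbol{\xi}} > c$ for all $x$ in the interval $I = (x_0 - \delta, x_0 + \delta)$. I then test $\mathcal{M}_{\textbf{Q}}$ against the normalized indicator $\boldsymbol{w} = (2\delta)^{-1/2}\,\mathbf{1}_I\,\boldsymbol{\xi}$, which lies in $L^2(\mathbb{R},\mathbb{C}^2)$ with $\norm{\boldsymbol{w}}_{L^2} = 1$. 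A direct computation gives $\norm{\mathcal{M}_{\textbf{Q}}\boldsymbol{w}}_{L^2}^2 = (2\delta)^{-1}\int_I \norm{\textbf{Q}(x)\boldsymbol{\xi}}^2\,dx \geq c^2$, hence $\norm{\mathcal{M}_{\textbf{Q}}} \geq c$. Letting $c \uparrow \norm{\textbf{Q}}_{\infty}$ yields $\norm{\mathcal{M}_{\textbf{Q}}} \geq \norm{\textbf{Q}}_{\infty}$. When $\norm{\textbf{Q}}_{\infty} = \infty$ this shows $\mathcal{M}_{\textbf{Q}}$ is unbounded, which is the contrapositive of the remaining implication; when $\norm{\textbf{Q}}_{\infty} < \infty$ it combines with the upper bound to give \eqref{eq:norm of MQ is infinity norm of Q}.

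The main obstacle is exactly this localization of a \emph{single} near-optimal direction $\boldsymbol{\xi}$ over a whole neighborhood. The norm-maximizing direction of $\textbf{Q}(x)$ may vary with $x$ in a way that is hard to control, so the right move is to freeze $\boldsymbol{\xi}$ at the chosen point $x_0$ and exploit continuity of the \emph{fixed-direction} map $x \mapsto \textbf{Q}(x)\boldsymbol{\xi}$ rather than continuity of the operator norm; this is what makes the test function argument go through cleanly in the vector-valued case.
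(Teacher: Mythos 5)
Your proof is correct, and it takes a genuinely different route from the paper's. The paper also reduces everything to the lower bound $\norm{\textbf{Q}}_{\infty} \leq \norm{\mathcal{M}_{\textbf{Q}}}$, but it tests against $\boldsymbol{w}_{s,\delta}(x) = \sqrt{g_{s,\delta}(x)}\,\boldsymbol{v}(x)$, where $g_{s,\delta}$ is an approximate Dirac family (Lemma~\ref{prop:limit of integral of psi times g}) and $\boldsymbol{v}(x)$ is a \emph{pointwise} norm-attaining direction field with $\norm{\textbf{Q}(x)\boldsymbol{v}(x)}_{\mathbb{C}^2} = \norm{\textbf{Q}(x)}_{\mathbb{C}^{2\times 2}}$ for every $x$; it then applies the delta lemma with $\phi(x) = \norm{\textbf{Q}(x)}_{\mathbb{C}^{2\times 2}}^2$ to get $\norm{\mathcal{M}_{\textbf{Q}}}^2 > \norm{\textbf{Q}(s)}^2 - \epsilon$ for each $s$. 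Your version instead freezes a single direction $\boldsymbol{\xi}$ at a near-optimal point $x_0$, uses continuity of the fixed-direction map $x \mapsto \textbf{Q}(x)\boldsymbol{\xi}$, and tests with a normalized indicator. This is more elementary and quietly repairs two technicalities the paper glosses over: first, the paper's direction field $\boldsymbol{v}(\cdot)$ must be chosen measurably for $\boldsymbol{w}_{s,\delta}$ to be a legitimate element of $L^2(\mathbb{R},\mathbb{C}^2)$ (a measurable-selection point the paper does not address), whereas your $\boldsymbol{\xi}$ is constant; second, the delta lemma is stated for $\phi \in L^{\infty}(\mathbb{R}) \cap C^0(\mathbb{R})$, but in the converse direction $\phi = \norm{\textbf{Q}(\cdot)}^2$ is not yet known to be bounded --- that is precisely what is being proved --- so as written the lemma does not strictly apply there, while your level-exhaustion over $c < \norm{\textbf{Q}}_{\infty}$ handles the case $\norm{\textbf{Q}}_{\infty} = \infty$ uniformly. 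One small point worth making explicit in your write-up: your test function $\boldsymbol{w} = (2\delta)^{-1/2}\,\mathbf{1}_I\,\boldsymbol{\xi}$ does lie in $D(\mathcal{M}_{\textbf{Q}})$, because continuity of $\textbf{Q}$ makes it bounded on the compact closure of $I$, so $\textbf{Q}\boldsymbol{w}$ is bounded with compact support and hence in $L^2$; this is needed so that the operator-norm lower bound is legitimate even when $\mathcal{M}_{\textbf{Q}}$ is only densely defined. With that remark added, your argument is complete and, if anything, tighter than the paper's.
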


The proof of this proposition relies on the following well-known result on the Dirac delta distribution.

\begin{lemma}
\label{prop:limit of integral of psi times g}
Let $g \in L^1(\mathbb{R})$ with $\int_{\mathbb{R}} g(x)dx = 1$. Set $g_{s,\delta}(x) = \frac{1}{\delta} g\left(\frac{x-s}{\delta}\right)$, where $\delta>0$. Then   $\lim_{\delta \to 0} \int_{\mathbb{R}} \phi(x) g_{s,\delta}(x) dx = \phi(s)$ for all $\phi \in L^{\infty}(\mathbb{R}) \cap C^0(\mathbb{R})$. That is,  for every $\epsilon>0$, there exists $\widetilde{\delta} = \widetilde{\delta}(\epsilon, \phi)$ such that 
\begin{equation}
    \phi(s) - \epsilon \leq \int_{\mathbb{R}} \phi(x) g_{s,\delta}(x) dx \leq \phi(s) + \epsilon, \quad \text{whenever } \delta \leq \widetilde{\delta}.
    \label{eq:limit of integral of psi times g}
\end{equation}
\end{lemma}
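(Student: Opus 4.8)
The plan is to prove this classical approximate-identity estimate by performing a change of variables that converts the integral against $g_{s,\delta}$ into an integral against $g$, and then splitting the domain into a bounded region near the origin, where continuity of $\phi$ is used, and its complement, where only the boundedness of $\phi$ and the integrability of $g$ are used. First I would record, via the substitution $y = (x-s)/\delta$, that $\int_{\mathbb R} g_{s,\delta}(x)\,dx = \int_{\mathbb R} g(y)\,dy = 1$, so that the error can be written as
\begin{equation}
\int_{\mathbb R}\phi(x)\, g_{s,\delta}(x)\,dx - \phi(s) = \int_{\mathbb R}\bigl[\phi(s+\delta y) - \phi(s)\bigr]\, g(y)\,dy.
\end{equation}
The task then reduces to showing that the right-hand side is smaller than $\epsilon$ in absolute value once $\delta$ is sufficiently small.

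Next, I would fix $\epsilon > 0$ and split the last integral over $\{|y| \le R\}$ and $\{|y| > R\}$ for a radius $R$ to be chosen. On the far region I use only that $\phi \in L^\infty$: the integrand is dominated by $2\norm{\phi}_{\infty}|g(y)|$, and since $g \in L^1(\mathbb R)$, I may choose $R$ so large that $\int_{|y|>R}|g(y)|\,dy$ is small enough that this contribution is at most $\epsilon/2$. On the near region I invoke continuity of $\phi$ at the single point $s$: there is $\eta > 0$ with $|\phi(x)-\phi(s)| < \epsilon/(2(\norm{g}_{1}+1))$ whenever $|x-s| < \eta$. Taking $\widetilde\delta = \eta/R$ guarantees $|\delta y| \le \delta R \le \eta$ for all $|y| \le R$ and all $\delta \le \widetilde\delta$, so the near contribution is bounded by $\frac{\epsilon}{2(\norm{g}_{1}+1)}\norm{g}_{1} < \epsilon/2$. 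Adding the two pieces gives $\bigl|\int_{\mathbb R}\phi\, g_{s,\delta}\,dx - \phi(s)\bigr| < \epsilon$ for all $\delta \le \widetilde\delta$, and since this bound is two-sided in sign it is exactly the displayed inequality, with the stated dependence $\widetilde\delta = \widetilde\delta(\epsilon,\phi)$.

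The argument is entirely routine, and I do not expect a genuine obstacle; the only point that needs care is the \emph{order} in which the two parameters are selected. The radius $R$ must be fixed first, using integrability of $g$ and the global bound on $\phi$ alone, and only afterwards may $\widetilde\delta$ be chosen in terms of $R$ and the modulus of continuity of $\phi$ at $s$. Reversing this order would make $\widetilde\delta$ depend implicitly on $\delta$ itself. It is worth emphasizing that no uniform continuity of $\phi$ is required: continuity at the single point $s$, together with the $L^1$-tail bound on $g$ and the $L^\infty$-bound on $\phi$, is exactly what the hypotheses $\phi \in L^\infty(\mathbb R) \cap C^0(\mathbb R)$ supply.
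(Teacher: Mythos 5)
Your proof is correct. Note that the paper offers no proof of this lemma at all: it is invoked as a ``well-known result on the Dirac delta distribution,'' so your write-up simply supplies the omitted routine details, and the argument you give (substitution $y=(x-s)/\delta$ to rewrite the error as $\int[\phi(s+\delta y)-\phi(s)]g(y)\,dy$, tail region controlled by $\lVert\phi\rVert_\infty$ and the $L^1$-tail of $g$, near region controlled by continuity of $\phi$ at $s$, with $R$ fixed before $\widetilde\delta$) is the standard one. Two minor observations. First, your $\widetilde\delta$ necessarily depends on $s$ through the local modulus of continuity $\eta$, since $\phi$ is only assumed continuous; the lemma's notation $\widetilde\delta(\epsilon,\phi)$ suppresses this, but the paper itself writes $\widetilde\delta(\epsilon,s)$ when it applies the lemma in the proof of Proposition~\ref{prop:MQ is bounded iff Q is bounded}, so your version matches the actual use. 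Second, with $\widetilde\delta=\eta/R$ you only get $|\delta y|\le\eta$ on the closed ball, where continuity gives the non-strict bound $|\phi(s+\delta y)-\phi(s)|\le \epsilon/(2(\lVert g\rVert_{1}+1))$; this is harmless since your final estimate $\lVert g\rVert_{1}/(\lVert g\rVert_{1}+1)<1$ still yields a contribution strictly below $\epsilon/2$, but taking $\widetilde\delta=\eta/(2R)$ would avoid the point entirely. Your closing remark correctly identifies the one place care is needed (fixing $R$ before $\widetilde\delta$), and your handling of possibly signed $g$ via $|g|$ and $\lVert g\rVert_1$ is consistent with the lemma's hypotheses, which do not assume $g\ge 0$.
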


\begin{proof}[Proof of Proposition~\ref{prop:MQ is bounded iff Q is bounded}]
    If $\textbf{Q}$ is bounded, then $\mathcal{M}_{\textbf{Q}}$ is bounded by Proposition~\ref{prop:MQ is densely defined and closed}. Conversely, suppose $\mathcal{M}_{\textbf{Q}}$ is bounded. Then,
        \begin{equation}
        \norm{\mathcal{M}_{\textbf{Q}}}_{\mathcal{B}(L^2(\mathbb{R},\mathbb{C}^2))} \geq \norm{\mathcal{M}_{\textbf{Q}} \boldsymbol{w}}_{L^2(\mathbb{R},\mathbb{C}^2)},
    \end{equation}
    for all $\boldsymbol{w} \in L^2(\mathbb{R},\mathbb{C}^2)$ with $\norm{\boldsymbol{w}}_{L^2(\mathbb{R},\mathbb{C}^2)} = 1$.
    Fix $s \in \mathbb{R}$ and choose $\boldsymbol{w}(x) = \boldsymbol{w}_{s,\delta}(x) = \sqrt{g_{s,\delta}(x)}\boldsymbol{v}(x)$, for some vector $\boldsymbol{v}(x) \in \mathbb{C}^2$ and where $g_{s, \delta}$ is as in Proposition~\ref{prop:limit of integral of psi times g}.
    If we require that $\norm{\boldsymbol{v}(x)}_{\mathbb{C}^2} = 1$ for all $x$, then $\norm{\boldsymbol{w}}_{L^2(\mathbb{R}, \mathbb{C}^2)} = 1$ holds. Furthermore, 
     for each $x$, we can chose $\boldsymbol{v}(x)$ so that
    \begin{equation}
        \norm{\textbf{Q}(x)\boldsymbol{v}(x)}_{\mathbb{C}^2} = \norm{\textbf{Q}(x)}_{\mathbb{C}^{2 \times 2}}. 
    \end{equation}
    Then
    \begin{equation*}
        \norm{\mathcal{M}_{\textbf{Q}}}_{\mathcal{B}(L^2(\mathbb{R},\mathbb{C}^2))}^2 \geq \norm{\mathcal{M}_{\textbf{Q}} \boldsymbol{w}_{s,\delta}}_{L^2(\mathbb{R},\mathbb{C}^2)}^2  = \int_{\mathbb{R}} \norm{\textbf{Q}(x)}_{\mathbb{C}^{2\times 2}}^2 \; g_{s,\delta}(x) dx.
    \end{equation*}
    Let $\epsilon > 0$. Choosing $\phi(x) = \norm{\textbf{Q}(x)}_{\mathbb{C}^{2\times 2}}^2$ in Proposition~\ref{prop:limit of integral of psi times g} we find that there exists $\widetilde{\delta} = \widetilde{\delta}(\epsilon, s) > 0$ so that for all $\delta < \widetilde{\delta}$ 
    \begin{equation}
   \norm{\mathcal{M}_{\textbf{Q}}}_{\mathcal{B}(L^2(\mathbb{R},\mathbb{C}^2))}^2 \geq      \int_{\mathbb{R}} \norm{\textbf{Q}(x)}_{\mathbb{C}^{2\times 2}}^2 g_{s, \delta} (x) dx > \norm{\textbf{Q}(s)}_{\mathbb{C}^{2\times 2}}^2 - \epsilon.
    \end{equation}
     Therefore, 
    \begin{equation}
        \norm{\textbf{Q}}_{\infty} = \sup_{s \in \mathbb{R}} \norm{\textbf{Q}(s)}_{\mathbb{C}^{2\times 2}} \leq \norm{\mathcal{M}_{\textbf{Q}}}_{\mathcal{B}(L^2(\mathbb{R},\mathbb{C}^2))},
        \label{eq:Bound on infinity norm of Q}
    \end{equation}
    and so $\textbf{Q}$ is bounded, and \eqref{eq:norm of MQ is infinity norm of Q} holds by Proposition~\ref{prop:MQ is densely defined and closed}.
\end{proof}

Next, in Proposition~\ref{prop:Inverse of Q is bounded} and Proposition~\ref{prop:relation between Qu and detQ} we state some properties of a matrix valued function, $\textbf{Q} \in L^{\infty}(\mathbb{R}, \mathbb{C}^{2\times 2})$, which are used in the proof of Proposition~\ref{prop:MQ has bounded inverse iff Q has bounded inverse} below.

\begin{proposition}
\label{prop:Inverse of Q is bounded}
Let $\textbf{Q}:\mathbb{R} \to \mathbb{C}^{2\times 2}$ be continuous with $\norm{\textbf{Q}}_{\infty}<\infty$ and suppose that $0\notin \overline{\Im (\operatorname{\det} \textbf{Q})}$. Then $\textbf{Q}^{-1}:\mathbb{R} \to \mathbb{C}^{2\times 2}$ is continuous and $\norm{\textbf{Q}^{-1}}_{\infty}<\infty$. 
\end{proposition}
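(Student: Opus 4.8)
The plan is to reduce the statement to a uniform, quantitative version of the elementary fact that a nonsingular $2\times 2$ matrix has a continuous inverse. The crucial first step is to convert the topological hypothesis into a uniform lower bound on the determinant. Since $\det\textbf{Q}:\mathbb{R}\to\mathbb{C}$ is continuous, its range $\Im(\det\textbf{Q})$ is a subset of $\mathbb{C}$, and the assumption $0\notin\overline{\Im(\det\textbf{Q})}$ says precisely that $0$ lies at positive distance from this closed set. Hence I would set $\delta:=\operatorname{dist}(0,\overline{\Im(\det\textbf{Q})})=\inf_{x\in\mathbb{R}}\abs{\det\textbf{Q}(x)}>0$. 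In particular $\det\textbf{Q}(x)\neq0$ for every $x$, so each $\textbf{Q}(x)$ is invertible and $\textbf{Q}^{-1}(x)$ is well defined pointwise. Note that it is the closure in the hypothesis that does the work here: without it the infimum could be zero even though $\det\textbf{Q}$ never vanishes.

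Next I would make the inverse explicit via the $2\times2$ adjugate formula, $\textbf{Q}^{-1}(x)=(\det\textbf{Q}(x))^{-1}\operatorname{adj}(\textbf{Q}(x))$, where the entries of $\operatorname{adj}(\textbf{Q}(x))$ are, up to sign and permutation, exactly the entries of $\textbf{Q}(x)$. Consequently there is a universal constant $C$ (depending only on the chosen matrix norm on $\mathbb{C}^{2\times2}$) with $\norm{\operatorname{adj}(\textbf{Q}(x))}_{\mathbb{C}^{2\times2}}\le C\norm{\textbf{Q}(x)}_{\mathbb{C}^{2\times2}}\le C\norm{\textbf{Q}}_{\infty}$ for all $x$. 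Continuity of $\textbf{Q}^{-1}$ then follows immediately: the entries of $\operatorname{adj}(\textbf{Q})$ are polynomials in the continuous entries of $\textbf{Q}$, and the scalar prefactor $1/\det\textbf{Q}$ is continuous because $\det\textbf{Q}$ is continuous and nowhere zero.

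Finally, combining the uniform lower bound on the determinant with the uniform upper bound on the adjugate yields
\[
\norm{\textbf{Q}^{-1}}_{\infty}
=\sup_{x\in\mathbb{R}}\frac{\norm{\operatorname{adj}(\textbf{Q}(x))}_{\mathbb{C}^{2\times2}}}{\abs{\det\textbf{Q}(x)}}
\le\frac{C\norm{\textbf{Q}}_{\infty}}{\delta}<\infty,
\]
which is the desired conclusion. The argument contains no genuine obstacle; the one point requiring care is the very first step, namely extracting the strictly positive infimum $\delta$ from the closure hypothesis, after which everything reduces to the explicit $2\times2$ formulas together with the assumed boundedness $\norm{\textbf{Q}}_{\infty}<\infty$.
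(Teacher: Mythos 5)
Your proposal is correct and follows essentially the same route as the paper's proof: both extract a uniform lower bound $\inf_{x}\abs{\det\textbf{Q}(x)}>0$ from the hypothesis $0\notin\overline{\Im(\det\textbf{Q})}$, write $\textbf{Q}^{-1}$ via the explicit $2\times 2$ adjugate formula to get continuity, and bound $\norm{\textbf{Q}^{-1}}_{\infty}$ by a constant times $\norm{\textbf{Q}}_{\infty}$ divided by that lower bound. The only cosmetic difference is that the paper makes the constant explicit (a factor $4$ via the Frobenius norm) where you use a generic norm-equivalence constant $C$.
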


\begin{proof}
    Since, $0\notin \overline{\Im (\operatorname{\det} \textbf{Q})}$, there exists $\epsilon > 0$ such that $|\operatorname{\det} \textbf{Q}(x)| > \epsilon$, for all $x\in \mathbb{R}$. So,
    \begin{equation*}
        \textbf{Q}^{-1}(x) = \frac{1}{\operatorname{\det} Q(x)} 
        \begin{bmatrix} Q_{22}(x) & -Q_{12}(x) \\ -Q_{21}(x) & Q_{11}(x) \end{bmatrix}
    \end{equation*}
    exists and is continuous. Furthermore,
    \begin{equation}
        \norm{\textbf{Q}^{-1}(x)}_{\mathbb{C}^{2\times 2}}^2 \leq \norm{\textbf{Q}^{-1}(x)}_{F}^2
        = \frac{\norm{\textbf{Q}(x)}_{F}^2}{|\operatorname{\det} \textbf{Q}(x)|^2}
        \leq \frac{4\norm{\textbf{Q}(x)}_{\mathbb{C}^{2\times 2}}^2}{|\operatorname{\det} \textbf{Q}(x)|^2}
        \leq \frac{4}{\epsilon^2} \norm{\textbf{Q}}_{\infty}^2. 
        \qed    
    \end{equation}
\end{proof}

\begin{proposition}
\label{prop:relation between Qu and detQ}
Let $\textbf{Q} \in \mathbb{C}^{2\times 2}$ be a matrix. Then there exists a vector $\boldsymbol{u} \in \mathbb{C}^2$ with $\norm{\boldsymbol{u}}_{\mathbb{C}^2}=1$ so that 
\begin{equation}
    \norm{\textbf{Q}\boldsymbol{u}}_{\mathbb{C}^2}^2 \leq |\operatorname{\det} \textbf{Q}|.
    \label{eq:relation between Qu and detQ}
\end{equation}
\end{proposition}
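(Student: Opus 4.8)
The plan is to diagonalize the problem via the singular value decomposition of $\mathbf{Q}$. Writing $\mathbf{Q} = U\Sigma V^*$ with $U, V \in \mathbb{C}^{2\times 2}$ unitary and $\Sigma = \diag(\sigma_1, \sigma_2)$, where $\sigma_1 \ge \sigma_2 \ge 0$ are the singular values, I would first record the two elementary facts on which everything hinges: that $\min_{\norm{\boldsymbol{u}}_{\mathbb{C}^2}=1}\norm{\mathbf{Q}\boldsymbol{u}}_{\mathbb{C}^2} = \sigma_2$, the smaller singular value, and that $\sigma_1\sigma_2 = |\det\mathbf{Q}|$. The latter holds because $|\det U| = |\det V^*| = 1$ for unitary matrices, so that $|\det\mathbf{Q}| = |\det\Sigma| = \sigma_1\sigma_2$.

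Equivalently, and this is the version I would write out to avoid invoking the full decomposition, one can work directly with the Hermitian positive semidefinite matrix $\mathbf{Q}^*\mathbf{Q}$, whose eigenvalues are $\sigma_1^2 \ge \sigma_2^2 \ge 0$ and which satisfies $\det(\mathbf{Q}^*\mathbf{Q}) = |\det\mathbf{Q}|^2 = \sigma_1^2\sigma_2^2$. By the spectral theorem there is an orthonormal eigenbasis of $\mathbb{C}^2$ for $\mathbf{Q}^*\mathbf{Q}$; choosing $\boldsymbol{u}$ to be a unit eigenvector associated with the \emph{smaller} eigenvalue $\sigma_2^2$ then gives $\norm{\mathbf{Q}\boldsymbol{u}}_{\mathbb{C}^2}^2 = \langle \mathbf{Q}^*\mathbf{Q}\boldsymbol{u}, \boldsymbol{u}\rangle = \sigma_2^2$.

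With this choice of $\boldsymbol{u}$ the desired inequality reduces to $\sigma_2^2 \le \sigma_1\sigma_2$, which is immediate from the ordering $\sigma_2 \le \sigma_1$: indeed $\norm{\mathbf{Q}\boldsymbol{u}}_{\mathbb{C}^2}^2 = \sigma_2 \cdot \sigma_2 \le \sigma_1 \cdot \sigma_2 = |\det\mathbf{Q}|$. The degenerate case $\sigma_2 = 0$, i.e. $\mathbf{Q}$ singular, is handled by the same line, since then both sides vanish and any unit vector in the kernel of $\mathbf{Q}$ works.

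I do not anticipate a genuine obstacle here, since the result is a soft consequence of the singular value structure of a $2\times 2$ matrix. The only points requiring care are fixing the ordering convention $\sigma_1 \ge \sigma_2$ and ensuring that the chosen $\boldsymbol{u}$ realizes the \emph{smallest} singular value, so that $\norm{\mathbf{Q}\boldsymbol{u}}_{\mathbb{C}^2}^2 = \sigma_2^2$ rather than $\sigma_1^2$; the bound would fail for the larger singular value.
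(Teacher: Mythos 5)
Your proof is correct, but it takes a genuinely different route from the paper. You apply the spectral theorem to the Hermitian matrix $\mathbf{Q}^*\mathbf{Q}$ (equivalently, use the singular value decomposition), choosing $\boldsymbol{u}$ to be a unit eigenvector for the smaller eigenvalue $\sigma_2^2$, so that $\norm{\mathbf{Q}\boldsymbol{u}}_{\mathbb{C}^2}^2 = \sigma_2^2 \leq \sigma_1\sigma_2 = |\det \mathbf{Q}|$. The paper instead uses a QR decomposition $\mathbf{Q} = \mathbf{U}\mathbf{R}$ to reduce to an upper triangular matrix $\mathbf{R} = \begin{bmatrix} a & b \\ 0 & d \end{bmatrix}$, together with a scaling claim to normalize $a = 1$, and then exhibits explicit unit vectors in three elementary cases ($a = 0$; $|d| \geq 1$; $|d| < 1$), verifying the inequality by direct computation each time. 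What your approach buys: it is shorter, it identifies the \emph{optimal} test vector --- your $\boldsymbol{u}$ minimizes $\norm{\mathbf{Q}\boldsymbol{u}}_{\mathbb{C}^2}$ over the unit sphere, so you in fact prove the sharper statement $\min_{\norm{\boldsymbol{u}}_{\mathbb{C}^2}=1}\norm{\mathbf{Q}\boldsymbol{u}}_{\mathbb{C}^2}^2 = \sigma_2^2 \leq \sigma_1\sigma_2$, with equality exactly when the two singular values coincide --- and it matches the geometric remark the paper places after the proposition (the direction of least stretch). What the paper's route buys: it is entirely hands-on, needing only the QR factorization and explicit $2\times 2$ algebra with vectors one can check by inspection, avoiding any appeal to the spectral theorem. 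One caution if you were tempted to generalize: your final step uses $\sigma_2 \leq \sigma_1$ in the specific form $\sigma_2^2 \leq \sigma_1\sigma_2$, which is special to dimension two; for $n \geq 3$ the analogous inequality $\sigma_n^2 \leq \sigma_1\cdots\sigma_n$ fails (take all singular values equal to $1/2$), so the proposition as stated is genuinely two-dimensional.
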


\begin{rem} Geometrically $\textbf{Q}$ changes areas by a factor of $\abs{\det\,\textbf{Q}}$. This result says there exists a direction $\boldsymbol{u}$ in which $\textbf{Q}$ changes lengths by at most $\sqrt{\abs{\det\,\textbf{Q}}}$.
\end{rem}

\begin{proof} The following self evident claims leads to the proof of \eqref{eq:relation between Qu and detQ}.

\begin{claim}
\label{chap:Result for QR decomposition}
Let $\textbf{Q}=\textbf{UR}$ be the QR decomposition of $\textbf{Q}$, where $\textbf{U}$ is unitary and $\textbf{R}$ is upper triangular. Suppose \eqref{eq:relation between Qu and detQ} holds for $\textbf{R}$, then it also holds for $\textbf{Q}$.
\end{claim}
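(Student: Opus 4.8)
The plan is to use Claim~\ref{chap:Result for QR decomposition} to reduce the proposition to the case of an upper triangular matrix, and then to exhibit an explicit minimizing direction by hand. First I would dispatch the claim itself: writing $\textbf{Q} = \textbf{U}\textbf{R}$ with $\textbf{U}$ unitary, for any $\boldsymbol{u}$ we have $\norm{\textbf{Q}\boldsymbol{u}}_{\mathbb{C}^2} = \norm{\textbf{R}\boldsymbol{u}}_{\mathbb{C}^2}$, since unitary matrices preserve the Euclidean norm, and $|\det\textbf{Q}| = |\det\textbf{U}|\,|\det\textbf{R}| = |\det\textbf{R}|$, since $|\det\textbf{U}| = 1$. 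Hence any unit vector $\boldsymbol{u}$ realizing \eqref{eq:relation between Qu and detQ} for $\textbf{R}$ realizes it verbatim for $\textbf{Q}$, so it suffices to treat the triangular case.

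It therefore remains to prove the bound for an upper triangular matrix $\textbf{R} = \begin{bmatrix} a & b \\ 0 & d \end{bmatrix}$, for which $|\det\textbf{R}| = |a|\,|d|$. Writing $\boldsymbol{u} = (u_1,u_2)^T$, the quantity to be controlled is $\norm{\textbf{R}\boldsymbol{u}}_{\mathbb{C}^2}^2 = |a u_1 + b u_2|^2 + |d|^2 |u_2|^2$, and I would split into two cases according to which diagonal entry is larger. If $|a| \le |d|$, I would take $\boldsymbol{u} = (1,0)^T$, so that $\norm{\textbf{R}\boldsymbol{u}}_{\mathbb{C}^2}^2 = |a|^2 \le |a|\,|d|$. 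If instead $|a| > |d|$ (which in particular forces $a \ne 0$), I would annihilate the first component of $\textbf{R}\boldsymbol{u}$ by choosing $\boldsymbol{u}$ proportional to $(-b,a)^T$; after normalizing one finds $\norm{\textbf{R}\boldsymbol{u}}_{\mathbb{C}^2}^2 = \frac{|a|^2 |d|^2}{|a|^2 + |b|^2} \le |d|^2 \le |a|\,|d|$. In both cases \eqref{eq:relation between Qu and detQ} holds, and the degenerate situations $a = 0$ or $d = 0$ are already subsumed, since then the chosen direction lies in the kernel and forces $\norm{\textbf{R}\boldsymbol{u}}_{\mathbb{C}^2} = 0 = |\det\textbf{R}|$.

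Because the argument is entirely finite-dimensional, there is no genuine analytic obstacle; the only care required is the case split and the check that it exhausts all matrices, including the singular ones. I would close with the remark that a coordinate-free reading of the same estimate is available: the left-hand side of \eqref{eq:relation between Qu and detQ}, minimized over unit vectors, equals the square $\sigma_{\min}^2$ of the smaller singular value of $\textbf{Q}$, while $\sigma_{\min}\sigma_{\max} = |\det\textbf{Q}|$ together with $\sigma_{\min} \le \sigma_{\max}$ yields $\sigma_{\min}^2 \le |\det\textbf{Q}|$. The QR reduction above is simply the elementary way of writing down this minimizer in triangular normal form, which is why Claim~\ref{chap:Result for QR decomposition} is the right first move.
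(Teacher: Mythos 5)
Your one-line justification of the Claim itself is correct and is exactly the reasoning the paper leaves implicit when it calls the claim ``self evident'': since $\textbf{U}$ is unitary, $\norm{\textbf{Q}\boldsymbol{u}}_{\mathbb{C}^2} = \norm{\textbf{U}\textbf{R}\boldsymbol{u}}_{\mathbb{C}^2} = \norm{\textbf{R}\boldsymbol{u}}_{\mathbb{C}^2}$ and $|\det \textbf{Q}| = |\det \textbf{R}|$, so the same unit vector witnesses \eqref{eq:relation between Qu and detQ} for both matrices. Where you genuinely depart from the paper is in the triangular case of Proposition~\ref{prop:relation between Qu and detQ}: the paper first disposes of $a=0$, then invokes the scaling Claim~\ref{chap:Result for multiple of Q} to normalize to $a=1$ and splits on $|d|\ge 1$ versus $|d|<1$, whereas you split directly on $|a|\le|d|$ versus $|a|>|d|$, taking $\boldsymbol{u}=(1,0)^T$ in the first case and $\boldsymbol{u}\propto(-b,a)^T$ in the second; both computations check out ($|a|^2\le|a||d|$, and $|a|^2|d|^2/(|a|^2+|b|^2)\le|d|^2\le|a||d|$), your case split is exhaustive, and it absorbs the singular cases $a=0$ or $d=0$ without separate treatment, so it slightly streamlines the paper's argument by eliminating Claim~\ref{chap:Result for multiple of Q} altogether. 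Your closing singular-value observation is also correct and is arguably the cleanest route: $\min_{\norm{\boldsymbol{u}}=1}\norm{\textbf{Q}\boldsymbol{u}}_{\mathbb{C}^2}^2 = \sigma_{\min}^2 \le \sigma_{\min}\sigma_{\max} = |\det\textbf{Q}|$ proves the proposition in one line with no decomposition at all, and it makes precise the paper's geometric remark about $\textbf{Q}$ scaling areas by $|\det\textbf{Q}|$; the QR route (yours and the paper's) buys only an explicit, elementary construction of the minimizing direction.
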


\begin{claim}
\label{chap:Result for multiple of Q}
Suppose $\textbf{Q} = \alpha \widetilde{\textbf{Q}}$ for some $\alpha \in \mathbb{C}$ and that the \eqref{eq:relation between Qu and detQ} holds for $\widetilde{\textbf{Q}}$. Then \eqref{eq:relation between Qu and detQ} also holds for $\textbf{Q}$. 
\end{claim}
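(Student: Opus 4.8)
The plan is to reuse the very same unit vector that the hypothesis already supplies and simply track how each side of \eqref{eq:relation between Qu and detQ} transforms under the substitution $\textbf{Q} = \alpha\widetilde{\textbf{Q}}$. By assumption there is a vector $\boldsymbol{u}\in\mathbb{C}^2$ with $\norm{\boldsymbol{u}}_{\mathbb{C}^2}=1$ and $\norm{\widetilde{\textbf{Q}}\boldsymbol{u}}_{\mathbb{C}^2}^2 \leq \abs{\operatorname{\det}\widetilde{\textbf{Q}}}$. I claim this same $\boldsymbol{u}$ verifies \eqref{eq:relation between Qu and detQ} for $\textbf{Q}$, so that no new choice of direction is needed; the entire content is a scaling computation.

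The two ingredients are the homogeneity of the norm and the scaling of the determinant. First, since $\textbf{Q}\boldsymbol{u} = \alpha\widetilde{\textbf{Q}}\boldsymbol{u}$, the absolute homogeneity of $\norm{\cdot}_{\mathbb{C}^2}$ gives $\norm{\textbf{Q}\boldsymbol{u}}_{\mathbb{C}^2}^2 = \abs{\alpha}^2\,\norm{\widetilde{\textbf{Q}}\boldsymbol{u}}_{\mathbb{C}^2}^2$. Second, because $\widetilde{\textbf{Q}}$ is a $2\times 2$ matrix, the determinant is homogeneous of degree two, so $\operatorname{\det}\textbf{Q} = \operatorname{\det}(\alpha\widetilde{\textbf{Q}}) = \alpha^2\operatorname{\det}\widetilde{\textbf{Q}}$ and hence $\abs{\operatorname{\det}\textbf{Q}} = \abs{\alpha}^2\,\abs{\operatorname{\det}\widetilde{\textbf{Q}}}$. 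Multiplying the hypothesis inequality through by the nonnegative factor $\abs{\alpha}^2$ and substituting these two identities yields $\norm{\textbf{Q}\boldsymbol{u}}_{\mathbb{C}^2}^2 = \abs{\alpha}^2\norm{\widetilde{\textbf{Q}}\boldsymbol{u}}_{\mathbb{C}^2}^2 \leq \abs{\alpha}^2\abs{\operatorname{\det}\widetilde{\textbf{Q}}} = \abs{\operatorname{\det}\textbf{Q}}$, which is exactly \eqref{eq:relation between Qu and detQ} for $\textbf{Q}$.

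There is essentially no obstacle here; the only point worth flagging is \emph{why} the scaling is consistent. The exponent with which $\alpha$ enters the left-hand side (two, coming from squaring the norm) must match the exponent with which $\alpha$ enters the determinant (two, which is the matrix dimension). This matching is precisely what makes inequality \eqref{eq:relation between Qu and detQ} invariant under scalar multiplication, and it is special to the pairing of the squared norm against the determinant in dimension two; I would note this explicitly since it is the one place where the $2\times 2$ hypothesis is genuinely used. Finally, the degenerate case $\alpha = 0$ requires no argument, since then $\textbf{Q} = 0$ and both sides of \eqref{eq:relation between Qu and detQ} vanish for every unit vector, so the claim holds trivially.
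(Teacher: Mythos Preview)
Your proof is correct; the paper does not give a proof of this claim at all, simply declaring it ``self evident,'' and your scaling computation is exactly the routine verification the authors have in mind.
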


By Claim~\ref{chap:Result for QR decomposition} it suffices to establish \eqref{eq:relation between Qu and detQ} for $\textbf{R} = \begin{bmatrix} a & b\\0 & d \end{bmatrix}$.

{\bf Case I:} If $a = 0$, let $\boldsymbol{u} = (1,0)$. Then $\textbf{R}\boldsymbol{u} = (0,0)$. Hence, $\norm{\textbf{R}\boldsymbol{u}}_{\mathbb{C}^2}^2 = 0 = |\operatorname{\det} \textbf{R}|$, and so \eqref{eq:relation between Qu and detQ} holds.
 
 {\bf Case II:} If $a \neq 0$, then by Claim~\ref{chap:Result for multiple of Q} we just need to show that \eqref{eq:relation between Qu and detQ} holds for matrices $\widetilde{\textbf{R}}$ of the form $\widetilde{\textbf{R}} = \begin{bmatrix} 1 & b\\0 & d \end{bmatrix}$.
    If $|d|\geq 1$,  we choose $\boldsymbol{u}=(1,0)$ to obtain $\norm{\widetilde{\textbf{R}}\boldsymbol{u}}_{\mathbb{C}^2}^2 = 1 \leq |d| = |\operatorname{\det} \widetilde{\textbf{R}}|$.
    Finally, if $|d| < 1$,  choosing $\boldsymbol{u} = \left( -b/ \sqrt{1+|b|^2}, 1/ \sqrt{1+|b|^2} \right)$ we obtain $\widetilde{\textbf{R}}\boldsymbol{u} = (0,d)/\sqrt{1+|b|^2}$. Hence, $\norm{\widetilde{\textbf{R}}\boldsymbol{u}}_{\mathbb{C}^2}^2 = |d|^2/(1+|b|^2) \leq |d|^2 \leq |d| = |\operatorname{\det}\widetilde{\textbf{R}}|$.
\end{proof}

\begin{proposition}
\label{prop:MQ has bounded inverse iff Q has bounded inverse}
Let $\textbf{Q}: \mathbb{R} \to \mathbb{C}^{2 \times 2}$ be continuous with $\norm{\textbf{Q}}_{\infty} < \infty$. Then the operator $\mathcal{M}_{\textbf{Q}}$ has a bounded inverse if and only if  $0 \notin \overline{\text{Im}(\text{det}\textbf{Q})}$. In that case, $\textbf{Q}$ has a bounded inverse, $\textbf{Q}^{-1}$, and 
\begin{equation*}
    \mathcal{M}_{\textbf{Q}}^{-1} = M_{\textbf{Q}^{-1}}.
\end{equation*}
\end{proposition}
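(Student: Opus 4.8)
The plan is to prove the two implications separately, using the earlier propositions for the forward (``easy'') direction and a localization argument for the reverse direction; the final operator identity then falls out of the forward construction. Recall first that since $\norm{\textbf{Q}}_\infty < \infty$, Proposition~\ref{prop:MQ is densely defined and closed} guarantees that $\mathcal{M}_{\textbf{Q}}$ is everywhere defined and bounded, so ``bounded inverse'' means the existence of a bounded $T$ with $\mathcal{M}_{\textbf{Q}}T = T\mathcal{M}_{\textbf{Q}} = \mathcal{I}$.

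First I would assume $0 \notin \overline{\Im(\det \textbf{Q})}$ and build the inverse explicitly. By Proposition~\ref{prop:Inverse of Q is bounded}, the pointwise inverse $\textbf{Q}^{-1}:\mathbb{R} \to \mathbb{C}^{2\times 2}$ exists, is continuous, and satisfies $\norm{\textbf{Q}^{-1}}_\infty < \infty$. Hence Proposition~\ref{prop:MQ is densely defined and closed}, applied to $\textbf{Q}^{-1}$, shows $M_{\textbf{Q}^{-1}}$ is everywhere defined and bounded. It then remains only to check the algebraic identities $\mathcal{M}_{\textbf{Q}} M_{\textbf{Q}^{-1}} = M_{\textbf{Q}^{-1}} \mathcal{M}_{\textbf{Q}} = \mathcal{I}$, which follow pointwise from $\textbf{Q}(x)\textbf{Q}^{-1}(x) = \textbf{Q}^{-1}(x)\textbf{Q}(x) = \textbf{I}$ for every $x$. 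This simultaneously establishes boundedness of the inverse and the claimed identity $\mathcal{M}_{\textbf{Q}}^{-1} = M_{\textbf{Q}^{-1}}$.

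For the reverse direction I would argue by contraposition: assuming $0 \in \overline{\Im(\det \textbf{Q})}$, I would produce a sequence of unit vectors that $\mathcal{M}_{\textbf{Q}}$ drives to zero in norm, which is incompatible with $\mathcal{M}_{\textbf{Q}}$ being bounded below, a necessary condition for possessing a bounded inverse. By hypothesis there is a sequence $x_n \in \mathbb{R}$ with $\det \textbf{Q}(x_n) \to 0$. For each $n$, Proposition~\ref{prop:relation between Qu and detQ} supplies a unit vector $\boldsymbol{u}_n \in \mathbb{C}^2$ with $\norm{\textbf{Q}(x_n)\boldsymbol{u}_n}_{\mathbb{C}^2}^2 \leq |\det \textbf{Q}(x_n)|$. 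I would then localize around $x_n$ by setting $\boldsymbol{w}_n(x) = \sqrt{g_{x_n,\delta_n}(x)}\,\boldsymbol{u}_n$, mirroring the test functions used in the proof of Proposition~\ref{prop:MQ is bounded iff Q is bounded}; these satisfy $\norm{\boldsymbol{w}_n}_{L^2} = 1$ and $\norm{\mathcal{M}_{\textbf{Q}}\boldsymbol{w}_n}_{L^2}^2 = \int_{\mathbb{R}} \norm{\textbf{Q}(x)\boldsymbol{u}_n}_{\mathbb{C}^2}^2\, g_{x_n,\delta_n}(x)\,dx$.

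The crux of the argument, and the step I expect to require the most care, is choosing the widths $\delta_n$ so that this integral collapses to its value at $x_n$. Because $\boldsymbol{u}_n$ varies with $n$, I cannot invoke Lemma~\ref{prop:limit of integral of psi times g} uniformly; instead, for each fixed $n$ I would apply it to the continuous bounded function $\phi_n(x) = \norm{\textbf{Q}(x)\boldsymbol{u}_n}_{\mathbb{C}^2}^2$ with tolerance $\epsilon = 1/n$, obtaining $\widetilde\delta_n = \widetilde\delta(1/n, \phi_n)$ and selecting any $\delta_n \leq \widetilde\delta_n$. This yields $\norm{\mathcal{M}_{\textbf{Q}}\boldsymbol{w}_n}_{L^2}^2 \leq \phi_n(x_n) + 1/n \leq |\det \textbf{Q}(x_n)| + 1/n \to 0$, which contradicts the bound-below property and completes the contraposition. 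The concluding operator identity in the invertible case has already been obtained in the forward direction.
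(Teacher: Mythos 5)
Your proof is correct, and its overall architecture matches the paper's: the forward direction combines Proposition~\ref{prop:Inverse of Q is bounded} with the boundedness of multiplication operators induced by bounded symbols, and the reverse direction defeats any lower bound on $\mathcal{M}_{\textbf{Q}}$ using localized test functions $\sqrt{g_{s,\delta}}\,\boldsymbol{u}$ built from Proposition~\ref{prop:relation between Qu and detQ} and Lemma~\ref{prop:limit of integral of psi times g}. Still, your reverse direction differs in two ways worth noting. First, the paper argues by contradiction inside the assumption of invertibility, extracting the quantitative bound $|\det \textbf{Q}(x)| > \gamma^2/8$ with $\gamma = 1/\norm{\mathcal{M}_{\textbf{Q}}^{-1}}$; you argue the contrapositive along a sequence $x_n$ with $\det\textbf{Q}(x_n)\to 0$, which is logically equivalent but dispenses with the explicit constant. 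Second, and more substantively, the paper's test function uses an $x$-dependent field $\boldsymbol{u}(x)$ chosen pointwise so that $\norm{\textbf{Q}(x)\boldsymbol{u}(x)}_{\mathbb{C}^2}^2 \leq |\det \textbf{Q}(x)|$; this tacitly requires the selection $x \mapsto \boldsymbol{u}(x)$ to be measurable for $\boldsymbol{w}_{s,\delta}$ to lie in $L^2(\mathbb{R},\mathbb{C}^2)$, a point the paper does not address. Your choice of a single constant unit vector $\boldsymbol{u}_n$ per $n$, with the Dirac lemma applied to $\phi_n(x) = \norm{\textbf{Q}(x)\boldsymbol{u}_n}_{\mathbb{C}^2}^2$ at tolerance $1/n$, sidesteps this measurable-selection issue at no cost, since continuity and boundedness of $\textbf{Q}$ make $\phi_n$ admissible and only its value at the center $x_n$ matters as $\delta_n \to 0$. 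A minor further merit: your explicit pointwise verification of $\mathcal{M}_{\textbf{Q}} M_{\textbf{Q}^{-1}} = M_{\textbf{Q}^{-1}} \mathcal{M}_{\textbf{Q}} = \mathcal{I}$ supplies the operator identity $\mathcal{M}_{\textbf{Q}}^{-1} = M_{\textbf{Q}^{-1}}$, which the paper's forward direction leaves implicit.
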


\begin{proof}
    Suppose $0 \notin \overline{\text{Im}(\text{det}\textbf{Q})}$. By Proposition~\ref{prop:Inverse of Q is bounded},  $\norm{\textbf{Q}^{-1}}_{\infty} \leq \infty$.
    Hence, by Proposition~\ref{prop:MQ is bounded iff Q is bounded}, $\mathcal{M}_{\textbf{Q}}^{-1}$ is bounded and 
    \begin{equation}
        \norm{\mathcal{M}_{\textbf{Q}}^{-1}}_{\mathcal{B}(L^2(\mathbb{R}, \mathbb{C}^2))} = \norm{\textbf{Q}^{-1}}_{\infty} \leq \infty.
    \end{equation}
    Conversely, suppose that $\mathcal{M}_{\textbf{Q}}$ has a bounded inverse. Then for all $\boldsymbol{w} \in L^2(\mathbb{R}, \mathbb{C}^2)$, 
    \begin{equation}
        \gamma:= \frac{1}{\norm{\mathcal{M}_{\textbf{Q}}^{-1}}_{\mathcal{B}(L^2(\mathbb{R}, \mathbb{C}^2))}} \leq \frac{\norm{\mathcal{M}_{\textbf{Q}} \boldsymbol{w}}_{L^2(\mathbb{R}, \mathbb{C}^2)}}{\norm{\boldsymbol{w}}_{L^2(\mathbb{R}, \mathbb{C}^2)}}.
        \label{eq:definition of gamma}
    \end{equation}
    We will show that for all $x \in \mathbb{R}$
    \begin{equation}
        \abs{\text{det} \textbf{Q}(x)} > \frac{\gamma^2}{8},
    \end{equation}
    and hence $0 \notin \overline{\text{Im}(\text{det}\textbf{Q})}$.
    
    Assume for the sake of contradiction that there exists $s \in \mathbb{R}$ such that 
    \begin{equation}
        |\text{det} \textbf{Q}(s)| \leq \frac{\gamma^2}{8}.
        \label{eq:bound on detQ}
    \end{equation}
    Let $\boldsymbol{w}(x) = \boldsymbol{w}_{s,\delta}(x) = \sqrt{g_{s,\delta}(x)}\boldsymbol{u}(x)$, where $g_{s,\delta}(x)$ is as in Proposition~\ref{prop:limit of integral of psi times g} and, using Proposition~\ref{prop:relation between Qu and detQ}, 
   for each $x \in \mathbb{R}$,  $\boldsymbol{u}(x) \in \mathbb{C}^2$ is chosen so that  $\norm{\boldsymbol{u}(x)}_{\mathbb{C}^2} = 1$ and
    \begin{equation}
        \norm{\textbf{Q}(x)\boldsymbol{u}(x)}_{\mathbb{C}^2}^2 \leq \left| \text{det}\textbf{Q}(x) \right|.
        \label{eq:norm of Qw and det Q}
    \end{equation}
    Let $\epsilon>0$. By \eqref{eq:norm of Qw and det Q} and Proposition~\ref{prop:limit of integral of psi times g} there exists $\delta > 0$ so that 
    \begin{align*}
        \norm{\mathcal{M}_{\textbf{Q}} \boldsymbol{w}_{s,\delta}}_{L^2(\mathbb{R}, \mathbb{C}^2)}^2 
        &= \int_{\mathbb{R}} \norm{\textbf{Q}(x)\sqrt{g_{s,\delta}(x)}\boldsymbol{u}(x)}_{\mathbb{C}^2}^2 dx \leq \int_{\mathbb{R}} g_{s,\delta}(x)  \left| \text{det}\textbf{Q}(x) \right| dx\\
        &< |\text{det}\textbf{Q}(x)| + \epsilon < \frac{\gamma^2}{8} + \epsilon.
    \end{align*}
    Choosing $\epsilon = \frac{\gamma^2}{8}$ and applying our assumption \eqref{eq:bound on detQ} we find that 
    \begin{equation}
        \norm{\mathcal{M}_{\textbf{Q}} \boldsymbol{w}_{s,\delta}}_{L^2(\mathbb{R}, \mathbb{C}^2)} \leq \frac{\gamma}{2}, 
    \end{equation}
    which is a contradiction to \eqref{eq:definition of gamma}. Therefore, for all $x \in \mathbb{R}$
  $
        |\text{det} \textbf{Q}(x)| > \frac{\gamma^2}{8}.
$
    Hence, $0 \notin \overline{\text{Im}(\text{det}\textbf{Q})}$. Finally, using \eqref{prop:Inverse of Q is bounded}, we conclude that $\norm{\textbf{Q}^{-1}}_{\infty} \leq \infty$.
\end{proof}

\begin{proof}[Proof of Theorem~\ref{thm:Formula for spectrum of MQ}]
    By Proposition~\ref{prop:MQ has bounded inverse iff Q has bounded inverse} 
        \begin{align*}
        \lambda \in \rho(\mathcal{M}_{\textbf{Q}}) &\iff M_{\lambda - Q} \text{ has a bounded inverse}\\ 
        &\iff 0 \notin \overline{\Im (\text{det} (\lambda \textbf{I} - \textbf{Q}))}\\
        &\iff \exists \epsilon > 0 \text{ such that } \forall x \in \mathbb{R} \quad |\text{det} (\lambda \textbf{I} - \textbf{Q}(x))| \geq \epsilon.
    \end{align*}
    Therefore,
    \begin{equation}
        \begin{aligned}
        \lambda \in \sigma (\mathcal{M}_{\textbf{Q}}) &\iff \lambda \notin \rho(\mathcal{M}_{\textbf{Q}})\\
        &\iff \forall \epsilon > 0 \, \exists x \in \mathbb{R} \text{ such that } |\text{det} (\lambda \textbf{I} - \textbf{Q}(x))| < \epsilon.
        \end{aligned}
        \label{eq:Condition on eigenvalue in spectrum of MQ}
    \end{equation}
    Let 
    \begin{equation}
        \widetilde{\sigma}(\mathcal{M}_{\textbf{Q}}) = \{ \lambda \in \mathbb{C} \,:\, \exists x \in \mathbb{R} \text{ such that } \text{det}(\lambda \textbf{I} - \textbf{Q}(x)) = 0 \}.
    \end{equation}
    Then $\widetilde{\sigma}(\mathcal{M}_{\textbf{Q}}) \subseteq \sigma(\mathcal{M}_{\textbf{Q}})$. Let $\lambda \in \sigma(\mathcal{M}_{\textbf{Q}})\backslash \widetilde{\sigma}(\mathcal{M}_{\textbf{Q}})$.
    To complete the proof, we must show $\lambda = 0$. Choosing $\epsilon = 1/n$ in \eqref{eq:Condition on eigenvalue in spectrum of MQ},
    \begin{equation}
        \exists x_n \in \mathbb{R} \text{ such that } \text{det}(\lambda \textbf{I} - \textbf{Q}(x_n)) \leq 1/n.
        \label{eq:bound on l*I-Q(xn)}
    \end{equation}
  Suppose that the sequence $\{x_n\}_{n=1}^{\infty}$ is bounded. Then there exists a convergent subsequence $x_{n_k} \to x_*$. Since, we are assuming that $\textbf{Q}$ is continuous, 
    \begin{equation}
        \text{det} (\lambda \textbf{I} - \textbf{Q}(x_*)) = \lim_{n \to \infty} \text{det} (\lambda \textbf{I} - \textbf{Q}(x_n)) = 0.
    \end{equation}
    Therefore, $\lambda \in \widetilde{\sigma}(\mathcal{M}_{\textbf{Q}})$, which is a contradiction. Hence, $x_n$ is not bounded and so
    \begin{equation}
        \exists x_n \to \infty \text{ such that } \norm{\textbf{Q}(x_n)}_{\mathbb{C}^{2 \times 2}} \to 0.
    \end{equation}
    Let   
$a_n=   \operatorname{det}(\lambda \textbf{I} - \textbf{Q}(x_n)) =
  \lambda^2 - \text{trace}(\textbf{Q}(x_n)) \lambda + \text{det}(\textbf{Q}(x_n))$.
    Therefore,
    \begin{equation}
        \lambda = \frac{1}{2} \left[ \text{trace}(\textbf{Q}(x_n)) \pm \sqrt{\text{trace}^2(\textbf{Q}(x_n)) - 4(\text{det}(\textbf{Q}(x_n)) -a_n) } \right].
    \end{equation}
    Now, by \eqref{eq:bound on l*I-Q(xn)}, $a_n \to 0$ and by assumption $\norm{\textbf{Q}(x_n)}_F \to 0$ as $n \to \infty$. Therefore, $\lambda = 0$ must hold.
\end{proof}

\section{The Essential Spectrum of the Asymptotic Monodromy Operator}
\label{sec:The Essential Spectrum of the Asymptotic Monodromy Operator}

In this section we prove Theorem~\ref{thm:Theorem for formula of essential spectrum}
which gives the formula for the essential spectrum of $\mathcal M_\infty$.
The proof relies on the following two results.

\begin{lemma}
\label{lem:Formula for exponential of A}
Let $\mathbf{A}(a,b) = \begin{bmatrix} a&-b \\ b&a \end{bmatrix}$. Then
\begin{equation}
    e^{\mathbf{A}(a,b)} = e^a \textbf{R}(b),
    \label{eq:Formula for exponential of A}
\end{equation}
where $\textbf{R}(b) = \begin{bmatrix} \cos{b}&-\sin{b} \\ \sin{b}&\cos{b} \end{bmatrix}$ is a rotation matrix.
\end{lemma}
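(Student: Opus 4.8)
The plan is to exploit the decomposition of $\mathbf{A}(a,b)$ into a scalar part and a rotation generator. First I would write $\mathbf{A}(a,b) = a\textbf{I} + b\textbf{J}$, where $\textbf{J} = \begin{bmatrix} 0 & -1 \\ 1 & 0 \end{bmatrix}$ is the matrix already introduced in \eqref{eq:fiber amplifier new}. Since $a\textbf{I}$ commutes with $b\textbf{J}$ (the identity commutes with every matrix), the matrix exponential factors as $e^{\mathbf{A}(a,b)} = e^{a\textbf{I}}\, e^{b\textbf{J}} = e^a\, e^{b\textbf{J}}$, so it remains only to identify $e^{b\textbf{J}}$ with the rotation matrix $\textbf{R}(b)$.

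The key algebraic fact is that $\textbf{J}^2 = -\textbf{I}$, so that $\textbf{J}$ plays the role of the imaginary unit. Consequently $\textbf{J}^{2k} = (-1)^k \textbf{I}$ and $\textbf{J}^{2k+1} = (-1)^k \textbf{J}$ for all $k\geq 0$. I would then expand the (absolutely convergent) power series for $e^{b\textbf{J}}$ and group the even and odd terms:
\[
e^{b\textbf{J}} = \sum_{n=0}^\infty \frac{b^n}{n!}\textbf{J}^n = \left( \sum_{k=0}^\infty \frac{(-1)^k b^{2k}}{(2k)!} \right) \textbf{I} + \left( \sum_{k=0}^\infty \frac{(-1)^k b^{2k+1}}{(2k+1)!} \right) \textbf{J} = \cos(b)\,\textbf{I} + \sin(b)\,\textbf{J}.
\]
Writing out the right-hand side entrywise gives exactly $\textbf{R}(b)$, and combining this with the factorization above yields $e^{\mathbf{A}(a,b)} = e^a\,\textbf{R}(b)$, as claimed.

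Since the matrices are $2\times 2$ and everything reduces to the single identity $\textbf{J}^2=-\textbf{I}$ together with the Taylor series for sine and cosine, there is no substantive obstacle; the only point requiring a word of care is the rearrangement of the exponential series into even and odd parts, which is justified by absolute convergence of the matrix exponential. An equivalent route, if one prefers to avoid series manipulations, would be to observe that $t\mapsto e^{at}\textbf{R}(bt)$ and $t\mapsto e^{t\mathbf{A}(a,b)}$ both solve the matrix initial value problem $\textbf{F}'(t)=\mathbf{A}(a,b)\textbf{F}(t)$, $\textbf{F}(0)=\textbf{I}$, and to conclude by uniqueness upon setting $t=1$. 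I would present the series computation as the primary proof for concreteness.
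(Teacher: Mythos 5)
Your proof is correct, and it takes a genuinely different (if closely related) route from the paper. The paper's own proof is a one-line sketch: ``Diagonalize $\mathbf{A}(a,b)$ and use Euler's formula.'' Spelled out, that means working over $\mathbb{C}$: the eigenvalues of $\mathbf{A}(a,b)$ are $a\pm ib$ (with eigenvectors $(1,\mp i)^T$), so one conjugates $\operatorname{diag}(e^{a+ib},e^{a-ib})$ back to the standard basis and applies Euler's identity $e^{\pm ib}=\cos b\pm i\sin b$ to recover $e^a\mathbf{R}(b)$. You instead stay entirely over the reals: the decomposition $\mathbf{A}(a,b)=a\mathbf{I}+b\mathbf{J}$, the commutativity of the two summands, and the identity $\mathbf{J}^2=-\mathbf{I}$ let you rearrange the exponential series into $\cos(b)\,\mathbf{I}+\sin(b)\,\mathbf{J}=\mathbf{R}(b)$ --- in effect re-proving Euler's formula in matrix form rather than importing it. Both arguments ultimately rest on the same structural fact, namely that the real algebra spanned by $\mathbf{I}$ and $\mathbf{J}$ is isomorphic to $\mathbb{C}$ with $\mathbf{J}$ playing the role of $i$; what your version buys is the avoidance of complex eigenvectors and the change-of-basis conjugation, at the cost of the series manipulation, which you correctly justify by absolute convergence of the matrix exponential. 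Your alternative ODE-uniqueness argument (both $t\mapsto e^{at}\mathbf{R}(bt)$ and $t\mapsto e^{t\mathbf{A}(a,b)}$ solve $\mathbf{F}'=\mathbf{A}\mathbf{F}$, $\mathbf{F}(0)=\mathbf{I}$) is also valid and is arguably the cleanest route of the three if one wishes to avoid series rearrangement entirely.
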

\begin{proof}
Diagonalize $\mathbf{A}(a,b)$ and use Euler's formula. 
\end{proof}

Next, working with Definition~\ref{d:Spectra}, we have the following result.

\begin{proposition}
\label{prop:essential spectrum of asymptotic monodromy operator is that of its FT}
Let $\mathcal{M}_{\infty}: L^2(\mathbb{R}, \mathbb{C}^2) \to L^2(\mathbb{R}, \mathbb{C}^2)$ be the asymptotic monodromy operator given by \eqref{eq:Asymptotic linearized roundtrip operator}. Then 
\begin{equation}
    \sigma_{\emph{ess}}(\mathcal{M}_{\infty}) = \sigma_{\emph{ess}}(\widehat{\mathcal{M}}_{\infty}),
\end{equation}
where
\begin{equation}
    \widehat{\mathcal{M}}_{\infty} = \mathcal{F} \circ \mathcal{M}_{\infty} \circ \mathcal{F}^{-1}.
    \label{eq:Fourier transform of asymptotic monodromy operator}
\end{equation}
Here, $\mathcal{F}: L^2(\mathbb{R}, \mathbb{C}^2) \to L^2(\mathbb{R}, \mathbb{C}^2)$ is the Fourier transform.
\end{proposition}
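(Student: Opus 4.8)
The plan is to exploit the fact that $\widehat{\mathcal{M}}_\infty$ is obtained from $\mathcal{M}_\infty$ by conjugation with the Fourier transform, together with the standard principle that the essential spectrum is invariant under similarity by a bounded invertible operator. First I would record that, by the Plancherel theorem applied in each of the two components, $\mathcal{F}$ is a unitary operator on $L^2(\mathbb{R}, \mathbb{C}^2)$; in particular $\mathcal{F}, \mathcal{F}^{-1} \in \mathcal{B}(L^2(\mathbb{R}, \mathbb{C}^2))$ and $\mathcal{F}$ is a Banach-space isomorphism of $L^2(\mathbb{R}, \mathbb{C}^2)$ onto itself. Then for every $\lambda \in \mathbb{C}$ we have the identity
\begin{equation}
\widehat{\mathcal{M}}_\infty - \lambda = \mathcal{F}(\mathcal{M}_\infty - \lambda)\mathcal{F}^{-1},
\end{equation}
so $\widehat{\mathcal{M}}_\infty - \lambda$ is invertible in $\mathcal{B}(L^2(\mathbb{R}, \mathbb{C}^2))$ if and only if $\mathcal{M}_\infty - \lambda$ is, with inverse $\mathcal{F}(\mathcal{M}_\infty - \lambda)^{-1}\mathcal{F}^{-1}$. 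This gives $\rho(\mathcal{M}_\infty) = \rho(\widehat{\mathcal{M}}_\infty)$, and hence $\sigma(\mathcal{M}_\infty) = \sigma(\widehat{\mathcal{M}}_\infty)$.

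Next, to match the Fredholm point spectra from Definition~\ref{d:Spectra}, I would verify that conjugation by the isomorphism $\mathcal{F}$ preserves each of the relevant invariants of $\mathcal{M}_\infty - \lambda$. Since $\operatorname{Ker}(\widehat{\mathcal{M}}_\infty - \lambda) = \mathcal{F}\bigl(\operatorname{Ker}(\mathcal{M}_\infty - \lambda)\bigr)$ and $\operatorname{Ran}(\widehat{\mathcal{M}}_\infty - \lambda) = \mathcal{F}\bigl(\operatorname{Ran}(\mathcal{M}_\infty - \lambda)\bigr)$, and $\mathcal{F}$ is a bounded bijection, the kernel dimension, the closedness of the range, and the codimension of the range are all unchanged. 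Consequently $\widehat{\mathcal{M}}_\infty - \lambda$ is Fredholm if and only if $\mathcal{M}_\infty - \lambda$ is, with the same index, and $\operatorname{Ker}(\widehat{\mathcal{M}}_\infty - \lambda) \neq \{0\}$ if and only if $\operatorname{Ker}(\mathcal{M}_\infty - \lambda) \neq \{0\}$. Therefore $\sigma_{\text{pt}}^{\mathcal{F}}(\mathcal{M}_\infty) = \sigma_{\text{pt}}^{\mathcal{F}}(\widehat{\mathcal{M}}_\infty)$, and removing this common set from the (already equal) full spectra yields $\sigma_{\text{ess}}(\mathcal{M}_\infty) = \sigma_{\text{ess}}(\widehat{\mathcal{M}}_\infty)$, as required.

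This argument has no genuine obstacle; the only point requiring a little care is the verification that \emph{every} ingredient in the definition of $\sigma_{\text{pt}}^{\mathcal{F}}$---Fredholmness, vanishing index, and nontrivial kernel---is simultaneously preserved under conjugation by $\mathcal{F}$. This in turn reduces to the elementary fact that a bounded linear bijection carries closed subspaces to closed subspaces and preserves dimensions and codimensions, so the real content is simply that $\mathcal{F}$ is a genuine isomorphism of $L^2(\mathbb{R}, \mathbb{C}^2)$ onto itself, which is guaranteed by Plancherel's theorem.
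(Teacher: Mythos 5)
Your proof is correct and is exactly the argument the paper intends: the paper states the proposition without a written proof, citing only that it follows ``working with Definition~\ref{d:Spectra},'' i.e., the invariance of the spectrum and the Fredholm point spectrum under conjugation by the unitary operator $\mathcal{F}$, which is precisely what you verify. Your careful check that kernel dimension, range closedness, codimension, and hence Fredholmness and index are preserved under similarity by a bounded bijection fills in the standard details the authors left implicit.
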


\begin{proof}[Proof of Theorem~\ref{thm:Theorem for formula of essential spectrum}]
By Proposition~\ref{prop:essential spectrum of asymptotic monodromy operator is that of its FT} 
it suffices to compute $\sigma_{\text{ess}}(\widehat{\mathcal{M}}_{\infty})$. First,
we show that 
\begin{equation}
    \widehat{\mathcal{M}}_{\infty} = \widehat{\mathcal{U}}_{\infty}^{\text{OC}} \circ \widehat{\mathcal{U}}_{\infty}^{\text{DCF}} \circ \widehat{\mathcal{U}}_{\infty}^{\text{SMF2}} \circ \widehat{\mathcal{U}}_{\infty}^{\text{FA}} \circ \widehat{\mathcal{U}}_{\infty}^{\text{SMF1}} \circ \widehat{\mathcal{U}}_{\infty}^{\text{SA}}   
\end{equation}
is a multiplication operator by showing that each transfer function $\widehat{\mathcal{U}}_{\infty}$ is a multiplication operator. Here, for each laser component the transfer function $\widehat{\mathcal{U}}_{\infty}$ is the Fourier transform of the asymptotic linearized transfer function, $\mathcal{U}_{\infty}$, given in 
Section~\ref{sec:Linearization of the Round Trip Operator}. We then use Theorem~\ref{thm:Formula for spectrum of MQ} to obtain $\sigma_{\text{ess}}(\widehat{\mathcal{M}}_{\infty})$.

For the saturable absorber, 
\begin{equation}
     (\widehat{\mathcal{U}}_{\infty}^{\text{SA}} \widehat{\boldsymbol{u}}_{\text{in}})(\omega) = 
     (1 - \ell_0)
 \widehat{\boldsymbol{u}}_{\text{in}}(\omega),
\end{equation}
\StartRed
and, as in the derivation of \eqref{eq:DCF_JZ}, for the dispersion compensation element,
 \begin{equation}
    (\widehat{\mathcal{U}}_{\infty}^{\text{DCF}} \widehat{\boldsymbol{u}}_{\text{in}})(\omega) = 
    \exp \left\{ \mathbf{A}\left( 0, \frac{\omega^2}{2} \beta_{\text{DCF}}\right) \right\}
    \widehat{\boldsymbol{u}}_{\text{in}}(\omega).
\end{equation}
For the two single mode fiber segments, a similar formula holds for   
each solution operator, $\widehat{\mathcal{U}}_{\infty}^{\text{SMF}}$,
but with $\beta_{\text{DCF}}$ replaced by $\beta_{\text{SMF}} \text{L}_{\text{SMF}}$.
\EndRed
For the fiber amplifier, 
\begin{equation}\label{eq:FTFiberAmp}
    (\widehat{\mathcal{U}}_{\infty}^{\text{FA}} \widehat{\boldsymbol{u}}_{\text{in}})(\omega) =     
     \exp \left\{ \mathbf{A}\left( \frac{1}{2}\left(1-\frac{\omega^2}{\Omega_g^2} \right)\int_0^{L_{\text{FA}}} g(t)dt,\; \frac{\omega^2}{2} \beta_{\text{FA}} \text{L}_{\text{FA}} \right) \right\}    
    \widehat{\boldsymbol{u}}_{\text{in}}(\omega).
\end{equation}
\StartRed
Finally, $\widehat{\mathcal{U}}_{\infty}^{\text{OC}} = \mathcal P^{\text{OC}}$, which is given by \eqref{eq:Output coupler}.  
\EndRed

Combining these formulae, applying Lemma~\ref{lem:Formula for exponential of A}, and using
the fact that $\textbf{R}(\theta_1) \circ \textbf{R}(\theta_2) = \textbf{R}(\theta_1 + \theta_2)$ we have
\begin{equation}
    (\widehat{\mathcal{M}}_{\infty} \widehat{\boldsymbol{u}}_{\text{in}})(\omega) = \widehat{\textbf{M}}_{\infty}(\omega) \widehat{\boldsymbol{u}}_{\text{in}}(\omega),
\end{equation}
where
\begin{equation}
    \widehat{\textbf{M}}_{\infty}(\omega) = \frac{(1 - \ell_0)}{\sqrt{2}} \exp\left\{ \frac 12 \left( 1 - \frac{\omega^2}{\Omega_g^2} \right)\int_0^{L_{\text{FA}}} g(t)dt \right\} \textbf{R} \left( \frac{\omega^2}{2} \beta_{\text{RT}} \right).
    \label{eq:Asymptotic monodromy operator for laser}
\end{equation}
Using Theorem~\ref{thm:Formula for spectrum of MQ} with $\textbf{Q} = \widehat{\textbf{M}}_{\infty}(\omega)$, we obtain
\begin{equation}
    \begin{gathered}
        \sigma(\mathcal{M}_{\infty}) = \{\, \lambda_{\pm}(\omega) \in \mathbb{C} \,\, | \,\, \omega \in \mathbb{R} \,\} \cup \{ 0 \},\\
        \lambda_{\pm}(\omega) = \frac{(1-\ell_0)}{\sqrt{2}} \exp\left\{ \frac 12 \left( 1 - \frac{\omega^2}{\Omega_g^2} \right)\int_0^{L_{\text{FA}}} g(t)dt \right\} \exp\left\{ \pm i\frac{\omega^2}2\beta_{\text{RT}} \right\}.
    \end{gathered}
    \label{eq:Formula for spectrum of asymptotic monodromy operator}
\end{equation}
Finally we show that $\sigma_{\text{pt}}(\mathcal{M}_{\infty}) = \phi$, from which it follows that $\sigma_{\text{ess}}(\mathcal{M}_{\infty}) = \sigma(\mathcal{M}_{\infty})$. 
For this we recall that the point spectrum of a multiplication operator such as 
$\widehat{\mathcal M}_\infty$ is given by~\cite{engel2001one}
\begin{equation}
    \sigma_{\text{pt}}(\widehat{\mathcal{M}}_{\infty}) = \left\{ \lambda \in \mathbb{C} \, : \, \mu\left\{ \omega \in \mathbb{R} \, : \, \text{det}[\widehat{\textbf{M}}_{\infty}(\omega) - \lambda] = 0 \right\} > 0 \right\},
    \label{eq:Point spectrum of multiplication operator}
\end{equation}
where $\mu$ denotes Lebesgue measure on $\mathbb{R}$. Therefore, to show that $\sigma_{\text{pt}}(\widehat{\mathcal{M}}_{\infty}) = \phi$, we must show for all $\lambda \in \mathbb{C}$ that the set
\begin{equation}
    S_{\lambda} = \{ \omega \in \mathbb{R} \, : \, \lambda_+(\omega) = \lambda \text{ or } \lambda_-(\omega) = \lambda \},
\end{equation}
has measure zero. We observe that $\lambda_{\pm}: \mathbb{R} \to \mathbb{C}$ generically parametrizes a pair of counter-rotating spirals. 
Invoking the assumptions of the theorem, since
 $\ell_0 \neq 1$, and either $\beta_{\text{RT}} \neq 0$ or $\Omega_g < \infty$ and $\int_{0}^{L_{\text{FA}}} g(t)dt \neq 0$,  the mappings $\lambda_{\pm}:\mathbb{R} \to \mathbb{C}$ are at most countable-to-one, which implies that $S_{\lambda}$ has measure zero for all $\lambda \in \mathbb{C}$. 
\end{proof}


\section{Relative compactness for the linearized differential operators in the fiber amplifier}
\label{sec:Relative compactness for the linearized differential operators in the fiber amplifier}

In this section we show that
the linearized differential operator in the fiber amplifier, $\mathcal L(t)$, is a relatively compact perturbation of the asymptotic linearized differential operator, $\mathcal L_\infty(t)$, provided that the nonlinear pulse
satisfies some reasonable weak regularity and exponential decay assumptions.

By \eqref{eq:FA linearized equation},
the operators $\mathcal L(t)$ and $\mathcal L_\infty(t)$ are related by
\begin{equation}\label{eq:OpL}
 \mathcal{L}(t)\,\,=\,\,  \mathcal{L}_{\infty}(t) \,\,+\,\,\mathbf M(t),
\end{equation}
where 
\begin{equation}\label{eq:LinfDef}
 \mathcal{L}_\infty(t)\,\,=\,\,  {\mathbf B}\left(\frac{g(t)}{2\Omega_g^2},\frac{\beta}{2}\right)\partial_x^2 \,\,+
 \,\,\frac 12 g(t)\mathcal  I,
 \end{equation}
with $\mathbf B(a,b) = \begin{bmatrix}a &-b\\b&a\end{bmatrix}$, and where
${\textbf{M}}(t)$ is the matrix-valued multiplication operator 
    \begin{align}
             {\textbf{M}}(t,\cdot)\boldsymbol{u} &= {\textbf{M}}_1(t,\cdot)\boldsymbol{u} - \boldsymbol{\phi}(t,\cdot)\langle \boldsymbol{\psi}(t,\cdot), \boldsymbol{u} \rangle,
              \label{eq:linearized operator of FA for evolution family new}
              \\
        {\textbf{M}}_1(t,\cdot) &= \gamma |\boldsymbol{\psi}(t,\cdot)|^2 \textbf{J} + 2\gamma \textbf{J} \boldsymbol{\psi}(t,\cdot) \boldsymbol{\psi}^T(t,\cdot).
      \end{align}      
Here $\boldsymbol \psi$ is the pulse  about which the Haus master equation \eqref{eq:fiber amplifier} is linearized and $\boldsymbol\phi$ is given by   
 \eqref{eq:linearized operator of FA for evolution family C}. 
Note that here we have chosen $\mathbf M$ so that 
$\mathbf M(t,x) \to \mathbf 0$ as $x\to\pm\infty$.

\begin{theorem}
\label{thm:EssSpecRCP}
Assume that Hypothesis~\ref{hyp:conditions on the solution of FA} is met and that $(g_0/\Omega_g, \beta)\neq(0,0)$.
Then,  the differential operator,
$\mathcal L(t)$, given in \eqref{eq:OpL}, is a relatively compact perturbation of
 $\mathcal L_\infty(t)$ in that there exists a 
$\lambda\in \rho(\mathcal L_\infty)$ so that the operator
$ \mathbf M \circ (\mathcal L_\infty - \lambda)^{-1}$ on $L^2(\mathbb R, \mathbb C^2)$
 is compact.
\end{theorem}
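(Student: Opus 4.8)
The plan is to work from the splitting $\mathcal{L}(t)=\mathcal{L}_\infty(t)+\mathbf{M}(t)$ in \eqref{eq:OpL}, to fix a suitable $\lambda\in\rho(\mathcal{L}_\infty)$, and to prove that $\mathbf{M}\circ(\mathcal{L}_\infty-\lambda)^{-1}$ is compact by exploiting two facts: that the asymptotic operator is elliptic and hence its resolvent smooths by two derivatives, and that $\mathbf{M}(t,x)\to\mathbf{0}$ as $|x|\to\infty$. First I would settle the choice of $\lambda$. Since $\mathcal{L}_\infty(t)$ has constant coefficients in $x$, in the Fourier domain $(\mathcal{L}_\infty(t)-\lambda)$ acts as multiplication by the matrix symbol $-\mathbf{B}(\tfrac{g(t)}{2\Omega_g^2},\tfrac{\beta}{2})\omega^2+\tfrac12 g(t)\mathbf{I}-\lambda$. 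The hypothesis $(g_0/\Omega_g,\beta)\neq(0,0)$ guarantees $\mathbf{B}\neq\mathbf{0}$, so the eigenvalues of this symbol have modulus growing like $\omega^2$; by Lemma~\ref{lem:B(t) infinitesimal generator} (together with a shift absorbing the bounded term $\tfrac12 g(t)\mathcal{I}$) there is $\lambda\in\rho(\mathcal{L}_\infty)$, and the $\omega^{-2}$ decay of the inverse symbol shows that $(\mathcal{L}_\infty-\lambda)^{-1}$ is bounded as an operator from $L^2(\mathbb{R},\mathbb{C}^2)$ into $H^2(\mathbb{R},\mathbb{C}^2)$.

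Next I would treat the two pieces of $\mathbf{M}(t)=\mathbf{M}_1(t)-\boldsymbol{\phi}(t)\langle\boldsymbol{\psi}(t),\cdot\rangle$ separately. The nonlocal term is a rank-one operator: by Hypothesis~\ref{hyp:conditions on the solution of FA} both $\boldsymbol{\phi}(t)$ and $\boldsymbol{\psi}(t)$ lie in $L^2(\mathbb{R},\mathbb{C}^2)$, so $\boldsymbol{u}\mapsto\boldsymbol{\phi}(t)\langle\boldsymbol{\psi}(t),\boldsymbol{u}\rangle$ has finite rank and is therefore compact, and composing it with the bounded operator $(\mathcal{L}_\infty-\lambda)^{-1}$ keeps it compact.

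The bulk of the work is the local term $\mathbf{M}_1(t,\cdot)\circ(\mathcal{L}_\infty-\lambda)^{-1}$, and this is the step I expect to be the main obstacle. The matrix-valued multiplier $\mathbf{M}_1(t,x)=\gamma|\boldsymbol{\psi}(t,x)|^2\mathbf{J}+2\gamma\mathbf{J}\boldsymbol{\psi}(t,x)\boldsymbol{\psi}^T(t,x)$ is bounded and, because $\boldsymbol{\psi}(t,x)\to\mathbf{0}$ exponentially as $|x|\to\infty$ by Hypothesis~\ref{hyp:conditions on the solution of FA}, its entries decay to zero at infinity. I would show that multiplication by such a decaying $\mathbf{M}_1(t,\cdot)$ is compact as a map from $H^2(\mathbb{R},\mathbb{C}^2)$ into $L^2(\mathbb{R},\mathbb{C}^2)$ via the standard truncation scheme: for each $R>0$ let $\mathbf{M}_1^{R}$ be $\mathbf{M}_1$ cut off to the interval $[-R,R]$; then multiplication by $\mathbf{M}_1^{R}$ factors through the restriction $H^2(\mathbb{R})\to H^2([-R,R])$, the compact Rellich embedding $H^2([-R,R])\hookrightarrow L^2([-R,R])$, a bounded multiplication, and extension by zero, so it is compact, while $\norm{\mathbf{M}_1-\mathbf{M}_1^{R}}_\infty=\sup_{|x|>R}\norm{\mathbf{M}_1(t,x)}_{\mathbb{C}^{2\times2}}\to0$ as $R\to\infty$ shows that $\mathbf{M}_1^{R}\cdot$ converges in operator norm to $\mathbf{M}_1\cdot$; since a norm limit of compact operators is compact, $\mathbf{M}_1(t,\cdot)\cdot$ is compact from $H^2$ to $L^2$. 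Composing this with the bounded resolvent $(\mathcal{L}_\infty-\lambda)^{-1}\colon L^2\to H^2$ yields a compact operator on $L^2(\mathbb{R},\mathbb{C}^2)$. Finally, $\mathbf{M}\circ(\mathcal{L}_\infty-\lambda)^{-1}$ is the sum of the two compact operators just constructed, hence compact, which is precisely the assertion that $\mathcal{L}(t)$ is a relatively compact perturbation of $\mathcal{L}_\infty(t)$.
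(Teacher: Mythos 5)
Your proposal is correct, and its skeleton coincides with the paper's: both split $\mathbf M(t)$ into the local multiplier $\mathbf M_1(t,\cdot)$ plus the rank-one nonlocal term $\boldsymbol u\mapsto \boldsymbol\phi(t)\langle \boldsymbol\psi(t,\cdot),\boldsymbol u\rangle$, and both dispose of the nonlocal term by observing that it factors through the one-dimensional space $\mathbb C$ (the paper writes it as $\mathcal K\circ\mathcal J$ and attributes the device to Kapitula, Kutz, and Sandstede~\cite{kapitula2004evans}). Where you diverge is in the treatment of the local term: the paper simply invokes \cite[Theorem~3.1]{zweck2021essential}, in which the compactness of $\mathbf M_1\circ(\mathcal L_\infty-\lambda)^{-1}$ was proved via the Kolmogorov--Riesz criterion, using the exponential decay and weak regularity of $\boldsymbol\psi$ together with the fact that $(\mathcal L_\infty-\lambda)^{-1}$ maps bounded sets of $L^2$ into bounded sets of $H^2$; you instead give a self-contained argument by truncating $\mathbf M_1$ to $[-R,R]$, factoring through the Rellich--Kondrachov embedding $H^2([-R,R])\hookrightarrow L^2([-R,R])$, and passing to the operator-norm limit as $R\to\infty$ using $\sup_{|x|>R}\norm{\mathbf M_1(t,x)}_{\mathbb C^{2\times 2}}\to 0$. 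The two mechanisms are equivalent in strength here, but yours has the advantage of being elementary and of making explicit two points the paper leaves buried in the citation: exactly where the hypothesis $(g_0/\Omega_g,\beta)\neq(0,0)$ enters (it forces $\mathbf B(t)=\tfrac{g(t)}{2\Omega_g^2}\mathbf I-\tfrac{\beta}{2}\mathbf J$ to be invertible, since $a(t)=g(t)/2\Omega_g^2\geq 0$ vanishes identically only when $g_0/\Omega_g=0$, so the inverse symbol decays like $\omega^{-2}$), and how a concrete $\lambda$ is produced (any real $\lambda>\sup_t g(t)/2$ avoids the spectral parabolas $\{-(a(t)\pm i\beta/2)\,\omega^2+g(t)/2\}$, yielding $(\mathcal L_\infty-\lambda)^{-1}\in\mathcal B(L^2(\mathbb R,\mathbb C^2),H^2(\mathbb R,\mathbb C^2))$). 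All the ingredients you use are covered by Hypothesis~\ref{hyp:conditions on the solution of FA}: $\boldsymbol\phi(t),\boldsymbol\psi(t)\in L^2(\mathbb R,\mathbb C^2)$ for the rank-one piece, and the pointwise-in-$t$ exponential decay of $\boldsymbol\psi(t,\cdot)$ for the tail estimate, so the argument goes through as written.
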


 \begin{proof}
Using an idea of Kapitula, Kutz, and Sandstede~\cite{kapitula2004evans} in their paper on the Evans function for nonlocal equations, we observe that
\begin{equation}
\mathcal L = \mathcal L_\infty + \mathbf M_1 + \mathcal K \circ\mathcal J,
\end{equation}
where $\mathcal J: L^2(\mathbb R,\mathbb C^2) \to\mathbb C$ is given by
$\mathcal J(\boldsymbol{u}) \,\,=\,\, \langle \boldsymbol\psi(t,\cdot), \boldsymbol{u} \rangle,
$
and $\mathcal K : \mathbb C \to L^2(\mathbb R, \mathbb C^2)$ is given by
 $
\mathcal K(a) \,\,=\,\, a \boldsymbol\phi.
$
Under Hypothesis~\ref{hyp:conditions on the solution of FA}, the analogous result in Zweck et al.~\cite[Theorem~3.1]{zweck2021essential} guarantees that $\mathcal L_\infty + \mathbf M_1$ is a 
relatively compact perturbation of $\mathcal L_\infty$. 
The theorem now follows 
from the fact that $\mathcal K \circ \mathcal J$ is compact,
since it factors through the finite dimensional space, $\mathbb C$. 
 \end{proof}
 
 
\section{Analyticity of asymptotic linearized operator in the fiber amplifier}\label{sec:analycity}
 
In this section, we show that the operator $\mathcal L_\infty(t)\, \mathcal U_\infty(t,s)$
is bounded on $L^2(\mathbb R,\mathbb C^2)$, where 
$\mathcal L_\infty(t)$ is the asymptotic linearized operator in the fiber amplifier 
given by \eqref{eq:LinfDef}, and
$\mathcal U_\infty(t,s)$ is the corresonding evolution family.
Zweck et al.~\cite{zweck2021essential} previously established an analogous result for
the constant-coefficient complex Ginzburg-Landau equation under the assumption
 that the spectral filtering coefficient in the equation is positive. 
These results  will be used in Section~\ref{sec:EssSpecEqual} to prove our main result, Theorem~\ref{thm:EssSpecEqual}.

We begin by recalling what it means for an operator to be sectorial~\cite{lunardi2012analytic, pazy2012semigroups}.

\begin{definition}
A linear operator $\mathcal A : D(\mathcal A) \subset X \to X$ is \emph{sectorial} if $\exists$
 $\omega \in\mathbb R$, $\theta\in (\pi/2,\pi]$, $M>0$ so that
\begin{enumerate}
\item $\rho(\mathcal A) \supset S_{\theta,\omega} := \{ \lambda\in \mathbb C \,|\, 
\lambda\neq \omega, \, | \arg(\lambda-\omega) | < \theta \} $, and
\item $\| \mathcal R(\lambda : \mathcal A)  \|
\leq \frac{M}{| \lambda-\omega | }$, for all $\lambda \in S_{\theta,\omega}$.
\end{enumerate}
\end{definition}

\begin{rem}
Lunardi~\cite[Chapter 2]{lunardi2012analytic} shows that if  $\mathcal A$ is a sectorial operator then a family of operators, $\mathcal T(t)=e^{t\mathcal A}$, for $t>0$, can be defined in terms of a Dunford contour integral so as to satisfy the semigroup properties
\begin{enumerate}
\item $\mathcal T(0) = \mathcal I$,
\item $\mathcal T(s+t) =  \mathcal T(s) \mathcal T(t)$, for all $T,s\geq 0$,
\end{enumerate}
and for which the mapping $t\mapsto e^{t\mathcal A} \,\,:\,\,\mathbb R^+ \to 
\mathcal B(X)$ is analytic. Furthermore,
\begin{equation}
\frac{d}{dt} \,e^{t\mathcal A} = \mathcal A \,e^{t\mathcal A}.
\end{equation}
Such a semigroup is called an \textit{analytic semigroup}.
\end{rem}

We consider 
solutions, $\boldsymbol{u}:[s,L_{\text{FA}}]\rightarrow H^2(\mathbb{R},\mathbb{C}^2)$, of the initial value problem
\begin{equation}\label{eq:AsymptoticIVP}
\begin{aligned}
    \partial_t \boldsymbol{u} &= \mathcal{L}_\infty(t)\boldsymbol{u},\,\,  \text{for } t>s,\\
    \boldsymbol{u}(s) &= \boldsymbol{v}, \qquad \,\,\text{for } \boldsymbol{v} \in H^2(\mathbb{R},\mathbb{C}^2).
\end{aligned}
 \end{equation}

\begin{theorem}\label{FiberAmpAnalyticSemigroupThm}
Suppose that $0< \Omega_g < \infty$, that $(g_0,\beta)\neq(0,0)$, and that
$\boldsymbol \psi$ is differentiable with respect to $t$.
Then, there exists a unique evolution system, $\mathcal U_\infty(t,s)$,
for \eqref{eq:AsymptoticIVP} with $0\leq s \leq t \leq L_{\operatorname{FA}}$ so that
\begin{enumerate}
\item $\exists$ $C$ so that for all $s, t$ we have $\| \mathcal U_\infty(t,s)\|_{\mathcal B(L^2(\mathbb R,\mathbb C^2))} \leq C$,
\item $\mathcal U_\infty(s,s) = \mathcal I$  and $\mathcal U_\infty(t,r) = \mathcal U_\infty(t,s) \circ \mathcal U_\infty(s,r) $
for all $0\leq r\leq s < t \leq L_{\operatorname{FA}}$,
\item $\mathcal U_\infty(t,s) \in \mathcal B(L^2(\mathbb R,\mathbb C^2), 
H^2(\mathbb R,\mathbb C^2))$,
\item The mapping $t\mapsto \mathcal U_\infty(t,s)$ is differentiable for $t\in (s,L_{\text{FA}}]$ with values in $\mathcal B(L^2(\mathbb R,\mathbb C^2))$, and
$\partial_t  \mathcal U_\infty(t,s) \,\,=\,\, \mathcal L_\infty(t)  \mathcal U_\infty(t,s)$,
i.e., the function $\boldsymbol{u}(t) = \mathcal U_\infty(t,s)\boldsymbol{v}$ solves \eqref{eq:AsymptoticIVP}, and
\item $\exists$ $C_1$ and $C_2$ so that  $ \forall \,0\leq s<t\leq L_{\operatorname{FA}}$,
\begin{equation}
\| \mathcal L_\infty(t) \mathcal U_\infty(t,s) \|_{\mathcal B(L^2(\mathbb R,\mathbb C^2)}
\,\,\leq\,\, C_1\frac{ G(t,s)}{t-s} + C_2\frac{g(t)}2,
\end{equation}
where $G(t,s) = \exp\left( \tfrac 12 \int_s^t g(\tau)\, d\tau\right)$. 
\end{enumerate}
\end{theorem}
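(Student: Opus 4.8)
The plan is to establish that each $\mathcal{L}_\infty(t)$ is sectorial on $L^2(\mathbb{R},\mathbb{C}^2)$ with domain $H^2(\mathbb{R},\mathbb{C}^2)$, with sector data uniform in $t\in[0,L_{\operatorname{FA}}]$, and that the map $t\mapsto\mathcal{L}_\infty(t)$ is regular enough to invoke the theory of non-autonomous parabolic evolution equations (Pazy, Chapter~5.6, or Lunardi, Chapter~6). That theory yields a unique evolution system satisfying conclusions (1)--(4) together with a singular smoothing estimate of the form $\|\mathcal{L}_\infty(t)\,\mathcal{U}_\infty(t,s)\|\le C/(t-s)$. To isolate the exponential factor $G(t,s)$ demanded in conclusion~(5), I would first peel off the zeroth-order term, writing $\mathcal{L}_\infty(t)=\mathcal{A}(t)+\tfrac12 g(t)\mathcal{I}$ with $\mathcal{A}(t)=\mathbf{B}\!\left(\tfrac{g(t)}{2\Omega_g^2},\tfrac{\beta}{2}\right)\partial_x^2$, and run the parabolic machinery for the purely second-order family $\mathcal{A}(t)$.

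The sectoriality of $\mathcal{A}(t)$ is the heart of the argument and is established through the Fourier transform. Since $\mathcal{F}\mathcal{A}(t)\mathcal{F}^{-1}$ is multiplication by the symbol $-\omega^2\mathbf{B}(a(t),\tfrac{\beta}{2})$, where $a(t)=g(t)/(2\Omega_g^2)$, and since $\mathbf{B}(a,b)=a\mathbf{I}+b\mathbf{J}$ is a normal matrix with eigenvalues $a\pm ib$, the resolvent norm of the multiplication operator equals $\sup_\omega\operatorname{dist}(\lambda,\{\zeta_+(\omega),\zeta_-(\omega)\})^{-1}$, where $\zeta_\pm(\omega)=-\omega^2\big(a(t)\pm i\tfrac{\beta}{2}\big)$. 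These eigenvalues sweep out two rays lying within angular distance $\phi_0(t)=\arctan\big(|\beta|\Omega_g^2/g(t)\big)$ of the negative real axis. Provided $g(t)>0$ and $\Omega_g<\infty$, we have $\phi_0(t)\le\arctan\big(|\beta|\Omega_g^2/g_{\min}\big)<\pi/2$, with $g_{\min}=\min_t g(t)>0$ finite and positive because $g_0>0$ and the pulse has finite energy depending continuously on $t$. Hence for $\lambda$ in the complementary sector $\{|\arg\lambda|<\pi-\phi_0-\delta\}$ one has $\operatorname{dist}(\lambda,\{\zeta_\pm(\omega)\})\ge c|\lambda|$ for some $c=c(\delta)>0$, giving the sectorial estimate with vertex $0$, angle $\theta>\pi/2$, and constant $M$ all independent of $t$. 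Moreover $\operatorname{Re}\langle\mathcal{A}(t)\boldsymbol u,\boldsymbol u\rangle=-a(t)\|\partial_x\boldsymbol u\|^2\le0$ (the $\mathbf{J}\partial_x^2$ part being skew-adjoint), so each $\mathcal{A}(t)$ is dissipative and the evolution system is a uniform contraction, which furnishes conclusion~(1). This is exactly the step where the hypotheses $0<\Omega_g<\infty$ and $g_0>0$ enter; when $g\equiv0$ the symbol is purely imaginary and $\mathcal{A}(t)$ merely generates a unitary group, so it is the gain-induced diffusion that renders the problem parabolic.

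For the $t$-regularity I would observe that the $t$-dependence enters only through the scalar $a(t)$, so $\mathcal{A}(t)-\mathcal{A}(s)=(a(t)-a(s))\partial_x^2$ and, for a fixed $\lambda\in\rho(\mathcal{A}(r))$, the operator $(\mathcal{A}(t)-\mathcal{A}(s))(\mathcal{A}(r)-\lambda)^{-1}$ has norm $O(|a(t)-a(s)|)$, since $\partial_x^2(\mathcal{A}(r)-\lambda)^{-1}$ is uniformly bounded ($\det\mathbf{B}(a(r),\tfrac{\beta}{2})=a(r)^2+\tfrac{\beta^2}{4}$ is bounded below). Because $\boldsymbol\psi$ is differentiable in $t$, the energy $E(\boldsymbol\psi(t))$ and hence $g(t)$ and $a(t)$ are $C^1$, so this difference is Lipschitz; the domains $D(\mathcal{A}(t))=H^2$ are constant. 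These are precisely the Sobolevskii--Tanabe hypotheses, so $\mathcal{U}_\infty^{\mathcal{A}}(t,s)$ exists, is unique, maps $L^2\to H^2$ for $t>s$ (conclusion~(3), using that the graph norm of $\mathcal{A}(t)$ controls the $H^2$ norm uniformly), is differentiable with $\partial_t\mathcal{U}_\infty^{\mathcal{A}}=\mathcal{A}(t)\mathcal{U}_\infty^{\mathcal{A}}$ (conclusion~(4)), and obeys $\|\mathcal{A}(t)\mathcal{U}_\infty^{\mathcal{A}}(t,s)\|\le C_1/(t-s)$. Finally I would recover $\mathcal{U}_\infty(t,s)$ for $\mathcal{L}_\infty$ via the scalar factorization $\mathcal{U}_\infty(t,s)=G(t,s)\,\mathcal{U}_\infty^{\mathcal{A}}(t,s)$, verified directly since $\tfrac12 g(t)\mathcal{I}$ commutes with $\mathcal{A}(t)$ and $\partial_t G(t,s)=\tfrac12 g(t)G(t,s)$ with $G(s,s)=1$; then
\begin{equation*}
\mathcal{L}_\infty(t)\,\mathcal{U}_\infty(t,s)=G(t,s)\,\mathcal{A}(t)\mathcal{U}_\infty^{\mathcal{A}}(t,s)+\tfrac12 g(t)\,G(t,s)\,\mathcal{U}_\infty^{\mathcal{A}}(t,s),
\end{equation*}
and taking norms with $\|\mathcal{U}_\infty^{\mathcal{A}}(t,s)\|\le1$ and absorbing the bounded factor $G(t,s)\le e^{g_0 L_{\operatorname{FA}}/2}$ into the constant of the second term yields conclusion~(5).

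The main obstacle I anticipate is making the sectorial estimate genuinely \emph{uniform} in $t$ and then threading the constants through the non-autonomous construction carefully enough to produce the precise two-term bound of conclusion~(5) rather than a generic $C/(t-s)$ estimate. The scalar factorization is what makes the exponential factor $G(t,s)$ emerge cleanly, so the real care lies in the uniform sectoriality---equivalently, in the uniform lower bound $g_{\min}>0$ and the uniform angular gap $\phi_0^{\max}<\pi/2$---and in verifying the Lipschitz regularity of $t\mapsto\mathcal{A}(t)$ needed to apply the parabolic evolution theorem.
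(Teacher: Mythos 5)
Your proposal is correct and, at the structural level, follows the same route as the paper's proof: both arguments peel off the scalar gain term, writing $\mathcal L_\infty(t) = \mathbf B(t)\partial_x^2 + \tfrac 12 g(t)\mathcal I$, establish that the purely second-order family $\mathcal A(t) = \mathbf B(t)\partial_x^2$ is sectorial with constant domain $H^2(\mathbb R,\mathbb C^2)$ and Lipschitz $t$-dependence in $\mathcal B(H^2,L^2)$ (using, exactly as you do, that differentiability of $\boldsymbol\psi$ in $t$ makes $g$, hence $\mathbf B$, Lipschitz), invoke the non-autonomous parabolic theory of Lunardi~\cite[Chap.~6]{lunardi2012analytic} to obtain the evolution system $\mathcal V_\infty(t,s)$ with the smoothing bound $\| \mathcal A(t)\mathcal V_\infty(t,s)\| \le C_1/(t-s)$, and then recover $\mathcal U_\infty(t,s) = G(t,s)\,\mathcal V_\infty(t,s)$ --- your scalar factorization is verbatim the paper's --- from which conclusion~(5) follows by the same two-term computation you give. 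The one step you handle genuinely differently is sectoriality. The paper proceeds operator-theoretically: it first gets generation of a $C_0$-semigroup via a Hille--Yosida-type bound, imports the vertical-line resolvent estimate $\| \mathcal R(\sigma + i\tau : \mathcal A)\| \le C/|\tau|$ from the proof of \cite[Lemma~5.2]{zweck2021essential}, and then applies \cite[Thm~2.5.2]{pazy2012semigroups} after an $\epsilon$-shift $\mathcal A_\epsilon = \mathcal A - (\epsilon+\omega)\mathcal I$ to satisfy that theorem's uniform-boundedness and $0\in\rho(\mathcal A_\epsilon)$ hypotheses, finally translating the sector back. You instead diagonalize the Fourier symbol $-\omega^2\bigl(a(t)\pm i\beta/2\bigr)$ of the normal matrix $\mathbf B$ and bound the resolvent by the distance from $\lambda$ to the two spectral rays at angle $\phi_0(t)=\arctan\bigl(|\beta|\Omega_g^2/g(t)\bigr)<\pi/2$ from the negative real axis. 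Your route is more self-contained (it avoids the shift gymnastics and the appeal to the Ginzburg--Landau paper) and has the virtue of making the uniformity of the sector data in $t$ fully explicit, which is exactly what Lunardi's theorem needs; the paper's route buys brevity by reusing the earlier resolvent computation. Your added dissipativity observation, $\Re\langle \mathcal A(t)\boldsymbol u,\boldsymbol u\rangle = -a(t)\|\partial_x \boldsymbol u\|^2 \le 0$, is a clean way to get the contraction bound behind conclusion~(1), which the paper leaves implicit.

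One point worth flagging: your argument requires $g_{\min} = \min_t g(t) > 0$, i.e.\ $g_0 > 0$, whereas the theorem is stated under the weaker hypothesis $(g_0,\beta)\neq(0,0)$. This is a defect of the statement rather than of your proof: as you yourself observe, if $g\equiv 0$ and $\beta\neq 0$ the symbol is purely imaginary, the evolution is a norm-preserving group, and conclusions~(3) and~(5) fail, so the case $g_0=0$, $\beta\neq 0$ cannot be covered by any proof. The paper's own argument makes the same positivity assumption implicitly, since the cited resolvent estimate requires the real diffusion coefficient $g(t)/2\Omega_g^2$ to be positive. So your effective hypotheses match the paper's, and your explicit identification of where parabolicity enters is, if anything, more careful than the stated theorem.
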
 
 
\begin{proof} 
We will  show that the first four conclusions of the theorem hold for the evolution operator,
$\mathcal V_\infty(t,s)$,  associated to  the differential operator,
$\mathbf B(t)\partial_x^2$, and that 
\begin{equation}
\| (\mathbf B(t)\partial_x^2) \mathcal V_\infty(t,s) \|_{\mathcal B(L^2(\mathbb R,\mathbb C^2)}
\,\,\leq\,\, \frac{C_1}{t-s}.
\end{equation}
Then, the theorem immediately follows for the original operators $\mathcal L_\infty(t) = \mathbf B(t)\partial_x^2
+ \frac 12 g(t)\mathcal I$ with
$\mathcal U_\infty(t,s) = G(t,s) \mathcal V_\infty(t,s)$.
Applying a result from Lunardi~\cite[Chap. 6]{lunardi2012analytic}, 
to establish the result for $\mathcal V_\infty(t,s)$
it suffices to show that
the operator $\mathcal A=\mathcal A(t):=\mathbf B(t)\partial_x^2 $ is sectorial 
and that  $t\mapsto \mathcal A(t) \in \operatorname{Lip}([0,L_{\text{FA}}],
\mathcal B(H^2(\mathbb R,\mathbb C^2),L^2(\mathbb R,\mathbb C^2)))$.

To show $\mathcal A$ is sectorial, we first observe  that $\mathcal A$ is closed and that $\exists$ $\omega \geq 0$ so that
$\forall \lambda > \omega$, $\lambda\in\rho(\mathcal A)$ and 
$\| \mathcal R(\lambda : \mathcal A)\| \leq \frac{1}{\lambda-\omega}$.
Therefore, by \cite[Cor 1.3.8]{pazy2012semigroups}, $\mathcal A$ is the infinitesimal generator of a $C_0$-semigroup for which $\|\mathcal T(t) \| \leq e^{\omega t}$.
By the proof of \cite[Lemma 5.2]{zweck2021essential}, for all $\sigma>0$, 
\begin{equation}\label{eq:ResolventEstimate}
\| \mathcal R(\sigma+i\tau : \mathcal A) \| \leq \frac{C}{|\tau|}.
\end{equation}
To show that this condition implies that $\mathcal A$ is sectorial we make use of
\cite[Thm 2.5.2]{pazy2012semigroups}. However, as stated, this theorem requires that
the semigroup $\mathcal T(t)$ is uniformly bounded and $0\in \rho(\mathcal A)$. 
Since neither of these conditions is guaranteed to hold, we proceed as follows. 
Fix $\epsilon > 0$, define $\mathcal T_\epsilon(t) := e^{-(\epsilon + \omega)t}\,\mathcal T(t)$, and let $\mathcal A_\epsilon = \frac{\partial T_\epsilon}{\partial t}(0)$. Then
$\| \mathcal T_\epsilon (t)\| < 1$ is uniformly bounded and $0\in \rho(\mathcal A_\epsilon)$.
Therefore, the assumptions of \cite[Thm 2.5.2]{pazy2012semigroups} hold for
$\mathcal A_\epsilon$. 
Furthermore, \eqref{eq:ResolventEstimate} holds for $\mathcal A_\epsilon$ since
$\mathcal R(\sigma+i\tau : \mathcal A_\epsilon) = \mathcal R(\sigma+\epsilon+\omega + i\tau : \mathcal A)$.  So by \cite[Thm 2.5.2]{pazy2012semigroups},
$\exists$ $0 < \delta < \frac \pi 2$, $M>0$ so that
\begin{enumerate}
\item $\rho(\mathcal A_\epsilon) \supset \Sigma = \{ \lambda\in \mathbb C \,:\, 
| \arg \lambda | < \frac \pi 2 + \delta \} \cup \{0\} $, and
\item $\| \mathcal R(\lambda : \mathcal A_\epsilon) \|
\leq \frac{M}{| \lambda | }$, for all $\lambda \in \Sigma \setminus \{0\}$.
\end{enumerate}
Translating these conclusions back into statements about $\mathcal A$ itself,
we obtain
\begin{equation}
\| \mathcal R(\lambda : \mathcal A)\|  = 
\| \mathcal R(\lambda - (\epsilon+\omega): \mathcal A_\epsilon)\|
\leq \frac{M}{|\lambda - (\epsilon+\omega)|},
\end{equation}
whenever $\lambda - (\epsilon+\omega) \in \Sigma \setminus \{0\}$, which holds
precisely when $\lambda \in S_{\frac \pi 2 + \delta, \epsilon + \omega}$. 
Therefore, the operators $\mathcal A=\mathcal A(t)$ are sectorial.

Finally,  the mapping  $t\mapsto \mathcal A(t)$ is Lipschitz, since $\exists$ $C$ so that
\begin{equation*}
\| \mathcal A(t) - \mathcal A(s) \|_{\mathcal B(H^2(\mathbb R,\mathbb C^2),L^2(\mathbb R,\mathbb C^2)))} \leq 
\| \mathbf B(t) - \mathbf B(s) \|_{\mathbb C^{2\times 2}}
= \frac{|g(t)-g(s)|}{2\Omega_g^2} 
\leq  \frac{C |t-s)|}{2\Omega_g^2},
\end{equation*}
 since $t\mapsto g(t)$ is Lipschitz if $\boldsymbol\psi$ is differentiable with respect to $t$.
 \end{proof}
 
 
\section{The essential spectrum of the monodromy operator}\label{sec:EssSpecEqual}

In this section we prove the main result, Theorem~\ref{thm:EssSpecEqual}, which gives
conditions under which
$\sigma_{\text{ess}}(\mathcal M) = \sigma_{\text{ess}}(\mathcal M_\infty)$.

\begin{proof}[Proof of Theorem~\ref{thm:EssSpecEqual}]
The lumped model we consider consists of fiber segments (single-mode fibers and a fiber amplifier) and discrete input-output devices (a dispersion compensation element, an output coupler, and a fast saturable absorber). We let $t\in[0,T]$ denote location in the laser loop.
In a fiber segment of length, $L$, that starts at location $t=T_1$, we have $t=t_{\rm loc} + T_1 \in [T_1,T_1+L]$, where $t_{\rm loc}$ denotes distance along the fiber. For an input-output device at location, $t$, we use $t_-$ and $t_+$ to denote the locations of the 
input and output to the device, and we impose the ordering $t_- < t_+$. We let $\mathcal U(t,s)$ and $\mathcal U_\infty(t,s)$, for $t>s$, denote the linearized evolution  and the asymptotic  linearized evolution operators
from location $s$ to location $t$. 
In particular, for an input-output device at location, $t$,  
the linearized transfer operator of the device is denoted by $\mathcal U(t_+,t_-)$.
The corresponding monodromy operators are then given by $\mathcal M = \mathcal U(T,0)$ and $\mathcal M_\infty = \mathcal U_\infty(T,0)$.
As in \eqref{eq:Linearization of the round trip operator}, $\mathcal M$ and $\mathcal M_\infty$ are both compositions of the linearized transfer operators of the fibers and devices in the lumped model. By Weyl's essential spectrum theorem~\cite{kato2013perturbation}, we just need to show that there is a compact operator, $\mathcal K$ so that 
\begin{equation}\label{eq:CompactPerturbation}
\mathcal M = \mathcal M_\infty + \mathcal K.
\end{equation}
To do so we will inductively 
show that at the location, $t$, of the end of each fiber segment that
\begin{equation}\label{eq:CompactPerturbationFiber}
\mathcal U(t,0) = \mathcal U_\infty(t,0) + \mathcal K(t),
\end{equation}
and that  at the exit, $t_+$, to each input-output device, that
\begin{equation}\label{eq:CompactPerturbationInOut}
\mathcal U(t_+,0) = \mathcal U_\infty(t_+,0) + \mathcal K(t_+),
\end{equation}
for some compact operators, $\mathcal K(t)$ and $\mathcal K(t_+)$.

First, we show that \eqref{eq:CompactPerturbationFiber} holds in the fiber amplifier.
The argument is the same for the single-mode fibers.
For a fiber segment of length, $L$, starting at location, $T_1$, an argument based on the variation of parameters formula (see \cite[Lemma 5.1]{zweck2021essential}) shows that,
for all $t\in[T_1,T_1+L]$,
\begin{equation}\label{eq:CompactPerturbationVofP}
\mathcal U(t,0) \,\,=\,\, \mathcal U_\infty(t,T_1) \circ \mathcal U(T_1,0) \,\,+\,\,
\int_{T_1}^t \mathcal U_\infty(t,t')\circ\mathbf M(t') \circ\mathcal U(t',0)\, dt',
\end{equation}
where $\mathbf M$ is the multiplication operator given by \eqref{eq:linearized operator of FA for evolution family new}.
Indeed, this equation is consistent at $t=T_1$ and implies that
\begin{equation}
\partial_t \,\mathcal U(t,0) = \mathcal L(t)\mathcal U(t,0).
\end{equation}

\begin{lemma}\label{lemma:IntegralCompact}
The operator 
\begin{equation}\label{eq:IntegralCompact}
\widetilde{\mathcal K}(t) = \int_{T_1}^t \mathcal U_\infty(t,t')\circ\mathbf M(t') \circ\mathcal U(t',0)\, dt'
\end{equation}
 is compact.
\end{lemma}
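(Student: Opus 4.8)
The plan is to show that the integrand in \eqref{eq:IntegralCompact} is a compact operator for each fixed $t' \in [T_1,t)$, to bound its operator norm by an integrable function of $t'$ on $[T_1,t)$, and then to conclude that the Bochner integral $\widetilde{\mathcal K}(t)$ is compact because the compact operators form a norm-closed subspace of $\mathcal B(L^2(\mathbb R,\mathbb C^2))$.

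Set $I(t') := \mathcal U_\infty(t,t')\circ\mathbf M(t')\circ\mathcal U(t',0)$. First I would split $\mathbf M(t') = \mathbf M_1(t') - \boldsymbol\phi(t')\langle\boldsymbol\psi(t'),\,\cdot\,\rangle$ as in \eqref{eq:linearized operator of FA for evolution family new}. The nonlocal term is rank one, so the corresponding piece of $I(t')$ factors through $\mathbb C$ and is compact; moreover, since $\mathcal U_\infty(t,t')$ and $\mathcal U(t',0)$ are bounded on $L^2$ (the former by Theorem~\ref{FiberAmpAnalyticSemigroupThm}(1), the latter as a composition of bounded transfer operators) and $\boldsymbol\phi(t'),\boldsymbol\psi(t')\in L^2$, this piece contributes a uniformly bounded, hence integrable, term with no singularity. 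The work is therefore concentrated on the multiplication piece $\mathcal U_\infty(t,t')\mathbf M_1(t')\mathcal U(t',0)$, whose compactness must come from combining the exponential decay of the coefficients of $\mathbf M_1(t')$ (which holds by Hypothesis~\ref{hyp:conditions on the solution of FA}, since these coefficients are built from $\boldsymbol\psi$) with the smoothing of the analytic evolution family $\mathcal U_\infty$.

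Because $\mathcal U_\infty(t,t')$ sits to the left of $\mathbf M_1(t')$, it is cleanest to argue on the adjoint $I(t')^* = \mathcal U(t',0)^*\circ\mathbf M_1(t')^*\circ\mathcal U_\infty(t,t')^*$ (plus the finite-rank adjoint). Reading right to left, the smoothing operator now acts first: since $\mathcal U_\infty(t,t')$ is an $x$-independent Fourier multiplier (analogous to \eqref{eq:FTFiberAmp}), so is $\mathcal U_\infty(t,t')^*$, and the explicit multiplier shows that $\mathcal U_\infty(t,t')^* \in \mathcal B(L^2(\mathbb R,\mathbb C^2), H^s(\mathbb R,\mathbb C^2))$ with $\|\mathcal U_\infty(t,t')^*\|_{L^2\to H^s}\le C\,(t-t')^{-s/2}$ for any $0\le s<2$, the rate coming from $\sup_\omega\langle\omega\rangle^s e^{-c\omega^2}\sim c^{-s/2}$ with $c = \tfrac{1}{2\Omega_g^2}\int_{t'}^t g(\tau)\,d\tau \gtrsim (t-t')$. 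Multiplication by the decaying matrix $\mathbf M_1(t')^*$ is compact from $H^s(\mathbb R,\mathbb C^2)$ to $L^2(\mathbb R,\mathbb C^2)$ for any $s>0$; this is the same Rellich-type fact underlying the relative compactness in Theorem~\ref{thm:EssSpecRCP}. Composing with the bounded operator $\mathcal U(t',0)^*$ shows that $I(t')^*$, and hence $I(t')$, is compact, with $\|I(t')\|_{\mathcal B(L^2)}\le C\,(t-t')^{-s/2}$.

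Fixing any $s\in(0,2)$ makes $(t-t')^{-s/2}$ integrable on $[T_1,t)$, so, together with the norm-continuity of $t'\mapsto I(t')$ on compact subsets of $[T_1,t)$, the map $t'\mapsto I(t')$ is Bochner integrable into $\mathcal B(L^2(\mathbb R,\mathbb C^2))$. Since it takes values in the norm-closed ideal of compact operators, its Bochner integral $\widetilde{\mathcal K}(t)$ is compact, as claimed. I expect the main obstacle to be precisely the tension between compactness and integrability: obtaining compactness forces us to use the smoothing of $\mathcal U_\infty$, but the full $H^2$ smoothing used in Theorem~\ref{thm:EssSpecRCP} produces a non-integrable $(t-t')^{-1}$ singularity at $t'=t$, so one must instead extract only fractional ($H^s$, $s<2$) smoothing, which still suffices for compactness of the decaying multiplication while keeping the singularity integrable.
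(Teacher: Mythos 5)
Your proposal is correct in its essentials and shares the paper's architecture---prove the integrand compact by passing to the adjoint so that the smoothing asymptotic evolution acts first, then pass compactness through the integral---but the mechanism you use for the integrand is genuinely different from the paper's. The paper writes
\begin{equation*}
\mathcal C^* \,=\, \mathcal U^*(L,\tau')\circ \mathbf M^*(L-\tau')\circ
(\mathcal L^*_\infty(\tau')-\lambda)^{-1}\circ (\mathcal L^*_\infty(\tau')-\lambda) \circ
\mathcal U^*_\infty(\tau',\tau),
\end{equation*}
invoking Theorem~\ref{thm:EssSpecRCP} (adapted to adjoints) for compactness of $\mathbf M^*\circ(\mathcal L_\infty^*-\lambda)^{-1}$ and conclusion~5 of Theorem~\ref{FiberAmpAnalyticSemigroupThm} for boundedness of $(\mathcal L_\infty^*-\lambda)\circ\mathcal U_\infty^*$, whereas you compute the Gaussian Fourier multiplier directly to get $\mathcal U_\infty(t,t')^*\in\mathcal B(L^2,H^s)$ with norm $O((t-t')^{-s/2})$ for $0<s<2$, and then use a Rellich-type lemma for the exponentially decaying multiplication $\mathbf M_1^*:H^s\to L^2$; both steps are sound, and your route buys a quantitative, integrable singularity bound that the paper's factorization does not provide (its bound blows up like $(t-t')^{-1}$). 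Three caveats. First, the ``tension between compactness and integrability'' you identify as the main obstacle is actually illusory: the integrand is \emph{uniformly} bounded in operator norm, since each factor is (Theorem~\ref{thm:Existence of evolution family of linearized FA}(1), Lemma~\ref{lem:M bounded}, and Theorem~\ref{FiberAmpAnalyticSemigroupThm}(1)), so one may use the singular factorization only on $[T_1,t-\delta]$ and the trivial uniform bound on the tail $\int_{t-\delta}^t$, exhibiting $\widetilde{\mathcal K}(t)$ as a norm limit of compact operators; this is how the paper (delegating the integral step to \cite[Theorem 5.1]{zweck2021essential}) gets by without fractional smoothing, so your $H^s$ refinement is elegant but not forced. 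Second, your estimate $c\geq \tfrac{1}{2\Omega_g^2}\int_{t'}^t g \geq c_0(t-t')$ silently requires $g(t)\geq g_{\min}>0$, i.e.\ $g_0>0$; this matches what the paper effectively uses (its sectoriality also fails when $g\equiv 0$, $\beta\neq 0$), but you should state it. Third, your appeal to ``norm-continuity of $t'\mapsto I(t')$'' is glib: $\mathcal U(t',0)$ is only \emph{strongly} continuous in $t'$, and a compact factor sitting to its left upgrades strong to norm convergence only via strong continuity of the adjoint family $\mathcal U^*$---which does hold here (the paper constructs exactly this adjoint evolution family), but the step needs to be said; alternatively, define the integral strongly and use the $\delta$-truncation argument above, which sidesteps Bochner measurability in $\mathcal B(L^2)$ entirely.
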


Given this lemma and substituting the induction hypothesis,
\begin{equation}\label{eq:CompactPerturbationIndHyp}
\mathcal U(T_1,0) = \mathcal U_\infty(T_1,0) + \mathcal K(T_1),
\end{equation}
into \eqref{eq:CompactPerturbationVofP} yields \eqref{eq:CompactPerturbationFiber} 
with
\begin{equation}
\mathcal K(t) \,\,=\,\, \mathcal U_\infty(t,T_1)\circ \mathcal K(T_1) 
\,\,+\,\, \widetilde{\mathcal K}(t),
\end{equation} 
which is compact since the composition of a bounded and a compact operator is compact.

Second, we show that \eqref{eq:CompactPerturbationInOut} holds for each input-output device.
Let 
\begin{equation}
\mathcal B(t_+,t_-) \,\,=\,\, \mathcal U(t_+,t_-) - \mathcal U_\infty(t_+,t_-).
\end{equation}
For all the input-output devices in the lumped model we are considering, except for the 
fast saturable absorber, $\mathcal B(t_+,t_-)=0$. By  \eqref{eq:SA linearized transfer function}, for the saturable absorber, 
$\mathcal B(t_+,t_-)(\boldsymbol{u}) = \mathbf B\boldsymbol{u}$ is a multiplication operator with
\begin{equation}\label{eq:BoperatorSA}
 \mathbf B(x) = (\ell_0 - \ell(\psi(x))\,\mathbf I  - \frac{2 \ell^2(\psi(x))}{\ell_0 P_{\rm sat}}
 \boldsymbol \psi  \boldsymbol \psi^T,
 \end{equation}
where 
\begin{equation}
    \ell(\psi) = \frac{\ell_0}{1 + |\psi_{\text{in}}|^2/P_{\text{sat}}}.
\end{equation}
Since $\boldsymbol\psi$ is assumed to be bounded, $\mathcal B(t_+,t_-)\in\mathcal B(L^2(\mathbb R,\mathbb C^2))$ is bounded but is not compact. Nevertheless,
we have the following theorem.

\begin{theorem}\label{thm:CompactPerturbationFSA}
Under the assumptions of Theorem~\ref{thm:EssSpecEqual}, for the fast saturable absorber
the operator, $\mathcal B(t_+ , t_-)\circ \mathcal U_\infty(t_- , 0)$, is compact.
\end{theorem}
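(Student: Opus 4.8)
The plan is to exploit the fact that, although the multiplication operator $\mathcal B(t_+,t_-) = M_{\mathbf B}$ is bounded but not compact on $L^2(\mathbb R,\mathbb C^2)$, its matrix symbol $\mathbf B$ decays to zero at spatial infinity, while the asymptotic evolution operator $\mathcal U_\infty(t_-,0)$ gains regularity, mapping $L^2$ boundedly into $H^2$. Since multiplication by a continuous matrix symbol that vanishes at infinity is a \emph{compact} map from $H^2(\mathbb R,\mathbb C^2)$ into $L^2(\mathbb R,\mathbb C^2)$, the composition factors as $L^2\to H^2\to L^2$ with a compact second factor and is therefore compact. Thus I would split the proof into verifying the two ingredients separately: the decay of $\mathbf B$, and the smoothing of $\mathcal U_\infty(t_-,0)$.

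First I would record the decay and regularity of the symbol $\mathbf B$ in \eqref{eq:BoperatorSA}. Under Hypothesis~\ref{hyp:Conditions on the solution of SA} the pulse $\boldsymbol\psi$ is bounded, continuous, and decays to zero as $|x|\to\infty$. Consequently $\ell(\boldsymbol\psi(x))\to\ell_0$ and the rank-one term $\boldsymbol\psi\boldsymbol\psi^T\to\mathbf 0$, so that $\mathbf B(x)\to\mathbf 0$ as $|x|\to\infty$; moreover $\mathbf B$ is continuous and bounded on $\mathbb R$.

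Next I would argue that $\mathcal U_\infty(t_-,0)\in\mathcal B(L^2(\mathbb R,\mathbb C^2),H^2(\mathbb R,\mathbb C^2))$. This is where the location of the saturable absorber in the loop enters: since the essential spectrum of the round-trip operator does not depend on where the loop is cut—cyclically permuted compositions $AB$ and $BA$ have the same essential spectrum away from $0$, and $0$ belongs to both—I may assume the loop is cut so that at least one fiber segment precedes the saturable absorber. The asymptotic evolution of that fiber is smoothing: for the fiber amplifier this is conclusion (3) of Theorem~\ref{FiberAmpAnalyticSemigroupThm}, and for each single-mode fiber the term $\epsilon>0$ in \eqref{eq:NLSFilter} makes the linearized semigroup analytic, hence smoothing, by the same argument. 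The remaining asymptotic transfer operators are bounded on $H^2$ by \eqref{eq:DiscOpsBounded}, so the full composition $\mathcal U_\infty(t_-,0)$ maps $L^2$ boundedly into $H^2$.

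The crux is the compactness of $M_{\mathbf B}:H^2(\mathbb R,\mathbb C^2)\to L^2(\mathbb R,\mathbb C^2)$, which I would prove directly. Given a bounded sequence $\boldsymbol u_n$ in $H^2$, the local compactness of the embedding $H^2([-R,R])\hookrightarrow L^2([-R,R])$ together with a diagonal argument over $R=1,2,\dots$ produces a subsequence converging in $L^2_{\mathrm{loc}}(\mathbb R)$. To control the tails, given $\varepsilon>0$ I would choose $R$ with $\sup_{|x|>R}\norm{\mathbf B(x)}_{\mathbb C^{2\times 2}}<\varepsilon$, so that the contribution of $\{|x|>R\}$ to $\norm{M_{\mathbf B}(\boldsymbol u_{n_k}-\boldsymbol u_{n_j})}_{L^2(\mathbb R,\mathbb C^2)}$ is bounded by $\varepsilon$ times the ($H^2$-bounded) norms, while the contribution of $[-R,R]$ is controlled by $\norm{\mathbf B}_{\infty}$ times the $L^2([-R,R])$ distance, which tends to zero along the subsequence. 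Hence $M_{\mathbf B}\boldsymbol u_{n_k}$ is Cauchy in $L^2$ and $M_{\mathbf B}$ is compact; this is the same decay-plus-smoothing mechanism underlying Theorem~\ref{thm:EssSpecRCP} and \cite[Thm 3.1]{zweck2021essential}, which I could alternatively invoke. Composing the compact $M_{\mathbf B}$ with the bounded $\mathcal U_\infty(t_-,0)$ then yields the claim. I expect this compactness step, together with the observation that the smoothing of $\mathcal U_\infty(t_-,0)$ is precisely what rescues an otherwise non-compact multiplication operator, to be the main point, with the loop-ordering subtlety the secondary issue to state carefully.
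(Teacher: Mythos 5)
Your proposal is correct, and its overall skeleton matches the paper's: both proofs rest on the same two ingredients, namely that $\mathcal U_\infty(t_-,0)\in\mathcal B(L^2(\mathbb R,\mathbb C^2),H^2(\mathbb R,\mathbb C^2))$ (which the paper establishes exactly as you do, via the Fourier-domain smoothing \eqref{eq:FTFiberAmp} of the fiber amplifier with $\Omega_g<\infty$, the $\epsilon>0$ filtering term for the single-mode fibers, and \eqref{eq:DiscOpsBounded} for the discrete devices), and that multiplication by the decaying symbol $\mathbf B$ of \eqref{eq:BoperatorSA} then kills non-compactness. Where you genuinely diverge is in the crux step: the paper proves total boundedness of $\mathcal K(\mathfrak H)$ directly via the Kolmogorov--Riesz compactness theorem (Theorem~\ref{KRCompactnessThm}), verifying the tail condition from the exponential decay of $\boldsymbol\psi$ and the translation-equicontinuity condition from $\mathbf B\in C^1$ together with Evans' difference-quotient estimate for $H^1$ functions, whereas you prove compactness of $M_{\mathbf B}:H^2\to L^2$ by Rellich--Kondrachov local compactness plus a diagonal extraction and tail control. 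Your route is more elementary and needs strictly less of $\mathbf B$ --- only continuity, boundedness, and vanishing at infinity, with no rate of decay and no bound on $\mathbf B_x$ --- while the paper's Kolmogorov--Riesz argument is quantitative and parallels its proof of Theorem~\ref{thm:EssSpecRCP}, making the two relative-compactness results visibly instances of one mechanism; both are valid since Hypothesis~\ref{hyp:Conditions on the solution of SA} supplies the stronger regularity and exponential decay in any case. One further point in your favor: you explicitly justify why $\mathcal U_\infty(t_-,0)$ may be assumed to contain a fiber segment, via invariance of the essential spectrum under cyclic permutation of the composition (away from $0$). The paper leaves this implicit --- its round-trip operator \eqref{eq:Round trip operator} places the saturable absorber first, where $t_-=0$ would make $\mathcal B(t_+,t_-)\circ\mathcal U_\infty(t_-,0)=\mathcal B(t_+,t_-)$, which is not compact --- so your re-basing of the loop is not a pedantic aside but a step the argument actually requires, and you are right to flag it; the only caution is that the identity $\sigma_{\mathrm{ess}}(AB)\setminus\{0\}=\sigma_{\mathrm{ess}}(BA)\setminus\{0\}$ should be checked against the specific Fredholm definition of Definition~\ref{d:Spectra}, and the membership of $0$ handled separately (here it is harmless, since $\lambda_\pm(\omega)\to 0$ as $\omega\to\infty$, so $0$ lies in the closed spectrum on both sides).
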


Given this theorem and substituting the induction hypothesis,
\begin{equation}\label{eq:CompactPerturbationInOutIndHyp}
\mathcal U(t_-,0) = \mathcal U_\infty(t_-,0) + \mathcal K(t_-),
\end{equation}
into $\mathcal U(t_+,0) = \mathcal U(t_+,t_-)\circ \mathcal U(t_-,0)$ yields  
\eqref{eq:CompactPerturbationInOut} with
\begin{equation}
\mathcal K(t_+) \,\,=\,\, \mathcal B(t_+ , t_-)\circ \mathcal U_\infty(t_-,0) \,\,+\,\,
\mathcal U(t_+,t_-)\circ  \mathcal K(t_-),
\end{equation}
which is compact by Theorem~\ref{thm:CompactPerturbationFSA} and Proposition~\ref{prop:Existence of evolution family for SA}.
\end{proof}

\begin{proof}[Proof of Lemma~\ref{lemma:IntegralCompact}]
The proof uses the same basic ideas as in the proof of the analogous result for the complex Ginzburg-Landau equation given in~\cite[Theorem 5.1]{zweck2021essential}.
Here we confine our attention to showing that the integrand, $\mathcal C$, in \eqref{eq:IntegralCompact}
is compact. To do so, it suffices to show that the adjoint, $\mathcal C^*$, is compact.

Throughout the proof, we use times, $0<s<t<L$, that are local to the fiber, and we let
$\tau=L-t$ and $\sigma=L-s$ be the corresponding backwards time variables. 
Since the adjoint differential operator is defined by $\mathcal L^*(\tau) := [\mathcal L(L-\tau)]^*$, we have that
\begin{equation}
\mathcal L^*(\tau) = \mathcal L^*_\infty(\tau) + \mathbf M^*(L-\tau).
\end{equation}
By definition, the adjoint linearized evolution operator, $\mathcal U^*(\sigma,\tau)$,  in the fiber is the operator that satisfies
\begin{equation}
\partial_\sigma\, \mathcal U^*(\sigma,\tau) = \mathcal L^*(\sigma) \mathcal U^*(\sigma,\tau).
\end{equation}
This  operator is characterized by the equation
\begin{equation}
\langle \mathcal U(t,s)\boldsymbol{u}(s), \boldsymbol{v}(\tau) \rangle_{L^2(\mathbb R, \mathbb C^2)} \,\,=\,\,
\langle \boldsymbol{u}(s), \mathcal U^*(\sigma,\tau) \boldsymbol{v}(\tau) \rangle_{L^2(\mathbb R, \mathbb C^2)}.
\end{equation}
Therefore, 
\begin{equation}
[\mathcal U(t,s)]^* \,\,=\,\, \mathcal U^*(L-s,L-t).
\end{equation}
Letting $\tau'=L-t'$, we find that
\begin{equation}
\mathcal C^* \,\,=\,\, \mathcal U^*(L,\tau')\circ \mathbf M^*(L-\tau')\circ \mathcal U^*_\infty(\tau',\tau).
\end{equation}
As in Theorem~\ref{thm:EssSpecRCP}, 
$\mathcal L^*(\tau')$ is a relatively compact perturbation of
$\mathcal L^*_\infty(\tau')$. Therefore, there is a $\lambda(\tau')\in \rho(\mathcal L^*_\infty(\tau'))$ so that $\mathbf M^*(L-\tau') \circ (\mathcal L^*_\infty(\tau')-\lambda(\tau'))^{-1}$ is compact. Furthermore, by Theorem~\ref{FiberAmpAnalyticSemigroupThm} 
for the fiber amplifier (which also holds for the adjoint operators) and the corresponding result for the single mode fibers (modeled with the additional spectral filtering term as in~\eqref{eq:NLSFilter}, see \cite[Lemma 5.2]{zweck2021essential}), we have that
$(\mathcal L^*_\infty(\tau')-\lambda(\tau'))\circ \mathcal U^*_\infty(\tau',\tau)$
is bounded. 
Therefore, 
\begin{equation}
\mathcal C^* \,\,=\,\, \mathcal U^*(L,\tau')\circ \mathbf M^*(L-\tau')\circ 
(\mathcal L^*_\infty(\tau')-\lambda(\tau'))^{-1}\circ (\mathcal L^*_\infty(\tau')-\lambda(\tau')) \circ
\mathcal U^*_\infty(\tau',\tau).
\end{equation}
is compact, as required.
\end{proof}
 
 The proof of Theorem~\ref{thm:CompactPerturbationFSA} relies on the
 Kolmogorov-Riesz compactness theorem, which can be stated as 
 follows~\cite{hanche2010kolmogorov}.
 
 \begin{theorem}\label{KRCompactnessThm}
 A subset, $\mathfrak F \subset L^2(\mathbb R,\mathbb C^2) $,
is totally bounded  if and only if the following three conditions hold:
\begin{enumerate}
\item $\mathfrak F $ is bounded,
\item for all $\epsilon > 0$ there is an $R>0$ so that for all $f \in \mathfrak F$,
\begin{equation}
\int\limits_{|x|>R} \| f(x) \|^2_{\mathbb C^2}\, dx \,\, < \,\, \epsilon^2,\qquad\text{and}
\end{equation}
\item for all $\epsilon >0$ there is a  $\delta>0$ so that for all $f \in \mathfrak F$ and
$y\in\mathbb R$ with $|y|<\delta$,
\begin{equation}
\int\limits_{\mathbb R} \| f(x+y) -f(x)\|^2_{\mathbb C^2}\, dx \,\, < \,\, \epsilon^2.
\end{equation}
\end{enumerate}
 \end{theorem}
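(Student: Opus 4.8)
The plan is to prove the two implications separately, and to treat the $\mathbb C^2$-valued setting exactly as in the scalar case, since every estimate below involves only the Euclidean norm $\|\cdot\|_{\mathbb C^2}$ and the averaging operator I introduce acts componentwise. The necessity direction (totally bounded $\Rightarrow$ (1)--(3)) is routine and I would dispatch it first. Boundedness of a totally bounded set gives (1) immediately. For (2) and (3), fix $\epsilon>0$ and pick a finite $(\epsilon/3)$-net $f_1,\dots,f_N$, so every $f\in\mathfrak F$ is within $\epsilon/3$ of some $f_j$. Each fixed $f_j\in L^2(\mathbb R,\mathbb C^2)$ individually satisfies a tail estimate (for some $R_j$) and a continuity-of-translation estimate (for some $\delta_j$). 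Setting $R=\max_j R_j$ and $\delta=\min_j\delta_j$ and using the triangle inequality together with the fact that translation is an $L^2$-isometry transfers both estimates from the net to all of $\mathfrak F$, which yields (2) and (3).

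For sufficiency I would use the standard reduction: a subset of a normed space is totally bounded provided that for every $\epsilon>0$ it lies within distance $\epsilon$ of some totally bounded set (covering that set by finitely many $\epsilon$-balls produces a finite $2\epsilon$-net). The approximating family is built from an averaging operator: for $\delta>0$ define $(M_\delta f)(x)=\tfrac{1}{2\delta}\int_{x-\delta}^{x+\delta} f(y)\,dy$, understood componentwise. Writing $f-M_\delta f$ as an average of translates $f(\cdot)-f(\cdot+z)$ over $|z|\le\delta$ and applying Minkowski's integral inequality gives the first key estimate $\|f-M_\delta f\|_{L^2}\le \sup_{|y|\le\delta}\|f(\cdot+y)-f(\cdot)\|_{L^2}$, whose right-hand side is uniformly small over $\mathfrak F$ by (3). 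Given $\epsilon$, I would choose $R$ from (2), $\delta$ from (3), and let $\mathbf 1_R$ denote truncation to $[-R,R]$; then $\mathfrak G=\{\mathbf 1_R M_\delta f:f\in\mathfrak F\}$ is the candidate, and (2) together with the previous estimate shows every $f\in\mathfrak F$ lies within $O(\epsilon)$ of its image in $\mathfrak G$.

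It remains to show $\mathfrak G$ is totally bounded, which I expect to be the main technical step. By (1) there is $C$ with $\|f\|_{L^2}\le C$ for all $f\in\mathfrak F$. Cauchy--Schwarz then gives a uniform sup bound $\|M_\delta f\|_\infty\le (2\delta)^{-1/2}C$ and, since the symmetric difference of the integration windows for $x$ and $x'$ has measure at most $2|x-x'|$, a uniform H\"older modulus $\|(M_\delta f)(x)-(M_\delta f)(x')\|_{\mathbb C^2}\le C_\delta\,|x-x'|^{1/2}$. Hence $\{M_\delta f$ restricted to $[-R,R]:f\in\mathfrak F\}$ is bounded and equicontinuous in $C([-R,R],\mathbb C^2)$, so by Arzel\`a--Ascoli it is totally bounded there, and therefore totally bounded in $L^2([-R,R],\mathbb C^2)$ because $[-R,R]$ has finite measure. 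Extending by zero shows $\mathfrak G$ is totally bounded in $L^2(\mathbb R,\mathbb C^2)$, and the reduction lemma finishes the proof.

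The main obstacle is precisely the sufficiency direction: verifying that the single operator $\mathbf 1_R M_\delta$ simultaneously (i) approximates the identity \emph{uniformly} over $\mathfrak F$, which is what forces the careful coupling of $R$ to hypothesis (2) and $\delta$ to hypothesis (3), and (ii) has totally bounded image, which rests on the Arzel\`a--Ascoli estimates powered by the uniform bound (1). By comparison the necessity direction and the passage from scalar- to $\mathbb C^2$-valued functions are routine.
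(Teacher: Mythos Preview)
Your proof is correct and follows the standard approach to the Kolmogorov--Riesz theorem. However, note that the paper does not actually prove this statement: it is quoted as a known result with a citation to Hanche-Olsen and Holden, and is used only as a tool in the proof of Theorem~\ref{thm:CompactPerturbationFSA}. So there is no ``paper's own proof'' to compare against.

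That said, your argument is essentially the same one found in the cited reference: necessity via a finite $\epsilon$-net plus continuity of translation on $L^2$; sufficiency via the averaging operator $M_\delta$ combined with truncation, followed by Arzel\`a--Ascoli on $C([-R,R],\mathbb C^2)$ to obtain total boundedness of the approximating family. Your identification of the main technical point---that the uniformity in hypotheses (2) and (3) is exactly what makes $\mathbf 1_R M_\delta$ approximate the identity uniformly over $\mathfrak F$---is on target, and the passage to $\mathbb C^2$-valued functions is indeed immaterial since all estimates depend only on the norm.
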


 \begin{proof}[Proof of Theorem~\ref{thm:CompactPerturbationFSA}]
 
 We first show that, at the input to the saturable absorber, 
\begin{equation}\label{eq:UptoSALtoH}
\mathcal U_\infty(t_-,0) \in \mathcal B(L^2(\mathbb R, \mathbb C^2), H^2(\mathbb R, \mathbb C^2)).
\end{equation}
This property holds since  the transfer operators for the fiber amplifier and  
the single-mode fibers with an additional spectral filtering term
satisfy
 \begin{equation}\label{eq:FABoundedLtoH}
 \mathcal U_\infty^{\rm FA}, \mathcal U_\infty^{\rm SMF} \in \mathcal B(L^2(\mathbb R, \mathbb C^2), H^2(\mathbb R, \mathbb C^2)),
 \end{equation}
 and  since \eqref{eq:DiscOpsBounded} holds for the DCF element and the output coupler. To establish \eqref{eq:FABoundedLtoH} for $ \mathcal U_\infty^{\rm FA}$, we use~\eqref{eq:FTFiberAmp} to obtain
 \begin{align}
\|\mathcal U_\infty^{\rm FA}\boldsymbol{u}\|_{H^2(\mathbb R,\mathbb C^2)}^2
&\leq C_1 \| (1+\omega^2) (\widehat{\mathcal U_\infty^{\rm FA}}\widehat{\boldsymbol{u}})(\omega) \|_{L^2(\mathbb R,\mathbb C^2)}^2
\\
&= C_1 \int_{\mathbb R} (1+\omega^2)^2 \exp\left( (1 -\omega^2/\Omega_g^2)G_{\rm FA} \right)\| \widehat {\boldsymbol{u}} (\omega) \|_{\mathbb C^2}^2\, d\omega
\\
& \leq C_2 \| \boldsymbol{u} \|_{L^2(\mathbb R,\mathbb C^2)}^2.
\end{align}
The proof for $\mathcal{U}_{\infty}^{\text{SMF}}$ is similar. 

From this point on, the proofs is analogous to the proof of \cite[Theorem 3.1]{zweck2021essential} that, for the complex Ginzburg-Landau equation,
 $\mathcal L(t)$ is a relatively compact perturbation of $\mathcal L_\infty$ 
There we showed that the operator $\mathbf M(t)\circ (\mathcal L_\infty - \lambda)^{-1}$
was compact using the exponential decay and weak regularity of $\boldsymbol\psi$
and the fact that $(\mathcal L_\infty - \lambda)^{-1}$ maps bounded sets in
$L^2(\mathbb R,\mathbb C^2)$ to bounded sets in $H^2(\mathbb R,\mathbb C^2)$ (endowed with the standard Sobolev norm). Here we show that 
$\mathcal K :=\mathcal B(t_+ , t_-)\circ \mathcal U_\infty(t_- , 0)$, is compact using the
exponential decay and weak regularity of $\boldsymbol\psi$ in the saturable absorber, together with~\eqref{eq:FABoundedLtoH}. Specifically, it suffices to show that
for any bounded family of functions, $\mathfrak H\subset L^2(\mathbb R,\mathbb C^2) $,
the subset $\mathfrak F = \mathcal K(\mathfrak H)\subset L^2(\mathbb R,\mathbb C^2) $
is totally bounded. To do so, we check the three conditions of the Kolmogorov-Riesz compactness Theorem~\ref{KRCompactnessThm}.

For the first condition, we observe that $\mathfrak F$ is bounded since the 
operator $\mathcal K$ and the subset $\mathfrak H$ are both bounded.
Let $\mathfrak G= \mathcal U_\infty(t_- , 0)(\mathfrak H)\subset H^2(\mathbb R, \mathbb C^2)$.
Since  $\mathfrak H$ is bounded, \eqref{eq:UptoSALtoH} implies that
\begin{equation}\label{eq:supG}
\sup\limits_{g\in\mathfrak G}\| g \|_{H^2(\mathbb R, \mathbb C^2)} < \infty.
\end{equation}

To verify the second condition, given $f\in\mathfrak F$, there is a  $g\in\mathfrak G$ so that
$f=\mathbf Bg$ where $\mathbf{B}$ is given by \eqref{eq:BoperatorSA}. Therefore,
\begin{equation}\label{eq:SACond2A}
\int\limits_{|x|>R} \| f(x)\|_{\mathbb C^2}^2\, dx 
\,\,\leq\,\, 
\int\limits_{|x|>R} \| \mathbf B(x)\|_{\mathbb C^{2\times 2}}^2 \|g(x)\|_{\mathbb C^2}^2 \, dx.
\end{equation}
Let $C_{\mathfrak G} = \sup\limits_{g\in \mathfrak G} \| g \|_{L^2(\mathbb R,\mathbb C^2)}$.
By Hypothesis~\ref{hyp:Conditions on the solution of SA}, 
  $\exists$ $R_1 >0$ so that
$\| \mathbf B(x)\|_{\mathbb C^{2\times 2}} < e^{-r|x|}/C_{\mathfrak G} $ for all $|x| > R_1$.
Therefore, if $R>R_1$, 
\begin{equation}\label{eq:SACond2B}
\int\limits_{|x|>R} \| f(x)\|_{\mathbb C^2}^2\, dx 
\,\,\leq\,\, \frac 1{C_{\mathfrak G} ^2}  e^{-2rR} \int_{|x|>R} \| g(x)\|^2_{\mathbb C^{2}}\, dx
\,\,\leq\,\, e^{-2rR} \,\,\leq\,\,\epsilon^2,
\end{equation}
provided  also that $R > |\log\epsilon|/r$.

For the third condition, we recall from Hypothesis~\ref{hyp:Conditions on the solution of SA} that $\mathbf B\in C^1(\mathbb R, \mathbb C^{2\times 2})$. 
Since $\mathfrak G\subset H^2(\mathbb R, \mathbb C^2)$, we know that
$\mathfrak F \subset H^1(\mathbb R, \mathbb C^2)$. By a result
in Evans~\cite[\S 5.8.2]{evans2010partial} on the difference quotient of a  $H^1$ function, we find that,
\begin{align}
\int_{\mathbb R} \| f(x+y) - f(x) \|^2_{\mathbb C^2} \, dx
 & \leq |y|^2 \, \| f_x \|^2_{L^2(\mathbb R,\mathbb C^2)} \nonumber \\
&   \leq |y|^2 \,\left[ \| \mathbf B_x g \|_{L^2(\mathbb R,\mathbb C^2)}
 + \| \mathbf B g_x \|_{L^2(\mathbb R,\mathbb C^2)} 
 \right]^2 \nonumber\\
& \leq C|y|^2 \operatorname{max} \{ \| \mathbf B\|^2_{L^\infty(\mathbb R, \mathbb C^2)},
\| \mathbf B_x\|^2_{L^\infty(\mathbb R, \mathbb C^2)}
 \} \, \| g \|^2_{H^2(\mathbb R, \mathbb C^2)},\label{eq:SACond3}
\end{align}
for some constant, $C$. 
Finally, by Hypothesis~\ref{hyp:Conditions on the solution of SA} and \eqref{eq:supG}, the right hand side of \eqref{eq:SACond3} can be made arbitrarily small, provided $y$ is close enough to zero.
\end{proof}


\appendix
\section{Completion of Proof of Lemma~\ref{lem:F is C1}}\label{AppendixC1} 
To complete the proof we establish the estimates
in \eqref{eq:Bound on F' new} and \eqref{eq:Gestimates}.
 By \eqref{eq:linearized operator of FA for evolution family A}, \eqref{eq:linearized operator of FA for evolution family B}, and \eqref{eq:linearized operator of FA for evolution family C}, 
    \begin{equation}
        \begin{aligned}
        & \norm{F(t+h) - F(t) -hF'(t)}_{L^2(\mathbb{R},\mathbb{C}^2)}\\ 
       & \leq  \norm{\{\textbf{B}(t+h) - \textbf{B}(t) - h\partial_t \textbf{B}(t)\} \partial_x^2 \boldsymbol{v}}_{L^2(\mathbb{R},\mathbb{C}^2)}\\
        &+ \norm{\{\widetilde{\textbf{M}}_1(t+h) - \widetilde{\textbf{M}}_1(t) - \partial_t \widetilde{\textbf{M}}_1(t)\} \boldsymbol{v}}_{L^2(\mathbb{R},\mathbb{C}^2)}\\
        &+ \lVert \boldsymbol{\phi}(t+h)\langle \boldsymbol{\psi}(t+h), \boldsymbol{v}\rangle - \boldsymbol{\phi}(t)\langle \boldsymbol{\psi}(t), \boldsymbol{v}\rangle - h\partial_t(\boldsymbol{\phi}(t)\langle \boldsymbol{\psi}(t), \boldsymbol{v}\rangle) \rVert_{L^2(\mathbb{R},\mathbb{C}^2)}.
        \end{aligned}
        \label{eq:bound on F'}
    \end{equation}
    
    To establish \eqref{eq:Bound on F' new} we estimate each of the term in \eqref{eq:bound on F'}. We estimate the first term in \eqref{eq:bound on F'} by    \begin{align*}
        & \norm{\{\textbf{B}(t+h) - \textbf{B}(t) - h\partial_t \textbf{B}(t)\} \partial_x^2 \boldsymbol{v}}_{L^2(\mathbb{R},\mathbb{C}^2)}^2\\
        \leq\; & \norm{\textbf{B}(t+h) - \textbf{B}(t) - h\partial_t \textbf{B}(t)}_{\mathbb{C}^{2\times 2}}^2 \int_{\mathbb{R}} \norm{\partial_x^2 \boldsymbol{v}(x)}_{\mathbb{C}^2}^2 dx\\
        \leq\; & \norm{\int_{t}^{t+h} \{ (\partial_t \textbf{B})(\tau) - (\partial_t \textbf{B})(t) \} d\tau }_{F}^2 \norm{\boldsymbol{v}}_{H^2(\mathbb{R},\mathbb{C}^2)}^2\\
        =\; & \sum_{i,j=1}^{2} \abs{\int_{t}^{t+h} \{ (\partial_t \textbf{B})_{ij}(\tau) - (\partial_t \textbf{B})_{ij}(t) \} d\tau }^2 \norm{\boldsymbol{v}}_{H^2(\mathbb{R},\mathbb{C}^2)}^2\\
        \leq\; & \sum_{i,j=1}^{2} h \int_{t}^{t+h} \abs{ (\partial_t \textbf{B})_{ij}(\tau) - (\partial_t \textbf{B})_{ij}(t) }^2 d\tau \quad \norm{\boldsymbol{v}}_{H^2(\mathbb{R},\mathbb{C}^2)}^2,
    \end{align*}
    where the last inequality follows from 
    \begin{equation}
        \abs{\int_a^b f(\tau) d\tau}^2 \leq (b-a) \int_a^b \abs{f(\tau)}^2 d\tau,
        \label{eq:Abs of integral}
    \end{equation}
    which is a special case of the Cauchy-Schwarz inequality. Consequently,
    \begin{equation}
    \begin{aligned}
        & \norm{\{\textbf{B}(t+h) - \textbf{B}(t) - h\partial_t \textbf{B}(t)\} \partial_x^2 \boldsymbol{v}}_{L^2(\mathbb{R},\mathbb{C}^2)}\\ 
        \leq\; & 2\sqrt{2}h \sup_{\tau \in (t,t+h)} \norm{(\partial_t \textbf{B})(\tau) - (\partial_t \textbf{B})(t)}_{\mathbb{C}^{2\times 2}} \quad \norm{\boldsymbol{v}}_{H^2(\mathbb{R},\mathbb{C}^2)}.
        \end{aligned}
        \label{eq:bound on B}
    \end{equation}
    Performing a similar calculation to estimate the second term in \eqref{eq:bound on F'}, we obtain
    \begin{align}
        &\norm{\{\widetilde{\textbf{M}}_1(t+h) - \widetilde{\textbf{M}}_1(t) - h\partial_t \widetilde{\textbf{M}}_1(t)\} \boldsymbol{v}}_{L^2(\mathbb{R},\mathbb{C}^2)}^2 
        \nonumber \\
            \leq\; & \int_{\mathbb{R}} \norm{\int_{t}^{t+h} \{ (\partial_t \widetilde{\textbf{M}}_1)(\tau,x) - (\partial_t \widetilde{\textbf{M}}_1)(t,x) \} d\tau }_{\mathbb{C}^{2\times 2}}^2 \norm{\boldsymbol{v}(x)}_{\mathbb{C}^2}^2 dx  \nonumber\\
        \leq\; & \sup_{x \in \mathbb{R}} \norm{\int_{t}^{t+h} \{ (\partial_t \widetilde{\textbf{M}}_1)(\tau,x) - (\partial_t \widetilde{\textbf{M}}_1)(t,x) \} d\tau }_{F}^2 \norm{\boldsymbol{v}}_{L^2(\mathbb{R},\mathbb{C}^2)}^2  \nonumber \\
        \leq\; & \sup_{x \in \mathbb{R}} \sum_{i,j=1}^{2} \abs{ \int_{t}^{t+h} \left\{  (\partial_t \widetilde{\textbf{M}}_1)_{ij}(\tau,x) - (\partial_t \widetilde{\textbf{M}}_1)_{ij}(t,x) \right\} d\tau }^2 \quad \norm{\boldsymbol{v}}_{H^2(\mathbb{R},\mathbb{C}^2)}^2  \nonumber\\
        \leq\; & \sup_{x \in \mathbb{R}} \sum_{i,j=1}^{2} h \int_{t}^{t+h} \abs{ (\partial_t \widetilde{\textbf{M}}_1)_{ij}(\tau,x) - (\partial_t \widetilde{\textbf{M}}_1)_{ij}(t,x) }^2 d\tau \quad \norm{\boldsymbol{v}}_{H^2(\mathbb{R},\mathbb{C}^2)}^2  \nonumber \\
        \leq\; & h^2 \sup_{(\tau,x) \in (t,t+h)\times \mathbb{R}} \norm{(\partial_t \widetilde{\textbf{M}}_1)(\tau,x) - (\partial_t \widetilde{\textbf{M}}_1)(t,x)}_F^2 \quad \norm{\boldsymbol{v}}_{H^2(\mathbb{R},\mathbb{C}^2)}^2  \nonumber\\
        \leq\; & 8h^2 \sup_{(\tau,x) \in (t,t+h)\times \mathbb{R}} \norm{(\partial_t \widetilde{\textbf{M}}_1)(\tau,x) - (\partial_t \widetilde{\textbf{M}}_1)(t,x)}_{\mathbb{C}^{2\times 2}}^2 \quad \norm{\boldsymbol{v}}_{H^2(\mathbb{R},\mathbb{C}^2)}^2.
           \label{eq:bound on M1}
    \end{align}
  
    Next, adding and subtracting $\boldsymbol{\phi}(t+h)\langle \boldsymbol{\psi}(t), \boldsymbol{v}\rangle$ in the third term of \eqref{eq:bound on F'}, we obtain
    \begin{equation}
    \begin{aligned}
        &\lVert \boldsymbol{\phi}(t+h)\langle \boldsymbol{\psi}(t+h), \boldsymbol{v}\rangle - \boldsymbol{\phi}(t)\langle \boldsymbol{\psi}(t), \boldsymbol{v}\rangle - h\partial_t(\boldsymbol{\phi}(t)\langle \boldsymbol{\psi}(t), \boldsymbol{v}\rangle) \rVert_{L^2(\mathbb{R},\mathbb{C}^2)}\\
       & \leq  \norm{\boldsymbol{\phi}(t+h) \langle \boldsymbol{\psi}(t+h) - \boldsymbol{\psi}(t), \boldsymbol{v} \rangle - \boldsymbol{\phi}(t)\langle h\partial_t \boldsymbol{\psi}(t), \boldsymbol{v} \rangle }_{L^2(\mathbb{R},\mathbb{C}^2)}\\
        &+ \norm{\{ \boldsymbol{\phi}(t+h) - \boldsymbol{\phi}(t) - h\partial_t\boldsymbol{\phi}(t) \} \langle \boldsymbol{\psi}(t), \boldsymbol{v}\rangle }_{L^2(\mathbb{R},\mathbb{C}^2)}.
    \end{aligned}
    \label{eq:bound on phi 1}
    \end{equation}
    Now, for any $\boldsymbol{u}$, $\boldsymbol{v}$, $\boldsymbol{w} \in L^2(\mathbb{R}, \mathbb{C}^2)$,
    \begin{equation}
        \norm{\boldsymbol{u} \langle \boldsymbol{v}, \boldsymbol{w} \rangle}_{L^2(\mathbb{R}, \mathbb{C}^2)}  \leq \norm{\boldsymbol{u}}_{L^2(\mathbb{R}, \mathbb{C}^2)} \; \norm{\boldsymbol{v}}_{L^2(\mathbb{R}, \mathbb{C}^2)} \; \norm{\boldsymbol{w}}_{L^2(\mathbb{R}, \mathbb{C}^2)}.
        \label{eq:Inequality for uvw}
    \end{equation}
    To estimate the first term in \eqref{eq:bound on phi 1}, we add and subtract $\boldsymbol{\phi}(t+h) \langle h \partial_t \boldsymbol{\psi}(t), \boldsymbol{v} \rangle$ and use \eqref{eq:Inequality for uvw} to obtain
    \begin{equation}
    \begin{aligned}
        &\norm{\boldsymbol{\phi}(t+h) \langle \boldsymbol{\psi}(t+h) - \boldsymbol{\psi}(t), \boldsymbol{v} \rangle - \boldsymbol{\phi}(t)\langle h\partial_t \boldsymbol{\psi}(t), \boldsymbol{v} \rangle }_{L^2(\mathbb{R},\mathbb{C}^2)}\\
        \leq\; & \bigg\{ \norm{\boldsymbol{\phi}(t+h)}_{L^2(\mathbb{R},\mathbb{C}^2)} \norm{\boldsymbol{\psi}(t+h) - \boldsymbol{\psi}(t) - h \partial_t \boldsymbol{\psi}(t)}_{L^2(\mathbb{R},\mathbb{C}^2)}\\
        &+ \norm{\boldsymbol{\phi}(t+h) - \boldsymbol{\phi}(t)}_{L^2(\mathbb{R},\mathbb{C}^2)} \norm{h\partial_t \boldsymbol{\psi}(t)}_{L^2(\mathbb{R},\mathbb{C}^2)} \bigg\} \norm{\boldsymbol{v}}_{L^2(\mathbb{R},\mathbb{C}^2)}\\
        =\; & \bigg\{\norm{\boldsymbol{\phi}(t+h)}_{L^2(\mathbb{R},\mathbb{C}^2)} \norm{\int_{t}^{t+h} \{ (\partial_t \boldsymbol{\psi})(\tau) - (\partial_t \boldsymbol{\psi})(t) \} d\tau}_{L^2(\mathbb{R},\mathbb{C}^2)}\\
        &+ h \norm{\int_{t}^{t+h} (\partial_t \phi)(\tau) d\tau}_{L^2(\mathbb{R},\mathbb{C}^2)} \norm{\partial_t \boldsymbol{\psi}(t)}_{L^2(\mathbb{R},\mathbb{C}^2)} \bigg\} \norm{\boldsymbol{v}}_{H^2(\mathbb{R},\mathbb{C}^2)}\\
        \leq\; & \bigg\{ h \norm{\boldsymbol{\phi}(t+h)}_{L^2(\mathbb{R},\mathbb{C}^2)} \sup_{\tau \in (t,t+h)} \norm{(\partial_t \boldsymbol{\psi})(\tau) - (\partial_t \boldsymbol{\psi})(t)}_{L^2(\mathbb{R},\mathbb{C}^2)}\\
        &+ h^2 \sup_{\tau \in (t,t+h)} \norm{(\partial_t \boldsymbol{\phi})(\tau)}_{L^2(\mathbb{R},\mathbb{C}^2)} \quad \norm{\partial_t \boldsymbol{\psi}(t)}_{L^2(\mathbb{R},\mathbb{C}^2)} \bigg\} \norm{\boldsymbol{v}}_{H^2(\mathbb{R},\mathbb{C}^2)}.
    \end{aligned}
    \label{eq:bound on phi 2}
    \end{equation}
    Now,
    \begin{equation}
        \norm{\boldsymbol{\phi} (t)}_{L^2(\mathbb{R},\mathbb{C}^2)} 
    \leq \frac{g_0C}{E_{\text{sat}}} \norm{\boldsymbol{\psi}(t)}_{H^2(\mathbb{R},\mathbb{C}^2)},
        \label{eq:bound on phi 3}
    \end{equation}
    and
    \begin{equation}
    \begin{aligned}
        \norm{\partial_t \boldsymbol{\phi} (t)}_{L^2(\mathbb{R},\mathbb{C}^2)} 
                \leq & \frac{1}{g_0 E_{\text{sat}}} \bigg\{ \abs{\frac{-2}{E_{\text{sat}}}g^3(t)E^{\prime}(t)} \; \norm{\left( \boldsymbol{\psi}(t) + \frac{\partial_x^2 \boldsymbol{\psi}(t)}{\Omega_g^2}\right)}_{L^2(\mathbb{R},\mathbb{C}^2)}\\
        &+ g^2(t) \norm{\partial_t \left( \boldsymbol{\psi}(t) + \frac{\partial_x^2 \boldsymbol{\psi}(t)}{\Omega_g^2}\right)}_{L^2(\mathbb{R},\mathbb{C}^2)} \bigg\}\\
        \leq & \frac{2g_0^2 C}{E_{\text{sat}}^2}|E^{\prime}(t)| \; \norm{\boldsymbol{\psi}(t)}_{H^2(\mathbb{R},\mathbb{C}^2)} + \frac{2g_0 C}{E_{\text{sat}}} \norm{\partial_t \boldsymbol{\psi}(t)}_{H^2(\mathbb{R},\mathbb{C}^2)}.
    \end{aligned}
    \label{eq:bound on phi 4}
    \end{equation}
    Substituting \eqref{eq:bound on phi 3} and \eqref{eq:bound on phi 4} in \eqref{eq:bound on phi 2}, we obtain
    \begin{equation}
        \begin{aligned}
        &\norm{\boldsymbol{\phi}(t+h) \langle \boldsymbol{\psi}(t+h) - \boldsymbol{\psi}(t), \boldsymbol{v} \rangle - \boldsymbol{\phi}(t)\langle h\partial_t \boldsymbol{\psi}(t), \boldsymbol{v} \rangle }_{L^2(\mathbb{R},\mathbb{C}^2)}\\
        \leq\; & \bigg\{ \frac{g_0 hC}{E_{\text{sat}}} \norm{\boldsymbol{\psi}(t+h)}_{H^2(\mathbb{R},\mathbb{C}^2)} \sup_{\tau \in (t,t+h)} \norm{(\partial_t \boldsymbol{\psi})(\tau) - (\partial_t \boldsymbol{\psi})(t)}_{H^2(\mathbb{R},\mathbb{C}^2)}\\
        &+ \frac{2g_0^2 h^2 C}{E_{\text{sat}}^2} \sup_{\tau \in (t,t+h)} \abs{E^{\prime}(\tau)} \norm{\boldsymbol{\psi}(\tau)}_{H^2(\mathbb{R},\mathbb{C}^2)} \norm{\partial_t \boldsymbol{\psi}(t)}_{H^2(\mathbb{R},\mathbb{C}^2)}\\
        &+ \frac{2g_0 h^2 C}{E_{\text{sat}}} \sup_{\tau \in (t,t+h)} \norm{\partial_t \boldsymbol{\psi}(\tau)}_{H^2(\mathbb{R},\mathbb{C}^2)} \norm{\partial_t \boldsymbol{\psi}(t)}_{H^2(\mathbb{R},\mathbb{C}^2)}
        \bigg\} \norm{\boldsymbol{v}}_{H^2(\mathbb{R},\mathbb{C}^2)}.\\
        \end{aligned}
        \label{eq:bound on phi 6}
    \end{equation}
    Next to estimate the second term in \eqref{eq:bound on phi 1} we use \eqref{eq:Inequality for uvw} to obtain
    \begin{equation}
        \begin{aligned}
        &\norm{\{ \boldsymbol{\phi}(t+h) - \boldsymbol{\phi}(t) - h\partial_t\boldsymbol{\phi}(t) \} \langle \boldsymbol{\psi}(t), \boldsymbol{v}\rangle }_{L^2(\mathbb{R},\mathbb{C}^2)}\\
        \leq & \norm{\boldsymbol{\phi}(t+h) - \boldsymbol{\phi}(t) - h\partial_t\boldsymbol{\phi}(t)}_{L^2(\mathbb{R},\mathbb{C}^2)} \norm{\boldsymbol{\psi}(t)}_{L^2(\mathbb{R},\mathbb{C}^2)} \norm{\boldsymbol{v}}_{L^2(\mathbb{R},\mathbb{C}^2)},
        \end{aligned}
        \label{eq:bound on phi 7}
    \end{equation}
    and observe that, by \eqref{eq:Abs of integral} and Fubini's theorem, 
    \begin{equation}
    \begin{aligned}
        \norm{\boldsymbol{\phi}(t+h) - \boldsymbol{\phi}(t) - h\partial_t \boldsymbol{\phi}(t)}_{L^2(\mathbb{R},\mathbb{C}^2)}^2 &= \norm{\int_t^{t+h} \left( (\partial_t \boldsymbol{\phi})(\tau) - (\partial_t \boldsymbol{\phi})(t) \right) d\tau}_{L^2(\mathbb{R},\mathbb{C}^2)}^2\\
        \leq& h \int_t^{t+h} \norm{(\partial_t \boldsymbol{\phi})(\tau) - (\partial_t \boldsymbol{\phi})(t)}_{L^2(\mathbb{R},\mathbb{C}^2)}^2 d\tau\\
        \leq& h^2 \sup_{\tau \in (t,t+h)} \norm{(\partial_t \boldsymbol{\phi})(\tau) - (\partial_t \boldsymbol{\phi})(t)}_{L^2(\mathbb{R},\mathbb{C}^2)}^2.
        \label{eq:bound on phi 8}
    \end{aligned}
    \end{equation}
    Finally, substituting \eqref{eq:bound on B}, \eqref{eq:bound on M1}, \eqref{eq:bound on phi 7}, and \eqref{eq:bound on phi 8} in \eqref{eq:bound on F'}, we obtain \eqref{eq:Bound on F' new}.

\section*{Acknowledgments}
We thank Yuri Latushkin and Curtis Menyuk for generously sharing their
expertise with us.
\StartRed
We thank the anonymous reviewers for their suggestions to improve the paper. 
\EndRed

\bibliographystyle{siamplain}
\bibliography{VSTheory}

\end{document}